\newcommand{\subtitle}[1]{%
  \posttitle{%
    \par\end{center}
    \begin{center}\large#1\end{center}
    \vskip0.5em}%
}
\newtheorem{theorem}{Theorem}[section]
\newtheorem{assumption}{Assumption}[section]
\newtheorem{lemma}[theorem]{Lemma}
\newtheorem{prop}[theorem]{Proposition}
\newtheorem{remark}[theorem]{Remark}
\newtheorem{example}[theorem]{Example}
\DeclareMathOperator*{\Var}{{\rm Var}}
\DeclareMathOperator*{\Cov}{Cov}
\renewcommand{\hat}{\widehat}
\renewcommand{\bar}{\overline}
\def \aipw {{\textrm{AIPW}}}
\def\@#1\@{\begin{align}#1\end{align}}
\def\$#1\${\begin{align*}#1\end{align*}}
\begin{document}
\pagestyle{plain}

\newcommand{\blind}{0}

\newcommand{\tit}{\Large Cross-Balancing for Data-Informed Design and \\ Efficient Analysis of Observational Studies}

\if0\blind

{\title{\Large \tit
\thanks{We thank Ambarish Chattopadhyay and Yige Li for helpful comments and conversations. This work was partly supported by two awards from the Patient Centered Outcomes Research Initiative
(PCORI; ME-2022C1-25648, ME-2024C2-40180).}\vspace*{.3in}}
\author{\normalsize Ying Jin\thanks{Department of Statistics and Data Science, University of Pennsylvania, Academic Research Building, Room 441, Philadelphia, PA 19104; email: \url{yjinstat@wharton.upenn.edu}.} \and \normalsize Jos\'{e} R. Zubizarreta\thanks{Departments of Health Care Policy, Biostatistics, and Statistics, Harvard University, 180 Longwood Avenue, Office 215-A, Boston, MA 02115; email: \url{zubizarreta@hcp.med.harvard.edu}.}}
\date{}

\maketitle
\date{}
}\fi

\if1\blind
\title{ \tit}
\date{}
\maketitle
\fi

\vspace{-.5cm}
\begin{abstract}
Causal inference starts with a simple idea: compare groups that differ by treatment, not
much else. Traditionally, similar groups are constructed using only observed covariates; however, it remains a long-standing challenge to incorporate available outcome data into the study design while preserving valid inference. In this paper, we study the general problem of covariate adjustment, effect estimation, and statistical inference when balancing features are constructed or selected with the aid of outcome information from the data. We propose cross-balancing, a method that uses sample splitting to separate the error in feature construction from the error in weight estimation. Our framework addresses two cases: one where the features are learned functions and one where they are selected from a potentially high-dimensional dictionary. 
In both cases, we establish mild and general conditions under which cross-balancing produces consistent, asymptotically normal, and efficient estimators. In the learned-function case, cross-balancing achieves finite-sample bias reduction relative to plug-in-type estimators, and is multiply robust when the learned features converge at slow rates. 
In the variable-selection case, cross-balancing only requires a product condition on how well the selected variables approximate true functions. 
We illustrate cross-balancing in extensive simulations and an observational study, showing that careful use of outcome information can substantially improve both estimation and inference while maintaining interpretability.

\end{abstract}

\begin{center}
\noindent Keywords: 
{Causal Inference; Observational Studies; Weighting Methods; Variable Selection}
\end{center}
\clearpage
\doublespacing


\section{Introduction}
\label{section_intro}
\vspace{-.25cm}

\subsection{Using outcome information for weighting}
\label{section_intro1}
\vspace{-.25cm}

Causal inference begins with a simple idea: compare groups that differ by treatment, but little else. 
In observational studies, this requires adjusting for differences between treatment groups, a process known as balancing covariates. 
A classic approach uses the propensity score---the probability of receiving treatment given observed covariates \citep{rosenbaum1983central}---to achieve balance. 
However, because this approach ensures balance only asymptotically, more recent methods also focus on balancing the covariates directly.
Examples include the approaches by \citet{hainmueller2012entropy}, \citet{zubizarreta2015stable}, and \citet{chan2016globally}, among others; see \citet{ben2021balancing} for a review.

These balancing weighting methods have several appealing features. 
First, they are transparent and easy to interpret, as they aim to compare ``apples to apples.'' 
Second, by mimicking certain aspects of randomized experiments, their construction can be seen as part of the design stage of an observational study, as they use only pre-treatment information, preserving the objectivity of the analysis and permitting valid inference \citep{rubin2007design}.
Furthermore, the resulting estimators often admit a linear representation, enabling straightforward diagnostics where each unit receives a distinct weight \citep{chattopadhyay2023implied}.

Despite its appeal, a fundamental question concerns which covariates one should select for balance.  
This choice is central to the estimator's performance and involves a careful trade-off between model expressiveness and interpretability.  
This choice is governed by the dual role of the balancing weights, as they eliminate finite-sample bias in the part of the outcome model attributable to the selected features, and implicitly define a propensity score model based on them \citep{imai2014covariate, zhao2017entropy, wang2020minimal}.  
Although this decision process often relies on domain expertise, the complexity of modern datasets makes it  desirable to complement subjective knowledge with data-driven approaches to determine which covariate adjustments to make for accurate causal estimates.

This raises, in turn, a broader challenge: how to leverage not only observed covariates but also outcome information in a manner that can be considered part of the study design and yields valid inferences.  
Leveraging outcome information holds much unrealized potential, as imbalances in the prognostic score---the expected outcome given the covariates \citep{hansen2008prognostic}---can translate directly into treatment effect estimation bias \citep{stuart2013prognostic}.  
Indeed, methods that incorporate outcome information, such as outcome kernels \citep{zhao2019covariate}, the outcome-adaptive lasso \citep{shortreed2017outcome}, and small planning samples for the design of observational studies \citep{heller2009split}, have proven effective. 

Two natural ideas emerge for constructing balancing features using outcome information:\vspace{-0.5em}
\begin{itemize}\setlength\itemsep{-0.5em}
\item[(i)] \emph{Learn features as functions of raw covariates}: following the outcome modeling principle, learning the true regression function (the prognostic score) using flexible prediction methods can help address finite-sample bias in the outcomes due to imbalance. The resulting learned function can then be used as a balancing feature.
\item[(ii)] \emph{Select features from a candidate set:} starting with raw covariates and their transformations (e.g., quadratic or kernel functions), one can use variable selection methods to find features most relevant to the outcome. Using the selected variables as the balancing features can help reduce complexity and improve interpretability.
\end{itemize}
\vspace{-0.5em}
However, these approaches pose several design and analysis challenges. It is often unclear which properties of the constructed features ensure accurate estimation and inference. 
Using the same data for both feature construction and weight estimation also introduces ``double-dipping'' bias,  compromising the validity of the inferences. 
Moreover, because these methods use outcome information at the design stage, they may further raise concerns about the objectivity of the study.
The core challenge is to account for learning and selection errors to ensure accurate estimation and valid inference, while maintaining the interpretability and transparency of the balancing approach.
Technically, the nature of statistical errors differs: in (i), the uncertainty comes from the randomness in learned functions, while in (ii), it arises from selecting a random subset of features, which is inherently discrete.

\subsection{Overview of cross-balancing}
\label{section_intro2}
\vspace{-.2cm}

In this paper, we study the general problem of covariate adjustment, effect estimation, and statistical inference when balancing features are constructed in a data-driven manner. 
To address the aforementioned challenges, we propose cross-balancing, a method based on sample splitting that disentangles the estimation error in feature construction and selection from that in weight estimation, and achieves asymptotic optimality under relatively weak conditions.
This method builds on the intuitive idea of randomly splitting a dataset into two parts~\citep{chernozhukov2018double}, so that knowledge gained from one part can be applied to the other, and their results are subsequently integrated.
In this way, we leverage the outcomes as an additional source of information that is typically not utilized in balancing.

\begin{figure}[h!]
    \centering
    \includegraphics[width=0.8\linewidth]{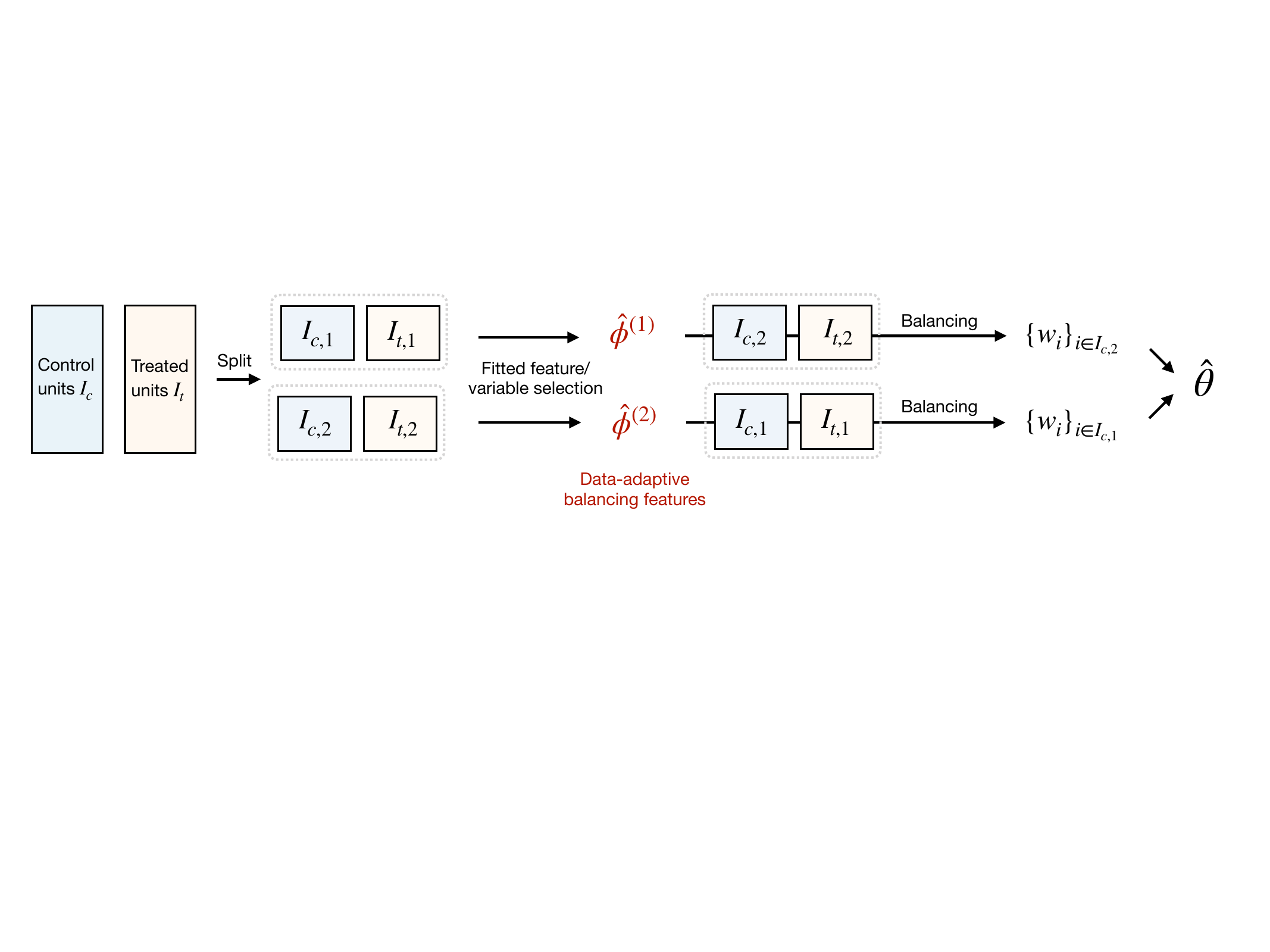}
    \caption{Diagram of the cross-balancing process.\vspace{-1em}}
    \label{fig:overview}
\end{figure}

As depicted in Figure \ref{fig:overview}, cross-balancing proceeds in two stages: in the first stage, features are built using one fold of the data; and in the second stage, a balancing algorithm is applied to the other fold to obtain the weights. 
The roles of the two folds are then interchanged to obtain the weights of the first fold. 
This method is flexible, as the feature learning/selecting stage can leverage any black-box prediction model to learn the prognostic score, or can employ any variable selection method to find a subset of relevant features to balance. 
By not restricting to any specific algorithm in the learning/selecting stage, we characterize the general conditions on the quality of data-driven features, allowing slow convergence rates, under which cross-balancing yields consistent estimation, valid inference, and asymptotic efficiency. 
Along the way, we also show the finite-sample bias reduction and multiple robustness due to the unique calibration property of balancing weights. 
Our analysis provides practical guidance in building data-driven balancing features in both cases: while a consistent prognostic score suffices for a consistent estimator, including treatment-relevant features protects against misspecification and yields valid inference under mild conditions. 

As discussed, a useful principle in observational studies is to approximate the randomized experiment that would ideally have been conducted under similar conditions. In cross-balancing, this principle is maintained, as in concept there are two hypothetical experiments that inform each other by using the outcome information from the other experiment. In this sense, one fold serves as the ``pilot study'' for the other~\citep{heller2009split}.
Importantly, sample splitting directly addresses the ``double-dipping'' bias that arises when the same data is used both for feature construction  and weight estimation, thereby ensuring the validity of inferences. 
In doing so, cross-balancing preserves the interpretability and transparency of the balancing approach, while leveraging the power of modern machine learning tools.

\subsection{Related literature and outline}
\label{section_intro3}
\vspace{-.2cm}

There has been much work on what variables should be adjusted when estimating treatment effects. At the center of these proposals, either implicitly or explicitly, is the propensity score. This is because the propensity score provides the most concise summary of the data that balances all covariates; and conversely, balancing all covariates requires estimating this single propensity score. In practice, researchers typically aim to balance as many variables as possible~\citep{rosenbaum1995overt}. However, when only a subset of variables is relevant to the treatment, including more than the necessary variables leads to ``overadjustment'' \citep{rotnitzky2010note}. The underlying principle is that when the true model is smaller (involving a smaller set of variables), it is more efficient to only adjust for the relevant subset.

This reasoning leads us to focus on variables that play a key role in the outcome model. A covariate is worth adjusting for only if it is related to the outcome, as otherwise the imbalance in this covariate would not incur confounding bias in the outcomes. Following this logic, \cite{schneeweiss2009high} use the covariate-outcome relationship to inform variable selection in propensity score methods. Through simulations, \cite{brookhart2006variable} suggest that including variables related only to the outcome may improve efficiency. \cite{shortreed2017outcome} propose outcome adaptive lasso for selecting covariates for inclusion in propensity score models. Similarly, balance in the prognostic score can be used as an effective metric to assess the quality of propensity score models \citep{stuart2013prognostic}.

Moving beyond propensity score only methods, other approaches employ outcome regression to augment the propensity score model, resulting in the semiparametrically efficient AIPW estimator \citep{robins1994estimation}. Its double robustness property makes it well suited to use machine learning for both propensity and outcome models, each of which individually can converge at slow rates, yet their combination enables fast rates (e.g., \citealt{chernozhukov2018double}, \citealt{bruns2025augmented}, \citealt{yang2025outcome}). Given this general property, it is natural to include all variables to fit these models, but there is room for improvement by focusing on relevant variables. Along this line, \cite{belloni2014inference} and \cite{moosavi2023costs} show that the ``double selection'' strategy, which selects variables related to either outcome or treatment, yields favorable performance in high-dimensional settings. 
Cross-balancing echoes this line of research, and it turns out that selecting features relevant to treatment and outcome is important for asymptotic optimality under relatively weak conditions.

The remainder of the paper is organized as follows. 
Section~\ref{section_setup} presents the setup and assumptions, as well as an overview of balancing weights. 
Section~\ref{sec:fitted} introduces cross-balancing with learned features, describing the procedure, a convergence analysis, and inferential guarantees. 
Section~\ref{section_cross2} covers cross-balancing with selected features following a similar structure. 
Section~\ref{sec:simu} presents an extensive simulation study. 
Section~\ref{section_case} concludes with a case study.
\vspace{-2.5em}


\section{Setup}
\label{section_setup}
\vspace{-.4cm}

\subsection{Estimand and assumptions}
\label{section_setup1}
\vspace{-.2cm}

We begin with the standard observational study setting, where each subject is described by observed covariates $X_i \in \RR^p$, a treatment indicator $T_i\in \{0,1\}$, and an observed outcome $Y_i \in \RR$, such that the triplets $\{(X_i, T_i, Y_i)\}_{i=1}^n$ are independent and identically distributed.
We define causal effects in terms of potential outcomes, $Y_i(1)$ if treated and $Y_i(0)$ if control \citep{neyman1923application, rubin1974estimating}, and assume the Stable Unit Treatment Value Assumption (SUTVA), so that $Y_i = Y_i(T_i) = T_i Y_i(1) + (1-T_i) Y_i(0)$ \citep{rubin1980randomization}.
We also require the following conditions on the treatment assignment \citep{rosenbaum1983central}.
\begin{assumption}\label{assump:unconfound_overlap}
    Treatment assignment is unconfounded given observed covariates: $T \indep $ $\{ Y(1),Y(0) \} \given X$. Each individual has a positive probability of receiving either treatment: $0 < e(x) < 1$ for all $x$, where $e(x) = \mathbb{P}(T=1|X=x)$ is the propensity score.
\end{assumption}

Our main objective is to estimate the counterfactual mean $\theta_0 = \mathbb{E}[Y(0)|T=1]$. 
This is the building block for the average treatment effect on the treated (ATT), as
$\mathbb{E}[Y(1)-Y(0)|T=1] = \mathbb{E}[Y|T=1] - \theta_0$.
Because we rely on information about $Y(0)$ from control subjects to estimate the counterfactuals for the treated, the primary challenge under Assumption~\ref{assump:unconfound_overlap} is to properly adjust for imbalances in observed covariates between the two groups.

Although our focus is causal inference, the same issues arise when transferring results from one population to another. 
Suppose we observe $(X_i, Y_i)$ from a source distribution $P$, and covariates $X_j$ from a target distribution $Q$, and our objective is to estimate the mean outcome in the target population, $\mathbb{E}_Q[Y]$. 
We consider a classical transfer problem under covariate shift, assuming the distributions differ only in $X$. 
Denote by $w(x)$ the (unknown) density ratio relating the two covariate distributions, so
$\frac{dQ}{dP}(x, y) = w(x)$.
The methods we present for adjustment and inference in causal studies are equally suited to this setting.


Define the prognostic score $m_0(x) = \mathbb{E}[Y(0) \mid X = x]$, and the covariate shift between the treated and control groups through the weight function
$
w^*(x) = \frac{e(x)}{1-e(x)} \cdot \frac{\mathbb{P}(T=0)}{\mathbb{P}(T=1)}.
$
The estimand of interest can be represented in two equivalent forms
$ 
\theta_0 = \mathbb{E}[w^*(X)Y \mid T=0] = \mathbb{E}[m_0(X)\mid T=1].
$
This forms the basis of several seminal approaches to estimating $\theta_0$, including inverse propensity weighting (IPW), which estimates $e(x)$ and reweights control units, and outcome regression which models $m_0(x)$ and averages the predictions over treated units; combining them leads to the doubly robust augmented IPW (AIPW) estimator.

\subsection{Minimal weights}
\label{section_setup3}
\vspace{-.2cm}

Within the weighting framework, one approach to construct weights is to directly estimate the inverse propensity score by solving an optimization problem of the form  
\vspace{-1em}
{\setlength{\belowdisplayskip}{6pt}\@
\label{eq:minimal_weights}
\arg\min_{w} \left\{ \mathcal{D}(w) : \, w \in \mathcal{B} \cap \mathcal{A} \right\},
\@}
where $\mathcal{D}$ is a measure of dispersion of the weights, $\mathcal{B}$ are general balance conditions, and $\mathcal{A}$ are weight adjustment considerations. 
This approach allows for the construction of various types of weights, such as those in \cite{hainmueller2012entropy}, \cite{zubizarreta2015stable}, and \cite{chan2016globally}.
See \cite{wang2020minimal} for an analysis of the asymptotic properties and practical considerations of such weights (\ref{eq:minimal_weights}).
In what follows, we restrict $\mathcal{D}$ to the variance of the weights in order to control the variance of a linear estimator, define $\mathcal{B}$ via finite-sample balancing conditions on a vector of transformations of the raw features, and set $\mathcal{A}$ to ensure that the estimator is translation invariant.  
That is, we solve
\vspace{-0.5em}
{\small\setlength{\belowdisplayskip}{8pt}\@
\label{eq:minimal_weights2}
\hat{w} = \arg\min_{w} \left\{ \sum_{i \in \mathcal{I}_c} w_i^2 : \, \bigg| \frac{1}{|\mathcal{I}_c|} \sum_{i\in \mathcal{I}_c} w_i \hat\phi_k(X_i) - \frac{1}{|\mathcal{I}_t|} \sum_{i\in \mathcal{I}_t} \hat\phi_k(X_i) \bigg|\leq \delta_k, k \in \mathcal{K}; \frac{1}{|\mathcal{I}_c|}\sum_{i\in \mathcal{I}_c} w_i = 1 \right\},
\@}
where $\hat\phi_k(X_i)$ are the \emph{balance features}, which are transformations of the covariates (we emphasize that they can be data-dependent to fit in our framework), $\delta_k$ are the tolerances, and $\cI_t,\cI_c\subseteq \{1,\dots,n\}$ are the index sets of  treated and control units, respectively. 

The key challenge we address is  choosing the functions $\hat\phi_k$ in a data-dependent fashion while enabling reliable estimation and inference (though the exact form of our balancing optimization program differs from~\eqref{eq:minimal_weights2} by procedure design). 
We present methods and theory of cross-balancing when $\hat\phi_k$'s are learned transformations of the raw covariates in Section~\ref{sec:fitted} and when $\hat\phi_k$'s are a selected subset of (a dictionary of) features in Section~\ref{section_cross2}.

\noindent\textbf{Notation.}
We close the section with the notation used throughout the paper.
We use the standard symbols $o(\cdot)$, $o_P(\cdot)$, and $O_P(\cdot)$, where $X_n = o_P(1)$ means $X_n \xrightarrow{P} 0$ and $X_n = O_P(1)$ means $\{X_n\}$ is stochastically bounded.
 We also use $\Omega(\cdot)$ to denote the inverse of stochastic boundedness where $X_n = \Omega(a_n)$ if $X_n / a_n$ is bounded away from zero in probability. For a fixed function $f(\cdot)$, we denote its $L_2$-norm under a distribution $P$ by $\|f\|_{L_2(P)}:=\EE_{X\sim P}[f(X)^2]$. 
For a vector $\eta\in \RR^p$, we denote $\|\eta\|:=(\sum_{j=1}^p \eta_j^2)^{1/2}$ as its Euclidean norm, $\|\eta\|_1:=\sum_{j=1}^p |\eta_j|$ as its $L_1$-norm, and $\|\eta\|_\infty=\max_{j=1}^p |\eta_j|$ as its $L_\infty$-norm.


\section{Cross-balancing with learned features}
\label{sec:fitted}
\vspace{-.2cm}

  
As discussed, there has long been interest in leveraging outcome information such as the estimated prognostic score, since it provides the most succinct summary of the relationship between outcome and covariates \citep{rubin2000combining}. 
According to \cite{hansen2008prognostic}, however, ``same-sample estimation of the prognostic score tends to make inference less reliable'' due to overfitting. 
This challenge persists with balancing weights. 
In this section, we introduce the cross-balancing method when estimated prognostic scores and other learned transformations of the raw covariates are used as balancing features, analyze its consistency and inferential properties, and provide recommendations on the choice of such features. 

\subsection{Procedure}
\vspace{-.2cm}

The procedure begins by randomly dividing the control and treated indices into two equal-sized folds, $\cI_c = \cI_{c,1} \cup \cI_{c,2}$ and $\cI_t = \cI_{t,1}\cup \cI_{t,2}$, respectively. 
Using data from the first fold, $\cI_{c,1} \cup \cI_{t,1}$, we estimate a set of features $\hat\phi^{(2)}(\cdot)$  which will be applied to  the second fold $\cI_{c,2} \cup \cI_{t,2}$. 
The process is repeated in the other direction: we estimate $\hat\phi^{(1)}(\cdot)$ on data from $\cI_{c,2} \cup \cI_{t,2}$ and apply it to $\cI_{c,1} \cup \cI_{t,1}$. 
For notational convenience, we let $\hat{\phi}(X_i) = \hat{\phi}^{(1)}(X_i)$ for $i \in \cI_{c,1} \cup \cI_{t,1}$ and $\hat{\phi}(X_i) = \hat{\phi}^{(2)}(X_i)$ for $i \in \cI_{c,2} \cup \cI_{t,2}$. 

The weights $\hat{w}_i$ for $i \in \cI_{c,1}$ are obtained by solving the  problem analogous to~\eqref{eq:minimal_weights2}: 
\begin{equation}
\hspace{-.25cm} \label{eq:opt_w2} \small
\argmin_{w} \Bigg\{ \sum_{i \in \mathcal{I}_{c,1}} w_i^2 : \, \bigg| \frac{1}{|\mathcal{I}_{c,1}|} \sum_{i\in \mathcal{I}_{c,1}} w_i \hat\phi_k(X_i) - \frac{1}{|\mathcal{I}_{t}|} \sum_{i\in \mathcal{I}_t} \hat\phi_k(X_i) \bigg|\leq \delta_k, k \in \mathcal{K}; \frac{1}{|\mathcal{I}_{c,1}|}\sum_{i\in \mathcal{I}_{c,1}} w_i = 1 \Bigg\}.
\end{equation}
Similarly, we compute $\hat{w}_i$ for $i \in \cI_{c,2}$ by solving the same optimization problem on the corresponding folds.
The estimator for the mean potential outcome is then 
\vspace{-1em}
{\setlength{\belowdisplayskip}{8pt}%
\$
\hat\theta_0 = \frac{1}{|\cI_c|} \sum_{i\in \cI_c} \hat{w}_i Y_i. 
\$}
Throughout this section, we assume that each $\hat\phi$ maps to $\RR^d$ for some fixed $d \in \NN^+$, and consider the asymptotic regime where $n \rightarrow \infty$ with $d$ held fixed. 
When appropriate, the program in~\eqref{eq:opt_w2} can be modified to enforce nonnegativity on the weights, which can improve the interpretability and robustness of our method. 

\subsection{Convergence analysis}
\vspace{-.2cm}

We first establish the convergence of the weights and the resulting estimator. 
 \vspace{-0.5em}
\begin{theorem}[Convergence and consistency]
\label{thm:convergence}
Suppose $\delta_n = o(1)$ and $\|\hat\phi^{(k)} - \phi^*\|_{L_2} = o_P(1)$ for some fixed function $\phi^* \colon \RR^p \to \RR^d$, where $\mathbb{E}[\phi^*(X)\phi^*(X)^\top \mid T = 0]$ is full rank, for $k = 1, 2$. 
Then the cross-balancing estimator satisfies $\hat\theta_0 = \bar\theta_0 + o_P(1)$, where $\bar\theta_0 = \mathbb{E}[\bar{w}(X)Y(0) \mid T = 0]$ and $\bar{w}(x) = \phi^*(x)^\top \bar\lambda$ for some fixed $\bar\lambda\in \RR^d$.
Moreover, $\bar\theta_0 = \theta_0$, and thus $\hat\theta_0$ is consistent, provided that at least one of the following holds: 
\vspace{-0.5em}
\begin{enumerate}[label=(\roman*)]
\setlength\itemsep{-0.5em}
    \item (\textit{Treatment model.}) There exists a vector $\lambda^* \in \RR^p$ such that $\frac{e(x)}{1-e(x)} \propto \phi^*(x)^\top \lambda^*$.
    \item (\textit{Outcome model.}) There exists a vector $\beta^* \in \RR^p$ such that $m_0(x)  = \phi^*(x)^\top \beta^*$.
\end{enumerate}
\end{theorem}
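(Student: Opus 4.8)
The plan is to reduce the estimator to a finite-dimensional linear-algebraic object via convex duality, identify the limit of the resulting dual vector using the full-rank hypothesis, and then pass from empirical to population moments using the sample-splitting structure.

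First I would obtain a dual representation of the weights. Fix a fold, say $\cI_{c,1}$. On the (asymptotically overwhelming) event that program~\eqref{eq:opt_w2} is feasible, it is a convex quadratic program with affine constraints and coercive objective, so a minimizer exists and the KKT conditions hold there without any constraint qualification; stationarity in $w$ forces $\hat w_i = \hat\phi(X_i)^\top\hat\lambda$ for some random $\hat\lambda\in\RR^d$, absorbing the normalization constraint by taking the constant function as one coordinate of $\hat\phi$. Writing $\hat G = |\cI_{c,1}|^{-1}\sum_{i\in\cI_{c,1}}\hat\phi(X_i)\hat\phi(X_i)^\top$, $\hat m = |\cI_t|^{-1}\sum_{i\in\cI_t}\hat\phi(X_i)$, and $\hat h = |\cI_{c,1}|^{-1}\sum_{i\in\cI_{c,1}}\hat\phi(X_i)Y_i$, feasibility of $\hat w$ records that $\hat G\hat\lambda = \hat m + \hat r$ with $\|\hat r\|_\infty \le \delta_n$, while the fold-$1$ contribution to $\hat\theta_0$ is exactly $\hat\lambda^\top\hat h$.

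The core of the argument is controlling $\hat\lambda$. Since $\hat\phi$ on this fold is $\hat\phi^{(1)}$, built only from the complementary fold, conditioning on that fold makes the summands in $\hat G$, $\hat m$, and $\hat h$ i.i.d.\ with a frozen feature map; a conditional law of large numbers, combined with $\|\hat\phi^{(1)} - \phi^*\|_{L_2} = o_P(1)$ through Cauchy--Schwarz and mild moment bounds on $\phi^*$ and $Y(0)$, upgrades these to $\hat G \xrightarrow{P} G^* := \E[\phi^*(X)\phi^*(X)^\top\mid T=0]$, $\hat m \xrightarrow{P} m^* := \E[\phi^*(X)\mid T=1]$, and $\hat h \xrightarrow{P} h^* := \E[\phi^*(X)m_0(X)\mid T=0]$. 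Because $G^*$ is full rank, with probability tending to one $\hat G$ is invertible with $\|\hat G^{-1}\| = O_P(1)$ --- which simultaneously confirms the feasibility used above, since $w_i = \hat\phi(X_i)^\top\hat G^{-1}\hat m$ already satisfies all constraints --- and then $\delta_n = o(1)$ yields $\hat\lambda = \hat G^{-1}(\hat m + \hat r)\xrightarrow{P}(G^*)^{-1}m^* =: \bar\lambda$. I expect this to be the main obstacle: it is where the full-rank condition does its work, both making the limiting balance equation uniquely solvable and transferring well-conditioning from $G^*$ to the random $\hat G$ despite $\hat\phi$ being only $L_2$-consistent; a minor wrinkle is that $\hat\lambda$ and $\hat h$ share fold-$1$ randomness, but this is harmless because each converges to a constant and products of such sequences converge to the product of the limits.

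Assembling the two folds, $\hat\theta_0$ is the asymptotically equal-weight average of $\hat\lambda^{(k)\top}\hat h^{(k)}$, $k=1,2$, so $\hat\theta_0 \xrightarrow{P} \bar\lambda^\top h^* = \E[\bar w(X)m_0(X)\mid T=0] = \E[\bar w(X)Y(0)\mid T=0] = \bar\theta_0$ with $\bar w(x) = \phi^*(x)^\top\bar\lambda$, the last equalities being unconfoundedness and the definition of $m_0$; this is the first claim. For the ``moreover'' part, the identity $\bar\lambda = (G^*)^{-1}m^*$ is equivalent to $\E[\bar w(X)\phi^*(X)\mid T=0] = \E[\phi^*(X)\mid T=1]$, and a one-line change-of-measure calculation shows the true weight $w^*$ obeys the same moment identity $\E[w^*(X)\phi^*(X)\mid T=0] = \E[\phi^*(X)\mid T=1]$. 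Under (i), $w^*$ itself lies in the span of $\phi^*$ (being proportional to $\phi^{*\top}\lambda^*$), so matching this identity forces its coefficient vector to equal $\bar\lambda$, hence $\bar w\equiv w^*$ and $\bar\theta_0 = \E[w^*(X)Y(0)\mid T=0] = \theta_0$. Under (ii), $m_0 = \phi^{*\top}\beta^*$ gives $h^* = G^*\beta^*$, so $\bar\theta_0 = \bar\lambda^\top G^*\beta^* = (m^*)^\top\beta^* = \E[m_0(X)\mid T=1] = \theta_0$.
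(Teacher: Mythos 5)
Your proposal is correct and follows the same overall architecture as the paper's proof: pass to the linear (dual) representation $\hat w_i=\hat\phi(X_i)^\top\hat\lambda$, identify the limit $\bar\lambda=(G^*)^{-1}m^*$ of the dual coefficient, and then verify $\bar\theta_0=\theta_0$ via exactly the two moment identities you give (under (i), invertibility of $G^*$ forces the coefficient of $w^*$ to coincide with $\bar\lambda$; under (ii), $\bar\lambda^\top G^*\beta^*=(m^*)^\top\beta^*$). The one genuine difference is the middle step. The paper analyzes the penalized dual objective $\hat L(\lambda)=\lambda^\top\hat\Sigma_c\lambda-2\lambda^\top\hat\EE_t[\hat\phi]+2|\lambda|^\top\delta_n$ and invokes a strong-convexity lemma to bound $\|\hat\lambda-\bar\lambda\|$ by $\|\nabla\hat L(\bar\lambda)\|$, whereas you bypass the penalized objective entirely: you read off the approximate normal equations $\hat G\hat\lambda=\hat m+\hat r$, $\|\hat r\|_\infty\le\delta_n$, directly from primal feasibility plus the linear form of $\hat w$, and invert $\hat G$ using the full-rank limit. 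In this fixed-$d$ setting your route is more elementary and equally rigorous (your observation that $w_i=\hat\phi(X_i)^\top\hat G^{-1}\hat m$ certifies feasibility, including the normalization once the intercept is absorbed into $\hat\phi$, closes the only loose end); what the paper's strong-convexity device buys is that it extends to the variable-selection regime of Section 4, where the dimension of $\hat\phi$ grows and naive inversion of $\hat G$ would require the eigenvalue control that those later theorems make explicit. Your handling of the shared fold-$1$ randomness between $\hat\lambda$ and $\hat h$ (both converge to constants, so the product converges) is also fine and matches what the paper does implicitly.
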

 
The proof of Theorem~\ref{thm:convergence} relies on convex analysis and is provided in Appendix~\ref{app:proof_convergence}.  
The first part of the theorem establishes convergence of the cross-balancing estimator to some $\bar\theta_0$ under arbitrary rates of convergence for the learned features.  
The second part connects to existing work on model double robustness in balancing estimators \citep{zhao2017entropy,wang2020minimal}, showing that balancing weighting methods implicitly estimate two models: one for the prognostic score and another for the inverse propensity score, where the model class is linear in $\hat\phi(\cdot)$.
Unlike existing work, allowing  $\hat\phi(\cdot)$ to be learned from a separate fold increases the expressiveness and robustness of the estimator.
 
Theorem~\ref{thm:convergence} further suggests that any model consistent for the true prognostic score suffices for consistent estimation; however, to maximize robustness to model misspecification, $\hat\phi$ should ideally approximate both the true weights and the prognostic score.
A straightforward idea is to fit models for the propensity and prognostic scores, set $\hat{w}^{(k)}(x):= \hat{e}^{(k)}(x)/\{1-\hat{e}^{(k)}(x)\}$ and $\hat{m}^{(k)}(x)$ for $m_0(x)$, and then take $\hat\phi^{(k)}(x) = \{\hat{m}^{(k)}(x), \hat{w}^{(k)}(x)\}$. 
By Theorem~\ref{thm:convergence}, $\hat\theta_0$ is consistent if either $\hat{m}^{(k)}(x)$ is consistent for $m_0(x)$ or $\hat{e}^{(k)}(x)$ is consistent for $e(x)$, or if any linear combination of the two converges to either $m_0(x)$ or $e(x)$. 
This form of double robustness naturally motivates a comparison with the AIPW estimator, which uses the same two models with cross-fitting~\citep{chernozhukov2018double}:
\vspace{-0.5em}
{\setlength{\belowdisplayskip}{8pt}\$
\hat\theta_0^{\text{AIPW}} := \frac{1}{n_c}\sum_{i\in \cI_c} \hat{w}(X_i) \{Y_i - \hat{m}(X_i)\} + \frac{1}{n_t}\sum_{j\in \cI_t} \hat{m}(X_j).
\$}
Interestingly, our cross-balancing estimator has a finite-sample bias reduction property when compared to the AIPW estimator. 
To see this, let $\Pi^{(k)}_{\hat\phi}$ denote the projection residual operator onto the linear span of $\hat\phi$. 
Specifically, for any function $f\colon \cX \to \RR$, define
$\Pi^{(k)}_{\hat\phi}[f](X_i) := f(X_i) - \hat\eta^\top \hat\phi^{(k)}(X_i)$ for $i \in \cI_{c,k},$
where 
$\hat\eta = \argmin_{\eta} \sum_{i\in \cI_{c,k}} \big\{f(X_i) - \eta^\top \hat\phi^{(k)}(X_i)\big\}^2$ is the vector of ordinary least squares coefficients. The proof of Proposition~\ref{prop:bias_reduc} is in Appendix~\ref{app:subsec_proof_bias}.  

\begin{prop}\label{prop:bias_reduc}
Suppose $\hat\phi^{(k)}$ includes both $\hat{w}^{(k)}(\cdot)$ and $\hat{m}^{(k)}(\cdot)$, each estimated using the held-out fold, and $\delta=0$. 
Then,
\vspace{-0.5em}
\begin{align}
    \label{eq:aipw_bias}
    \EE\big[\hat\theta_0^{\textnormal{AIPW}} - \theta_0\big] &= \EE\bigg[ \frac{1}{n_c} \sum_{k=1}^2 \sum_{i\in \cI_{c,k}}  \big( \hat{w}^{(k)} - w^* \big)(X_i) \cdot \big( \hat{m}^{(k)} - m_0 \big)(X_i) \bigg].
\end{align}
\vspace{-0.5em}
In contrast,  the cross-balancing estimator obeys
\vspace{-0.5em}
\begin{align}
    \label{eq:cbal_bias}
    \EE\big[\hat\theta_0 - \theta_0\big] = \EE\left[ \frac{1}{n_c} \sum_{k=1}^2 \sum_{i\in \cI_{c,k}} \Pi^{(k)}_{\hat\phi}\big[\hat{w}^{(k)} - w^*\big](X_i) \cdot \Pi^{(k)}_{\hat\phi}\big[\hat{m}^{(k)} - m_0\big](X_i) \right].
\end{align} 
\end{prop}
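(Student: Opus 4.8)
The proof splits into the AIPW identity \eqref{eq:aipw_bias} and the cross-balancing identity \eqref{eq:cbal_bias}, and I would attack both with the same two ingredients: cross-fitting (to kill the outcome noise and to make the held-out nuisances behave as fixed functions) and the change-of-measure identity $\EE[g(X)\mid T=1]=\EE[w^*(X)g(X)\mid T=0]$ (to pass between treated and control averages). For \eqref{eq:aipw_bias} this is the textbook product-bias calculation: conditioning on the held-out fold so that $\hat w^{(k)},\hat m^{(k)}$ are deterministic, the control contribution of $\hat\theta_0^{\textnormal{AIPW}}$ has conditional mean $\EE[\hat w^{(k)}(X)\{m_0(X)-\hat m^{(k)}(X)\}\mid T=0]$ and the treated contribution has conditional mean $\EE[\hat m^{(k)}(X)\mid T=1]$; subtracting $\theta_0=\EE[m_0(X)\mid T=1]$ and converting the treated expectations to the control population via the change of measure collapses everything to the average of $(\hat w^{(k)}-w^*)(\hat m^{(k)}-m_0)$, which is \eqref{eq:aipw_bias}.

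For \eqref{eq:cbal_bias} the extra ingredient is the dual structure of the weighting program. With $\delta=0$, the KKT conditions of the variance-minimization problem force $\hat w_i=\hat\phi^{(k)}(X_i)^\top\hat\lambda$ on $\cI_{c,k}$ (the constant being supplied by the normalization constraint) and make all balance constraints hold with equality. Since $\hat m^{(k)}$ and $\hat w^{(k)}$ are themselves coordinates of $\hat\phi^{(k)}$, two facts come for free: $\Pi^{(k)}_{\hat\phi}[\hat m^{(k)}]\equiv 0$ and $\Pi^{(k)}_{\hat\phi}[\hat w^{(k)}]\equiv 0$ (hence $\Pi^{(k)}_{\hat\phi}[\hat m^{(k)}-m_0]=-\Pi^{(k)}_{\hat\phi}[m_0]$ and $\Pi^{(k)}_{\hat\phi}[\hat w^{(k)}-w^*]=-\Pi^{(k)}_{\hat\phi}[w^*]$), and $\sum_{i\in\cI_{c,k}}\hat w_i\,\Pi^{(k)}_{\hat\phi}[f](X_i)=0$ for every $f$, because the weights lie in the span and least-squares residuals are empirically orthogonal to it. After cross-fitting removes the outcome noise just as before, so that $\EE[\hat\theta_0]=\EE[|\cI_c|^{-1}\sum_{i\in\cI_c}\hat w_i m_0(X_i)]$, I would subtract the control-side representation $\theta_0=\EE[|\cI_c|^{-1}\sum_{i\in\cI_c}w^*(X_i)m_0(X_i)]$ to reach $\EE[\hat\theta_0-\theta_0]=\EE[|\cI_c|^{-1}\sum_{i}(\hat w_i-w^*(X_i))\,m_0(X_i)]$. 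Splitting, fold by fold, each factor into its $\hat\phi^{(k)}$-span part plus its $\Pi^{(k)}_{\hat\phi}$-residual part, the span$\times$residual cross terms vanish by the orthogonality just stated; the residual$\times$residual term, via the two sign identities, is exactly the projected-error product appearing in \eqref{eq:cbal_bias}. Reading this against \eqref{eq:aipw_bias} gives the announced comparison: both biases are products of a treatment-side error and an outcome-side error, but exact balancing replaces each raw error by its least-squares residual off $\hat\phi^{(k)}$.

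The step I expect to be the main obstacle is disposing of the remaining span$\times$span piece in that decomposition — the expected $\cI_{c,k}$-average of $(\hat\phi^{(k)}(X_i)^\top\hat\lambda - w^*(X_i))$ times the least-squares fit of $m_0$ onto $\hat\phi^{(k)}$ over $\cI_{c,k}$. This is precisely where exact balance is indispensable: it forces $|\cI_{c,k}|^{-1}\sum_{i\in\cI_{c,k}}\hat w_i\hat\phi^{(k)}(X_i)$ to equal the treated average of $\hat\phi^{(k)}$, whose conditional mean coincides with $\EE[w^*(X)\hat\phi^{(k)}(X)\mid T=0]$ by the change of measure. Turning this near-cancellation into an actual vanishing requires combining exact balance with the independence between the held-out-fold-estimated $\hat\phi^{(k)}$ and the fold-$k$ sample, and with the fact that the fitted $m_0$ is itself $\hat\phi^{(k)}$-linear; the rest — the translation-invariance constraint, the two symmetric folds, and keeping straight whether averages run over $\cI_{t,k}$ or $\cI_t$ — is routine.
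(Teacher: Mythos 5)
Your proposal is correct and follows essentially the same route as the paper's proof: cross-fitting plus the change of measure $\EE[g(X)\mid T=1]=\EE[w^*(X)g(X)\mid T=0]$ for the AIPW identity, and, for the cross-balancing identity, the dual representation $\hat w_i=\hat\phi^{(k)}(X_i)^\top\hat\lambda$, exact balance, the empirical orthogonality of OLS residuals to the span of $\hat\phi^{(k)}$, and the observation that $\Pi^{(k)}_{\hat\phi}[\hat m^{(k)}]=\Pi^{(k)}_{\hat\phi}[\hat w^{(k)}]=0$ so the residual product of the errors equals the product of the residuals of $w^*$ and $m_0$. The only difference is bookkeeping: you convert $\theta_0$ to a control-side weighted average up front and then perform a symmetric span-plus-residual decomposition of both factors, whereas the paper substitutes the two projections sequentially (first for $m_0$, then for the weights) and collects the same leftover cross term, which it disposes of with exactly the exact-balance/change-of-measure cancellation you single out as the main obstacle.
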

\vspace{-1em}
The bias formula~\eqref{eq:aipw_bias} for the AIPW estimator, which involves the product of two estimation error terms, is well established in the literature.  
In contrast, formula~\eqref{eq:cbal_bias} reveals that the bias of the cross-balancing estimator is governed by the product of the projection residuals of these two error terms. 
In other words, cross-balancing \emph{projects out} the component of each estimation error that is explainable by the linear span of $\hat\phi$, even though $w^*$ and $m_0$ are entirely unknown. 
Technically, this is due to the linear calibration enforced by (i) the balancing conditions and (ii) the $L_2$-norm minimization objective.
Although Proposition~\ref{prop:bias_reduc} does not guarantee that $|\EE[\hat\theta_0 - \theta_0]| \leq |\EE[\hat\theta_0^{\text{AIPW}} - \theta_0]|$ in all cases, our simulations indicate that cross-balancing often achieves considerably smaller bias than the AIPW estimator when using the same models $\{\hat{m}^{(k)}, \hat{w}^{(k)}\}$, especially when they are estimated using generic machine learning algorithms without carefully tuned hyperparameters.
 
\subsection{Inferential guarantees}
\vspace{-.2cm}

We now examine the inferential properties of $\hat\theta$ with fitted features. 
The following theorem establishes its asymptotic normality when the learned features converge at nonparametric rates. 
For brevity, the details regarding variance estimation are in the proof in Appendix~\ref{app:subsec_proof_inference}.
\begin{theorem}\label{thm:inference}
    Suppose the following conditions hold:
    \begin{enumerate}[label=(\roman*)]
    \setlength\itemsep{-0.5em}
        \item $\EE\left[\{Y(0)-m_0(X)\}^2 \mid T=0\right] \leq M_1$ for some constant $M_1 > 0$.
        \item The imbalance tolerance satisfies $\delta_n = o(n^{-1/2})$.
        \item For $k=1,2$, $\|\hat\phi^{(k)} - \phi^*\|_{L_2} = o_P(n^{-1/4})$ for some fixed $\phi^*\colon \cX\to \RR^d$, where $\EE[\|\phi^*\|^2] \leq M_2$ for a constant $M_2>0$, and $\EE[\phi^*(X)\phi^*(X)^\top\mid T=0]$ is full rank.
        \item There exist unknown vectors $\lambda^*,\beta^* \in \RR^d$ such that $w^*(x) = (\lambda^*)^\top \phi^*(x)$ and $m_0(x) = (\beta^*)^\top \phi^*(x)$, where $w^*(x)\equiv(1-p)e(x)/[p(1-e(x))]$.
    \end{enumerate}
    Then, $\sqrt{n}(\hat\theta_0 - \theta_0) \rightsquigarrow N(0, \sigma_0^2)$
    where $\sigma_0^2 = \Var\big\{\frac{(1-T)w^*(X)[Y(0)-m_0(X)]}{1-p} + \frac{T m_0(X)}{p}\big\}$
    can be consistently estimated and attains the semiparametric efficiency bound.
\end{theorem}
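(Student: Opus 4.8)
The plan is to derive an asymptotically linear expansion of $\hat\theta_0$ whose influence function matches $\psi(X,T,Y) = \frac{(1-T)w^*(X)[Y-m_0(X)]}{1-p} + \frac{Tm_0(X)}{p} - \theta_0$, and then invoke the classical result that this is the efficient influence function for $\theta_0$ under Assumption~\ref{assump:unconfound_overlap}. First I would condition on the fold used to construct $\hat\phi^{(k)}$, so that within fold $k$ the features are fixed functions; by Theorem~\ref{thm:convergence} the weights $\hat w_i = \hat\phi^{(k)}(X_i)^\top\hat\lambda^{(k)}$ converge to $\bar w(x)=\phi^*(x)^\top\bar\lambda$, and condition (iv) forces $\bar w = w^*$ (since $\EE[\phi^*\phi^{*\top}\mid T=0]$ is full rank, the population minimum-variance calibrated weight is exactly $w^*$). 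The key algebraic identity is that, because $\hat\phi^{(k)}$ contains both $\hat w^{(k)}$ and — more importantly — we can add $\hat m^{(k)}$ or simply use the balance constraints: writing $Y_i = m_0(X_i) + \varepsilon_i$ with $\varepsilon_i = Y_i - m_0(X_i)$, decompose
\[
\hat\theta_0 - \theta_0 = \underbrace{\frac{1}{|\cI_c|}\sum_{i\in\cI_c}\hat w_i \varepsilon_i}_{(A)} + \underbrace{\frac{1}{|\cI_c|}\sum_{i\in\cI_c}\hat w_i m_0(X_i) - \theta_0}_{(B)}.
\]

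For term $(B)$, I would use $m_0(x) = (\beta^*)^\top\phi^*(x)$ together with the exact balance constraint $\frac{1}{|\cI_{c,k}|}\sum_{i\in\cI_{c,k}}\hat w_i\hat\phi^{(k)}_\ell(X_i) = \frac{1}{|\cI_t|}\sum_{j\in\cI_t}\hat\phi^{(k)}_\ell(X_j)$ (with tolerance $\delta_n$): this rewrites $(B)$ up to $O_P(\delta_n\|\hat\lambda^{(k)}\|) = o_P(n^{-1/2})$ as $\frac{1}{|\cI_t|}\sum_{j\in\cI_t}(\beta^*)^\top\hat\phi^{(k)}(X_j) - \theta_0$, which since $\hat\phi^{(k)}\to\phi^*$ in $L_2$ at rate $o_P(n^{-1/4})$ equals $\frac{1}{|\cI_t|}\sum_{j\in\cI_t} m_0(X_j) - \theta_0 + o_P(n^{-1/2})$ — this contributes the $\frac{Tm_0(X)}{p}$ piece of the influence function. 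The cross-product error here, $\frac{1}{|\cI_t|}\sum_j (\beta^*)^\top(\hat\phi^{(k)}-\phi^*)(X_j)$, is $o_P(n^{-1/4})$ deterministically conditional on the other fold but has mean zero over that fold's randomness; I would bound it via a conditional-variance argument (independence of folds) to get $o_P(n^{-1/2})$ — this is the standard cross-fitting trick.

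For term $(A)$, the martingale-type structure is the workhorse: conditional on fold $k$'s construction sample and on $\{X_i\}_{i\in\cI_{c,k}}$, the $\varepsilon_i$ are mean-zero with variance bounded by $M_1$ (condition (i)), so $(A) = \frac{1}{|\cI_c|}\sum_{i\in\cI_c} w^*(X_i)\varepsilon_i + \frac{1}{|\cI_c|}\sum_{i\in\cI_c}(\hat w_i - w^*(X_i))\varepsilon_i$; the second sum has conditional variance $\le \frac{M_1}{|\cI_c|}\cdot\frac{1}{|\cI_c|}\sum_i(\hat w_i - w^*(X_i))^2 = \frac{M_1}{|\cI_c|} o_P(1) = o_P(n^{-1})$ using the $L_2$-consistency of the weights (upgraded from $o_P(1)$ to the rate via condition (iii)), hence $(A) = \frac{1}{|\cI_c|}\sum_{i\in\cI_c} w^*(X_i)\varepsilon_i + o_P(n^{-1/2})$, delivering the $\frac{(1-T)w^*(X)[Y-m_0(X)]}{1-p}$ term. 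Combining $(A)$ and $(B)$, replacing $|\cI_c|/n \to 1-p$ and $|\cI_t|/n\to p$, and applying the Lindeberg CLT to the i.i.d. sum $\frac{1}{n}\sum_i \psi(X_i,T_i,Y_i)$ gives $\sqrt n(\hat\theta_0-\theta_0)\rightsquigarrow N(0,\sigma_0^2)$; efficiency follows because $\psi$ is the known efficient influence function for $\theta_0$. Variance estimation is handled by plugging $\hat w_i$, a fold-wise regression estimate of $m_0$, and empirical frequencies into the sample variance of $\hat\psi_i$, with consistency following from the same rate conditions.

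The main obstacle I anticipate is controlling the interaction between the estimated weights and the estimated features precisely enough — specifically, showing that $\hat\lambda^{(k)}$ (the dual solution) is $O_P(1)$ and that the rate $\|\hat\phi^{(k)}-\phi^*\|_{L_2} = o_P(n^{-1/4})$ genuinely transfers to $\|\hat w^{(k)} - w^*\|_{L_2} = o_P(n^{-1/4})$ despite $\hat w$ being defined through a constrained quadratic program rather than a plug-in. This requires a stability/continuity argument for the optimization program in~\eqref{eq:opt_w2}: a perturbation of the constraint matrix (from $\phi^*$ to $\hat\phi^{(k)}$) of size $o_P(n^{-1/4})$ must perturb the optimizer by a commensurate amount, which uses the full-rank condition in (iii) to bound the relevant inverse Gram matrix and the strong convexity of the $\sum w_i^2$ objective. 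Handling the tolerance $\delta_n$ (rather than exact balance $\delta=0$ as in Proposition~\ref{prop:bias_reduc}) adds a manageable $O_P(\delta_n)= o_P(n^{-1/2})$ slack throughout, via condition (ii). Everything else is bookkeeping with conditional expectations across the two folds.
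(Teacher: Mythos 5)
Your overall architecture --- dualizing the balancing program to write $\hat w_i=\hat\phi^{(k)}(X_i)^\top\hat\lambda^{(k)}$, establishing stability of $\hat\lambda^{(k)}$ via strong convexity and the full-rank Gram matrix, and comparing $\hat\theta_0$ to the oracle AIPW expansion using fold-wise conditional arguments --- is the same as the paper's, and your treatment of term $(A)$ matches the paper's handling of its term (i) exactly. The gap is in term $(B)$. The balance constraint controls averages of $\hat\phi^{(k)}$, not of $\phi^*$, so passing from $\frac{1}{|\cI_{c,1}|}\sum_i\hat w_i m_0(X_i)$ to $\frac{1}{|\cI_{t,1}|}\sum_j(\beta^*)^\top\hat\phi^{(k)}(X_j)$ leaves behind a control-side residual $\frac{1}{|\cI_{c,1}|}\sum_i\hat w_i\,(\beta^*)^\top(\phi^*-\hat\phi^{(k)})(X_i)$ that your write-up drops; and the treated-side residual $\frac{1}{|\cI_{t,1}|}\sum_j(\beta^*)^\top(\hat\phi^{(k)}-\phi^*)(X_j)$, which you do acknowledge, is \emph{not} conditionally mean-zero given the training fold --- its conditional mean is $\EE[(\beta^*)^\top(\hat\phi^{(k)}-\phi^*)(X)\mid T=1]$, which condition (iii) only bounds at $o_P(n^{-1/4})$. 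Cross-fitting removes the sampling fluctuation around this conditional mean, not the mean itself: a learned feature carries a systematic bias that independence between folds does not kill. As written, your bound on $(B)$ only yields an $o_P(n^{-1/4})$ remainder, which is insufficient for root-$n$ inference.

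What rescues the argument, and what the paper's proof does, is to keep the two residuals paired. After replacing $\hat w_i$ by $w^*(X_i)$ (the difference contributing a product of two $o_P(n^{-1/4})$ quantities, disposed of by Cauchy--Schwarz as in the paper's term (ii)), one is left with
\[
\frac{1}{|\cI_{c,1}|}\sum_{i\in\cI_{c,1}} w^*(X_i)\bigl(m_0(X_i)-\hat\phi^{(k)}(X_i)^\top\beta^*\bigr)\;-\;\frac{1}{|\cI_{t,1}|}\sum_{j\in\cI_{t,1}}\bigl(m_0(X_j)-\hat\phi^{(k)}(X_j)^\top\beta^*\bigr),
\]
whose conditional mean vanishes exactly because $w^*$ is the true density ratio, $\EE[w^*(X)g(X)\mid T=0]=\EE[g(X)\mid T=1]$ for any $g$ built from the other fold; only then does a conditional Chebyshev bound, with variance of order $\|\hat\phi^{(k)}-\phi^*\|_{L_2}^2/n=o_P(n^{-1})$, deliver the needed $o_P(n^{-1/2})$. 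The cancellation is between the $w^*$-weighted control average and the treated average, not within the treated average alone; restoring this paired structure closes the gap.
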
 

In Theorem~\ref{thm:inference}, condition (i) imposes a mild moment restriction, while condition (ii) ensures that the bias due to approximate balance is of lower order. 
The core requirements are (iii) and (iv), which specify that the learned features converge  to a deterministic feature vector whose linear span contains both the oracle weight function $w^*(\cdot)$ and the prognostic function $m_0(\cdot)$. 
As discussed following Theorem~\ref{thm:convergence}, a natural idea is to include in the feature set $\{\hat{m}^{(k)}(x), \hat{w}^{(k)}(x)\}$, which estimate $m_0(\cdot)$ and $w^*(\cdot)$, respectively. Asymptotic normality follows
as long as these estimators converge at the rate $o_P(n^{-1/4})$ to (functions whose span includes) the true $m_0(\cdot)$ and $w^*(\cdot)$.
Condition (iii) is similar in spirit to the well-known rate-doubly-robust results~\citep{belloni2014inference,chernozhukov2018double}. 
In contrast to the usual consistency conditions required for the AIPW estimator, condition (iv)  is arguably milder: if one is not confident as to which model class is correct, one can incorporate a handful of them into $\hat\phi^{(k)}$. 
Valid inference is still attainable as long as at least one of the estimators converges to a limit whose linear span contains the true functions.

Theorem~\ref{thm:inference} underscores the importance of incorporating both the propensity score and the prognostic score in the choice of balancing features for the adaptive design of observational studies, especially when the estimated prognostic score is expected to converge at a slow rate. 
This aligns with practical guidelines for covariate balancing which emphasize the use of variables that are predictive of either the treatment or the outcome; see, e.g., \cite{kainz2017improving}. 
While prognostic score balancing approaches have been explored in the literature~\citep[e.g.,][]{hansen2008prognostic,stuart2013prognostic,nguyen2024confounder}, they may be insufficient for robust inference when the prognostic score is estimated  with slow convergence rates. 
In such cases, explicit modeling of the propensity score remains necessary to achieve fast convergence rates, unless the functions $\{m_0(x), w^*(x)\}$ are perfectly linearly related.   

\begin{remark}[Multiple validity with parametric models]
\label{rem:multi_param}
    As a final remark, the requirement that the limiting function $\phi^*$ provides a good approximation to both $m_0(x)$ and $w^*(x)$ is necessary for root-$n$ inference rates only when $\hat\phi^{(k)}(\cdot)$ converges slowly, as with nonparametric   methods. 
In contrast, when $\hat\phi^{(k)}$ is fitted using parametric models that converge quickly to some $\phi^*$, it suffices for valid inference that any entry of $\phi^*$ is well-specified for either $m_0(\cdot)$ or $w^*(\cdot)$. 
This observation is formalized in Theorem~\ref{thm:multi_param} presented in Appendix~\ref{app:subsec_multi_param}. 
\end{remark}
 
\section{Cross-balancing with selected variables}
\label{section_cross2}
\vspace{-.2cm}


This section introduces a second strategy for adaptive observational study design, which selects covariate features from a broader set or dictionary defined by the investigator. Standard guidelines typically recommend including covariates that are related to the treatment and/or outcome, guided by domain knowledge. However, supplementing domain expertise with data-driven feature selection techniques can be highly valuable as these methods harness information in the data itself. 
Selecting features from such a dictionary can improve interpretability and support integration with subject-matter expertise, since the selection process remains transparent and the resulting covariates are accessible for scientific scrutiny.

\subsection{Procedure}
\label{section_cross2.1}
\vspace{-.2cm}

Throughout this section, we slightly override the notation and let $X \in \mathbb{R}^p$ denote the entire dictionary of candidate features, which may consist of the (potentially high-dimensional) raw covariates in the data or transformations thereof. 
In keeping with the previous strategy, we randomly split the data, using one fold for covariate selection and the other for balancing the selected covariates.
In the first step, we use data from $\mathcal{I}_{c,k}\cup\mathcal{I}_{t,k}$ to find covariates that are predictive in a linear outcome or an inverse linear treatment model. 
In the second step, we balance the selected covariates using data from the other fold. 
For example, taking $k=1$, let $\hat{S} := \hat{S}^{(1)}$ denote the set of selected covariates from $\mathcal{I}_{c,2}\cup \mathcal{I}_{t,2}$. 
The weights $\hat{w}_i$ for $i \in \mathcal{I}_{c,1}$ are then obtained by solving 
\begin{equation}
\label{eq:opt_w2_sel}\small
\arg\min_{w} \left\{ \sum_{i \in \mathcal{I}_{c,1}} w_i^2 : \, \bigg| \frac{1}{|\mathcal{I}_{c,1}|} \sum_{i\in \mathcal{I}_{c,1}} w_i X_{i,\hat{S}} - \frac{1}{|\mathcal{I}_{t}|} \sum_{i\in \mathcal{I}_t} X_{i,\hat{S}} \bigg|\leq \delta_n; \frac{1}{|\mathcal{I}_{c,1}|}\sum_{i\in \mathcal{I}_{c,1}} w_i = 1 \right\},
\end{equation}
where we set a constant tolerance $\delta_n \in \mathbb{R}^+$ for all selected covariates.
Similarly, we obtain weights $\hat{w}_i$ for $i \in \mathcal{I}_{c,2}$. 
The resulting estimator is then $\hat\theta_0 = \frac{1}{|\cI_c|}\sum_{i \in \mathcal{I}_c} \hat{w}_i Y_i.$  

In contrast to the preceding section with a fixed  number of balancing features, we now allow both the original and selected feature dimensions to grow with the sample size $n$. 
The sources of estimator error also differ: here, the main source of uncertainty stems from $\hat{S}$ being a random subset of the given dictionary.
In this context, we highlight two key technical challenges that arise from feature selection in both low- and high-dimensional settings: 
\vspace{-0.25em}
\begin{itemize}\setlength\itemsep{-0.5em}
    \item The first is the analysis with a random, fuzzy set of balancing features. When $\hat{S}$ is chosen from low-dimensional  features, model selection often involves searching over an exponentially large space, and the selected set $\hat{S}$ may not converge to any fixed subset. 
    In high-dimensional settings where  $p$ grows with $n$, even in the well-studied regime with sparsity, accurately recovering the true predictors (if any) typically requires strong conditions~\citep{van2009conditions,buhlmann2011statistics}. 
    \item The second challenge is analysis with misspecification. In any regime, it may be unrealistic to assume that the outcome is a perfect linear combination of a subset of the available features. This prompts the question of how much deviation from linearity can be permitted while still achieving valid inference. 
\end{itemize}

To accommodate broad settings, we develop conditions for $\hat{S}$ that are mild and general enough to yield theoretical guarantees.
We begin with a standard sub-Gaussian condition.
\begin{assumption}
For each $j\in\{1,\ldots,p\}$, the variables $\{X_{ij}\}_{i=1}^n$ are i.i.d., sub-Gaussian random variables with mean zero. 
Furthermore, there exists an unknown function $\sigma(S)$ such that for any subset $S\subseteq\{1,\ldots,p\}$, the random vector $X_{i,S}$ is $\sigma^2(S)$-sub-Gaussian; that is, for any $\lambda \in \RR^{|S|}$, the linear combination $X_{i,S}^\top \lambda$ is sub-Gaussian with parameter $\|\lambda\|^2\sigma^2(S)$.
\end{assumption}
 
The general function $\sigma(S)$ here allows for feature correlation which is useful when $X_i$'s are given transformations of raw features. 
In the special case where entries of $X_i$ are independent, we may simply set $\sigma(S) = 1$. 
In high-dimensional statistics, it is common to assume $\sigma(\{1,\ldots,p\}) = \sigma^2$ for some constant $\sigma^2$~\citep{wainwright2019high,rigollet2023high}.

\subsection{Convergence analysis}
\label{section_cross2.2}
\vspace{-0.2cm}
 
Rather than characterizing the selection set $\hat{S}$, our analysis centers on its approximation capacity. 
For any fixed subset $S\subseteq\{1,\ldots,p\}$ and any function $f\colon\cX\to\RR$, we define 
\vspace{-0.5em}
{\setlength{\belowdisplayskip}{8pt}\$
\epsilon(S; f) = \inf_{\eta\in\RR^{|S|}} \left\|f(X) - \eta^\top X_S\right\|_{L_2(\PP_{\cdot \mid T=0})},
\vspace{-0.5em}
\$}
with the minimizer denoted by $\eta(S;f)\in\RR^{|S|}$. 
Intuitively, $\epsilon(S;f)$ measures how well $f$ can be approximated by a linear combination of variables in $S$. 
Similar quantities are used in~\cite{belloni2014inference} to describe the quality of variable selection and its role in inference when ordinary least squares, instead of generic balancing, is applied to the selected variables.
Theorem~\ref{thm:convergence_varsel_alt} establishes the convergence of $\hat\theta$ when the selection size may grow with $n$.
\begin{theorem}\label{thm:convergence_varsel_alt}
Suppose that for each $k=1,2$, the following conditions hold:
\begin{enumerate}[label=(\roman*)]
\setlength\itemsep{-0.5em}
    \item \textnormal{Approximation error:} There exist fixed functions $\bar{w},\,\bar{m} \colon \mathbb{R}^p \to \mathbb{R}$ such that
    $ 
    \|\bar{w} - \eta(\hat{S}^{(k)}, w^*)^\top X_{\hat{S}^{(k)}}\|_{L_2(\mathbb{P}_{\cdot\mid T=0})} = o_P(1)$,  $\epsilon(\hat{S}^{(k)};\bar{m}) = o_P(1)$, 
    and $\mathbb{E}[(Y(0)-\bar{m}(X))^2 \mid X] \leq M_1$ as well as $\bar{w}(X) \leq M_2$ almost surely, for some constants $M_1, M_2 > 0$. Moreover, $\|\eta(\hat{S}^{(k)}, w^*)\| \leq M_3$ for some constant $M_3 > 0$. 
    \item  \textnormal{Imbalance tolerance:} $\delta_n \|\eta(\hat{S}^{(k)};\bar{m})\|_1 = o_P(1)$, $\delta_n \|\eta(\hat{S}^{(k)};w^*)\|_1 = o_P(1)$, and
    $ 
    \delta_n = \Omega_P\big( \sqrt{\log |\hat{S}^{(k)}| / n} + \|\bar{w} - \eta(\hat{S}^{(k)}, w^*)^\top X_{\hat{S}^{(k)}}\|_{L_2(\mathbb{P}_{\cdot\mid T=0})} \big).
    $ 
    \item \textnormal{Selection size:} $|\hat{S}^{(k)}| = o_P(n)$.  
\end{enumerate}    
Then there exists a fixed value $\bar\theta \in \mathbb{R}$ such that $\hat\theta_0 = \bar\theta + o_P(1)$. 
Moreover, if either $\bar{w} = w^*$ or $\bar{m} = m_0$, then $\hat\theta_0 = \theta_0 + o_P(1)$.
\end{theorem}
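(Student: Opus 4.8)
The plan is to mirror the convex-analytic argument behind Theorem~\ref{thm:convergence}, replacing every object indexed by the random, possibly non-convergent selection set $\hat{S}^{(k)}$ by the approximation quantities $\epsilon(\hat{S}^{(k)};\cdot)$ and the coefficient vectors $\eta(\hat{S}^{(k)};\cdot)$, which the hypotheses keep under control, and supplying the extra concentration bounds needed now that $|\hat{S}^{(k)}|$ may grow with $n$. Throughout I would condition on the held-out fold, so that $\hat{S}^{(k)}$, the coefficients $\eta(\hat{S}^{(k)};\cdot)$, and the induced residual functions are frozen, and write $n_{c,k}=|\mathcal{I}_{c,k}|$ and $n_t=|\mathcal{I}_t|$, both of order $n$. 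The first step is feasibility: with probability tending to one the program~\eqref{eq:opt_w2_sel} admits a solution, witnessed by $\bar{w}(X_i)$ up to an $o_P(1)$ additive shift enforcing $\tfrac1{n_{c,k}}\sum_i w_i=1$. Indeed, by the first-order optimality of $\eta(\hat{S}^{(k)};w^*)$ the population imbalance of $\bar{w}$ on a selected coordinate $X_j$ is $\mathbb{E}[(\bar{w}-\eta(\hat{S}^{(k)};w^*)^\top X_{\hat{S}^{(k)}})X_j\mid T=0]$, of order $\|\bar{w}-\eta(\hat{S}^{(k)};w^*)^\top X_{\hat{S}^{(k)}}\|_{L_2(\mathbb{P}_{\cdot\mid T=0})}$, while a union bound over $j\in\hat{S}^{(k)}$ plus the sub-Gaussian assumption add a fluctuation of order $\sqrt{\log|\hat{S}^{(k)}|/n}$; the lower bound on $\delta_n$ in condition (ii) is calibrated to dominate exactly this sum, which gives feasibility and, since $\bar{w}\le M_2$, the a priori bound $\tfrac1{n_{c,k}}\sum_{i\in\mathcal{I}_{c,k}}\hat{w}_i^2\le M_2^2+o_P(1)$.

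The core of the argument is a convergence-of-weights lemma. Lagrangian duality for this strictly convex quadratic program gives $\hat{w}_i=\hat\gamma_k+\hat\mu_k^\top X_{i,\hat{S}^{(k)}}$ on $\mathcal{I}_{c,k}$, and I would establish
\[
\frac1{n_{c,k}}\sum_{i\in\mathcal{I}_{c,k}}\big(\hat{w}_i-\bar{w}(X_i)\big)^2 = o_P(1)
\]
by min-norm sandwiching. Expanding the square, $\tfrac1{n_{c,k}}\sum_i\bar{w}(X_i)^2\to\|\bar{w}\|_{L_2(\mathbb{P}_{\cdot\mid T=0})}^2$ by the law of large numbers; the cross term $\tfrac1{n_{c,k}}\sum_i\hat{w}_i\bar{w}(X_i)$ tends to the same limit once $\bar{w}$ is replaced by its linear-on-$\hat{S}^{(k)}$ surrogate $\eta(\hat{S}^{(k)};w^*)^\top X_{\hat{S}^{(k)}}$ (error controlled by condition (i)), the balance constraints are invoked (residual $\le\delta_n\|\eta(\hat{S}^{(k)};w^*)\|_1=o_P(1)$), one passes to the treated sample mean, and $\mathbb{E}[\eta(\hat{S}^{(k)};w^*)^\top X_{\hat{S}^{(k)}}\mid T=1]$ is identified with $\|\eta(\hat{S}^{(k)};w^*)^\top X_{\hat{S}^{(k)}}\|_{L_2(\mathbb{P}_{\cdot\mid T=0})}^2$ via the change of measure $w^*$ and the projection identity; finally $\tfrac1{n_{c,k}}\sum_i\hat{w}_i^2$ is squeezed between $2\tfrac1{n_{c,k}}\sum_i\hat{w}_i\bar{w}(X_i)-\tfrac1{n_{c,k}}\sum_i\bar{w}(X_i)^2$ and the witness bound of the first step, both tending to $\|\bar{w}\|_{L_2(\mathbb{P}_{\cdot\mid T=0})}^2$. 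Exchanging empirical and population second moments uniformly over affine functions supported on $\hat{S}^{(k)}$ is where condition (iii), $|\hat{S}^{(k)}|=o_P(n)$, enters, through matrix concentration for sub-Gaussian designs.

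With the lemma in hand I would use an AIPW-type decomposition $\hat\theta_0=A+B$, where $B=\tfrac1{n_c}\sum_k\sum_{i\in\mathcal{I}_{c,k}}\hat{w}_i\,\eta(\hat{S}^{(k)};\bar{m})^\top X_{i,\hat{S}^{(k)}}$ and $A$ is the remainder. For $B$: the balance constraints (residual $\le\delta_n\|\eta(\hat{S}^{(k)};\bar{m})\|_1=o_P(1)$), a pass to the treated sample mean, and the change of measure together with $\epsilon(\hat{S}^{(k)};\bar{m})=o_P(1)$ give $B=\mathbb{E}[w^*(X)\bar{m}(X)\mid T=0]+o_P(1)$. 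For $A$: write $Y_i-\eta(\hat{S}^{(k)};\bar{m})^\top X_{i,\hat{S}^{(k)}}=\{Y_i-\bar{m}(X_i)\}+\{\bar{m}(X_i)-\eta(\hat{S}^{(k)};\bar{m})^\top X_{i,\hat{S}^{(k)}}\}$; the second bracket contributes $o_P(1)$ by Cauchy--Schwarz (the first-step bound times $\epsilon(\hat{S}^{(k)};\bar{m})$), while for $\tfrac1{n_{c,k}}\sum_i\hat{w}_i\{Y_i-\bar{m}(X_i)\}$ I would exploit that the program uses no outcomes, so that conditionally on all covariates and the held-out fold this average has mean $\tfrac1{n_{c,k}}\sum_i\hat{w}_i\{m_0(X_i)-\bar{m}(X_i)\}$ and variance at most $M_1(\tfrac1{n_{c,k}}\sum_i\hat{w}_i^2)/n_{c,k}=o_P(1)$; replacing $\hat{w}_i$ by $\bar{w}(X_i)$ through the lemma, its mean tends to $\mathbb{E}[\bar{w}(X)\{m_0(X)-\bar{m}(X)\}\mid T=0]$. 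Hence $\hat\theta_0=\bar\theta+o_P(1)$ with $\bar\theta:=\mathbb{E}[\bar{w}(X)\{m_0(X)-\bar{m}(X)\}\mid T=0]+\mathbb{E}[w^*(X)\bar{m}(X)\mid T=0]$, a fixed real number; if $\bar{m}=m_0$ the first term vanishes and $\bar\theta=\mathbb{E}[w^*(X)m_0(X)\mid T=0]=\mathbb{E}[m_0(X)\mid T=1]=\theta_0$, and if $\bar{w}=w^*$ the two terms combine to the same value $\theta_0$.

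The hardest part will be the convergence-of-weights lemma. It is where the selected-variable case genuinely departs from the learned-feature case of Theorem~\ref{thm:convergence}: there is no fixed limiting feature map to anchor the analysis, $\hat{S}^{(k)}$ need not converge to any set, and one must instead combine the convex min-norm sandwiching with uniform control of empirical versus population second moments over affine functions supported on a set of cardinality $o_P(n)$. This is the one step where the sub-Gaussian assumption, the selection-size bound $|\hat{S}^{(k)}|=o_P(n)$, and the lower bound on $\delta_n$ are all used at once; everything else reduces to bookkeeping with the law of large numbers, Cauchy--Schwarz, and the $\delta_n$-tolerance conditions.
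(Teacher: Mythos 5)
Your proof is correct and reaches the same limit $\bar\theta = \mathbb{E}[\bar{w}(X)\{m_0(X)-\bar{m}(X)\}\mid T=0] + \mathbb{E}[w^*(X)\bar{m}(X)\mid T=0]$ as the paper, but the key step---convergence of the weights---is argued by a genuinely different route. The paper works in the dual: it writes $\hat{w} = X_{c,\hat{S}}^\top\hat\lambda$ with $\hat\lambda$ minimizing a Lasso-type objective $\lambda^\top\hat\Sigma_{c,\hat S}\lambda - 2\lambda^\top\hat{\mathbb{E}}_t[X_{\hat S}] + 2\delta_n\|\lambda\|_1$, derives the Lasso basic inequality $\tfrac1n\|X_c(\hat\lambda-\bar\lambda)\|_2^2 \le 2\|\hat\lambda-\bar\lambda\|_1 A_n + 2\delta_n(\|\bar\lambda\|_1-\|\hat\lambda\|_1)$ with $A_n = O_P(\sqrt{\log|\hat S|/n} + \epsilon(\hat S;\bar w))$, and uses the lower bound on $\delta_n$ to make the penalty absorb the noise so that the right side collapses to $O_P(\delta_n\|\bar\lambda\|_1)=o_P(1)$ with no eigenvalue condition. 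You instead work in the primal: you certify that (a normalized shift of) $\bar{w}(X_i)$ is feasible with probability tending to one---which is exactly where the same lower bound on $\delta_n$ is spent---so that optimality gives $\tfrac1{n_c}\sum\hat w_i^2 \le \|\bar w\|_{L_2}^2 + o_P(1)$, while the balance constraints plus the projection identity $\mathbb{E}[\eta^\top X_{\hat S}\mid T=1] = \mathbb{E}[(\eta^\top X_{\hat S})^2\mid T=0]$ pin the cross term $\tfrac1{n_c}\sum\hat w_i\bar w(X_i)$ at the same value, and the sandwich yields $\tfrac1{n_c}\sum(\hat w_i-\bar w(X_i))^2 = o_P(1)$. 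The two are essentially dual views of the same phenomenon and both consume the hypotheses in the same way, though your version arguably isolates more transparently why a \emph{large} tolerance helps (it makes the oracle weights feasible), at the cost of needing the explicit feasibility/witness step that the dual argument sidesteps. Two small points of bookkeeping: your closing remark that condition (iii) enters ``through matrix concentration \ldots uniformly over affine functions'' overstates what your own argument needs---as sketched, you only ever compare empirical and population moments for the specific functions $\bar w$, $\eta^\top X_{\hat S}$, $\bar m$, and condition (iii) is used only to ensure $\sqrt{\log|\hat S^{(k)}|/n}=o_P(1)$ in the feasibility and treated-mean steps (uniform Gram-matrix concentration is what the paper's \emph{alternative} Theorem~\ref{thm:convergence_varsel} relies on, not this one); and both your feasibility step and the paper's basic inequality implicitly require the constant hidden in the $\Omega_P$ of condition (ii) to be large enough to strictly dominate the fluctuation terms, so this is a shared, not a new, imprecision.
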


The proof is provided in Appendix~\ref{subsec:proof_convergence_varsel_alt}, which relies on analysis techniques in penalized regression. 
The key technical challenge is to trade off the bias  (controlled by $\delta_n$) and variance (controlled by the eigenvalues of $X_{\hat{S}}X_{\hat{S}}^\top$) when balancing a \emph{random}, \emph{large} set of features in the presence of model misspecification. To avoid strong eigenvalue conditions on $X_{\hat{S}}X_{\hat{S}}^\top$, we leverage a sufficiently large value of $\delta_n$ (in analogy to penalized linear regression).  
A parallel result via convex analysis is in Appendix~\ref{app:subsec_varsel_conv}; it posits stronger eigenvalue conditions but allows smaller $\delta_n$, which may suit low-dimensional settings. 
 
\subsection{Inferential guarantees}
\label{section_cross2.3}

Finally, we establish inferential guarantees for the cross-balancing estimator following variable selection.
The proof of Theorem~\ref{thm:inf_varsel} is provided in Appendix~\ref{app:proof_thm_inf_varsel}.

\begin{theorem}
    \label{thm:inf_varsel}
    Suppose the following conditions hold for $k=1,2$:
    \begin{enumerate}[label=(\roman*)]
    \setlength\itemsep{-0.5em}
        \item \textnormal{Approximation error.} $\epsilon(\hat{S}^{(k)};w^*) = o_P(1)$ and $\epsilon(\hat{S}^{(k)};m_0) = o_P(1)$, with $\epsilon(\hat{S}^{(k)};w^*) \cdot \epsilon(\hat{S}^{(k)};m_0) = o_P(n^{-1/2})$. In addition, $\mathbb{E}\big[(Y(0) - \bar{m}(X))^2 \mid X\big] \leq M_1$ and $w^*(X) \leq M_2$, $|w^*(X) - \eta(\hat{S}^{(k)};w^*)^\top X_{\hat{S}}| \leq M_2$ almost surely, for some constants $M_1, M_2 > 0$. Also, $\|\eta(\hat{S}^{(k)}, m_0)\|_1 \leq M_3$ for some constant $M_3 > 0$.
        \item \textnormal{Imbalance tolerance.} $\delta_n = o_P(n^{-1/2})$.  
        \item \textnormal{Eigenvalues.} There exists a function $\xi(\cdot)\colon 2^{[p]} \to \mathbb{R}^+$ such that, with probability tending to one,
        $ 
        \frac{1}{n}\|X_{c,\hat{S}^{(k)}}v\|_2^2 \geq \xi(\hat{S}^{(k)}) \|v\|_1^2
        $ 
        for all $v\in \mathbb{R}^{|\hat{S}^{(k)}|}$, 
        where $X_{c,\hat{S}^{(k)}}$ denotes the data matrix whose $i$-th row is $X_{i,\hat{S}^{(k)}}$ for $i \in \cI_{c,k}$. Moreover, $\log(|\hat{S}^{(k)}|)/\xi(\hat{S}^{(k)}) = o_P(n)$ and $\epsilon(\hat{S}^{(k)};m_0) \cdot [\log(|\hat{S}^{(k)}|)/\xi(\hat{S}^{(k)})]^{1/2} = o_P(1)$.
    \end{enumerate}  
    Then $\sqrt{n}(\hat\theta_0 - \theta_0) \rightsquigarrow N(0,\sigma_0^2)$, where  $\sigma_0^2 = \Var\big\{\frac{(1-T)w^*(X)[Y(0)-m_0(X)]}{1-p} + \frac{T m_0(X)}{p}\big\}$ achieves the semiparametric efficiency bound for $\theta_0$. 
\end{theorem}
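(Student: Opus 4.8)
## Proof proposal for Theorem~\ref{thm:inf_varsel}

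The plan is to establish an asymptotically linear expansion of $\hat\theta_0$ around $\theta_0$ with the efficient influence function, then invoke the CLT and a Slutsky argument. The key decomposition is the standard one: writing $\bar m(x) = \eta(\hat S^{(k)}; m_0)^\top x_{\hat S^{(k)}}$ for the $L_2$-projection of $m_0$ onto the selected features, I would decompose
\@
\hat\theta_0 - \theta_0 = \underbrace{\frac{1}{|\cI_c|}\sum_{i\in\cI_c}\hat w_i\{Y_i - \bar m(X_i)\} + \frac{1}{|\cI_t|}\sum_{j\in\cI_t}\bar m(X_j) - \theta_0}_{\text{oracle-like term}} + \underbrace{\frac{1}{|\cI_c|}\sum_{i\in\cI_c}\hat w_i\bar m(X_i) - \frac{1}{|\cI_t|}\sum_{j\in\cI_t}\bar m(X_j)}_{\text{balancing bias}},
\@
with the balancing bias handled first, and within each fold $k$ the quantities $\hat w$, $\hat S^{(k)}$, $\bar m$ are independent of the fold-$k$ data used for weighting (this is the whole point of sample splitting and lets me condition on the held-out fold throughout).

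The balancing bias is where the eigenvalue condition (iii) does the work. On the selected coordinates the balancing constraint in~\eqref{eq:opt_w2_sel} forces $\|\frac{1}{|\cI_{c,k}|}\sum \hat w_i X_{i,\hat S} - \frac{1}{|\cI_t|}\sum X_{j,\hat S}\|_\infty \le \delta_n$, so $|\frac{1}{|\cI_{c,k}|}\sum \hat w_i \bar m(X_i) - \frac{1}{|\cI_t|}\sum \bar m(X_j)|\le \delta_n\|\eta(\hat S^{(k)};m_0)\|_1 = o_P(n^{-1/2})$ by (ii) and the bound $\|\eta(\hat S^{(k)};m_0)\|_1\le M_3$ in (i) — this piece is clean. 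The residual part, replacing $\bar m$ by $m_0$, costs the approximation error $\epsilon(\hat S^{(k)};m_0)$, which must be multiplied by the relevant norm of $\hat w$; controlling $\|\hat w\|$ (or an $\ell_1$-type norm of the implied coefficient vector) is exactly what the eigenvalue lower bound $\frac1n\|X_{c,\hat S}v\|_2^2\ge \xi(\hat S)\|v\|_1^2$ buys, giving $\|\hat w\|$-type control of order $\xi(\hat S^{(k)})^{-1/2}$ up to log factors; combined with the requirement $\epsilon(\hat S^{(k)};m_0)\cdot[\log|\hat S^{(k)}|/\xi(\hat S^{(k)})]^{1/2} = o_P(1)$ this makes the cross term $o_P(1)$ after the $\sqrt n$ scaling when paired with the product rate $\epsilon(\hat S^{(k)};w^*)\epsilon(\hat S^{(k)};m_0) = o_P(n^{-1/2})$. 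I would also need, as in Theorem~\ref{thm:convergence_varsel_alt}, that the implied weight function $\hat w(\cdot)$ converges in $L_2(\PP_{\cdot\mid T=0})$ to $w^*$ — here condition (i)'s requirement $\epsilon(\hat S^{(k)};w^*) = o_P(1)$ together with the boundedness $|w^*(X) - \eta(\hat S^{(k)};w^*)^\top X_{\hat S}|\le M_2$ a.s. provides the needed consistency plus uniform integrability.

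For the oracle-like term, conditional on the held-out fold it is an average of mean-zero (under Assumption~\ref{assump:unconfound_overlap}, using $\EE[Y(0)-m_0(X)\mid X,T=0]=0$ and $\EE[w^*(X)g(X)\mid T=0]=\EE[g(X)\mid T=1]$) i.i.d.\ terms, so after subtracting and adding $w^*$ and $m_0$ one arrives at $\sqrt n(\hat\theta_0-\theta_0) = \frac{1}{\sqrt n}\sum_{i=1}^n \psi(X_i,T_i,Y_i) + o_P(1)$ with $\psi$ the efficient influence function whose variance is the stated $\sigma_0^2$; the drift from using $\hat w$ in place of $w^*$ is again an $L_2\times L_2$ product bounded via the eigenvalue-controlled norm of $\hat w$ and $\epsilon(\hat S^{(k)};w^*)$. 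The Lindeberg/Lyapunov conditions follow from $\EE[(Y(0)-\bar m(X))^2\mid X]\le M_1$ and $w^*(X)\le M_2$ in (i). Variance estimation: a plug-in estimator $\hat\sigma_0^2$ built from $\hat w_i$, $\hat m(X_i) = \eta(\hat S^{(k)};m_0)^\top X_{i,\hat S}$ (or the fitted selected-OLS predictor) and the fold structure is consistent by the same uniform-integrability arguments, and efficiency is immediate since $\psi$ is the known efficient influence function for $\theta_0=\EE[Y(0)\mid T=1]$ under unconfoundedness.

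The main obstacle I anticipate is the joint control of the balancing-bias cross term: because $\hat S^{(k)}$ is random and possibly large, one cannot appeal to a fixed design, and the weight vector $\hat w$ is the solution of a constrained QP over a random coordinate set, so bounding its relevant norm uniformly requires transferring the population-type eigenvalue bound in (iii) to the empirical Gram matrix $X_{c,\hat S}X_{c,\hat S}^\top$ and then feeding that into a KKT/dual analysis of~\eqref{eq:opt_w2_sel} — essentially replaying the penalized-regression argument behind Theorem~\ref{thm:convergence_varsel_alt} but tracking constants sharply enough to reach the $n^{-1/2}$ scale rather than mere $o_P(1)$. Getting the interplay between $\delta_n$, $\log|\hat S^{(k)}|$, $\xi(\hat S^{(k)})$, and the two approximation errors to simultaneously kill every remainder after multiplying by $\sqrt n$ is the delicate bookkeeping step.
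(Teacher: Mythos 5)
Your proposal is correct and follows essentially the same route as the paper: condition on the held-out fold, kill the exact-balance bias via $\delta_n\|\eta(\hat S^{(k)};m_0)\|_1=o_P(n^{-1/2})$, control the implied weight coefficients through the KKT/dual Lasso-type basic inequality combined with condition (iii), and reduce the remainder to $L_2\times L_2$ products governed by the stated rates before applying the CLT to the efficient influence function. One small simplification relative to your anticipated obstacle: condition (iii) is already stated for the empirical Gram matrix $X_{c,\hat S^{(k)}}$, so no population-to-sample transfer step is needed.
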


\begin{remark}[Finite-sample analysis]
    Although Theorem~\ref{thm:inf_varsel} establishes an asymptotic result, we also present a finite-sample analysis of the cross-balancing estimator in Appendix~\ref{app:subsec_thm_finite}. 
    This analysis gives an explicit error bound in terms of $\delta_n$, $\epsilon(\hat{S}^{(k)},w^*)$, $\epsilon(\hat{S}^{(k)};m_0)$, and $\xi(\hat{S}^{(k)})$, without using $o_P(\cdot)$ terms. 
    These finite-sample results can be used to recover both Theorem~\ref{thm:convergence_varsel_alt} and Theorem~\ref{thm:inf_varsel} as special cases under their respective assumptions.
\end{remark}

In Theorem~\ref{thm:inf_varsel}, condition (i) requires that the product $\epsilon(\hat{S}^{(k)};m_0)\cdot\epsilon(\hat{S}^{(k)};w^*)$ is $o_P(n^{-1/2})$, which quantifies the robustness of cross-balancing to both variable selection errors and model misspecification. 
Notably, this condition does not require that $m_0$ or $w^*$ be exactly linear functions of any subset of variables, nor that $\hat{S}^{(k)}$ include the set of truly relevant variables (if any). 
Instead, it suffices that linear combinations of the selected variables can jointly approximate the prognostic score and the true weights up to the stated product rate.

This has important practical implications for data-driven variable selection in cross-balancing procedures. For example, selecting only outcome-related variables may yield a small $\epsilon(\hat{S}^{(k)};m_0)$ but may not guarantee a sufficiently small product with $\epsilon(\hat{S}^{(k)};w^*)$; analogous issues arise when selecting only propensity-related variables. 
Our results theoretically justify  the practical guidance of~\cite{kainz2017improving}: to select the union of outcome-related and treatment-related variables. 
This strategy is particularly useful when one is concerned with variable selection uncertainty or model misspecification.

The last two assumptions in Theorem~\ref{thm:inf_varsel} address different sources of bias. 
Condition (ii) ensures that the bias in the outcomes from inexact balancing does not undermine root-$n$ inference. Condition (iii) controls the error in approximating the implicit weight model with an imperfect $|\hat{S}|$; it is automatically satisfied if $|\hat{S}|$ remains fixed (or bounded) as $n \to \infty$ and the covariance matrix of $X$ is full rank.
When $|\hat{S}|$ grows with $n$, condition (iii) resembles the restricted eigenvalue or compatibility conditions commonly used in high-dimensional statistics, which we discuss below.
\begin{remark}\label{rem:lasso_conditions}
    In condition (iii), setting $\xi(S) = |S|^{-1}$ recovers the compatibility condition~\citep{van2007deterministic}, which is typically weaker than the restricted isometry property~\citep{candes2006near} and the restricted $\ell_2$-eigenvalue condition~\citep{bickel2009simultaneous} (see also~\cite{van2009conditions} for a comparison). With this choice, condition (iii) is a combination of three requirements: (a) $\frac{1}{n}\|X_{c,\hat{S}^{(k)}}v\|_2^2 \cdot |\hat{S}^{(k)}| \geq \|v\|_1^2$ for all $v \in \mathbb{R}^{|\hat{S}^{(k)}|}$; (b) $|\hat{S}^{(k)}|\log(|\hat{S}^{(k)}|) = o_P(n)$; and (c) $\epsilon(\hat{S}^{(k)};m_0)\cdot |\hat{S}^{(k)}|^{1/2} \cdot \log(|\hat{S}^{(k)}|)^{1/2} = o_P(1)$. Here, (a) ensures that the selected columns of $X_c$ are not excessively correlated, a requirement commonly imposed in high-dimensional linear regression for the true sparse support, yet here applied to the selected features and can be readily checked empirically. Technically, (a) facilitates control of the bias arising from the (implicit) Lasso estimation of weights with a small penalty $\delta_n = o_P(n^{-1/2})$.
    Conditions (b) and (c) restrict the size of $\hat{S}^{(k)}$, which further controls the error in the implicit weight estimation. Under typical rates such as $\epsilon(\hat{S}^{(k)};w^*) = o_P(n^{-1/4})$ and $\epsilon(\hat{S}^{(k)};m_0) = o_P(n^{-1/4})$, condition (c) is satisfied if $|\hat{S}^{(k)}| \log(|\hat{S}^{(k)}|) = O_P(n^{1/2})$. 
\end{remark}

\subsection{Examples of variable selection procedures}
\vspace{-0.2cm}
Theorem~\ref{thm:inf_varsel} treats the variable selection procedure as a black-box, meaning that the selected set $\hat{S}^{(k)}$ produced by any selection procedure that satisfies the required conditions leads to the stated inferential properties. 
To further contextualize our results, we now discuss two concrete examples of selection procedures tailored to low- and high-dimensional settings, respectively, and their impact on the cross-balancing estimator. 
For convenience, we use $\hat{S}=\hat{S}^{(k)}$ to refer to variables selected from $\cI_t\cup\cI_c:=\cI_t^{(k)}\cup \cI_c^{(k)}$. The first example uses linear regression to select variables $\hat{S}$ related to both the prognostic and propensity scores.

\begin{example}[Linear regression]
\label{ex:lr}
     For the prognostic score, define $\hat\Sigma_c = \hat\EE_c[XX^\top]$, and let $\hat\beta = \hat\Sigma_c^{-1} \hat\EE_c[XY]$ be the OLS coefficient of the control outcomes. We select $\hat{S}_m=\{j\colon |\hat\beta_j/\hat\sigma_j| \geq \underline{\sigma}\}$ for some fixed constant $\underline{\sigma}>0$, where $\hat\sigma_j^2$ estimates the variance of $\hat\beta_j$. For example, $\underline{\sigma}=\Phi^{-1}(1-\alpha/2)/\sqrt{n_c}$ for $n_c=|\{i\colon T_i=0\}|$ yields an entry-wise $t$-test. For the propensity score,  
    we set $\hat\lambda = \hat\EE_c[XX^\top]^{-1} \hat\EE_t[X]$ and select $\hat{S}_w = \{j\colon |\hat\lambda_j/\hat\tau_j|>\underline{\tau}\}$ for some constant $\underline{\tau}>0$, where $\hat\tau_j^2$ estimates the variance of $\hat\lambda_j$. Then, the selected variables are $\hat{S}=\hat{S}_m\cup \hat{S}_w$. 
\end{example}

This procedure is justified in Proposition~\ref{prop:sel_lr} under approximately linear models. Its proof is in Appendix~\ref{app:proof_sel_lr}, where the construction of $\hat\sigma_j^2$ and $\hat\tau_j^2$ is detailed.

\begin{prop}\label{prop:sel_lr} 
Suppose $\hat\sigma_j^2$ and $\hat\tau_j^2$ are consistent for the true variances $\sigma_j^2$ and $\tau_j^2$. 
%
Assume there exist subsets $S_m,S_w\subseteq [p]$ such that $m_0(x) = x_{S_m}^\top \beta^* + \epsilon_m(x)$ and $w^*(x) = x_{S_w}^\top \lambda^* + \epsilon_w(x)$ for some vectors $\beta^*\in \RR^{|S_m|},\lambda^*\in \RR^{|S_w|}$ so that $\|\beta^*\|_\infty \geq c_0$ and $\|\lambda^*\|_\infty \geq c_0$ for some constant $c_0>0$. 
    Assume the residuals obey $\|\epsilon_m\|_{L_2}=o_P(n^{-1/4})$ and $\|\epsilon_w\|_{L_2}=o_P(n^{-1/4})$, and the thresholds obey $\underline{\sigma} \leq c_0 / \sigma_j + o_P(n^{-1/4})$ and $\underline{\tau} \leq c_0/\tau_j + o_P(n^{-1/4})$. Then, as $n\to \infty$, the selection set $\hat{S}$ obeys $\epsilon(\hat{S} ;w^*)=o_P(n^{-1/4})$ and $\epsilon(\hat{S};m_0)=o_P(n^{-1/4})$.
\end{prop}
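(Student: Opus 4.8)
The plan is to show that the data-driven threshold rules recover, asymptotically, supersets of the truly relevant variable sets $S_m$ and $S_w$, so that the approximation errors $\epsilon(\hat S; m_0)$ and $\epsilon(\hat S; w^*)$ are controlled by the model residuals $\epsilon_m$ and $\epsilon_w$. The argument has two symmetric halves (one for the outcome-selected set $\hat S_m$, one for the treatment-selected set $\hat S_w$); I would carry out the outcome side in detail and remark that the treatment side is identical after replacing $Y$ by the ``pseudo-outcome'' implicit in the least-squares weight problem. First I would establish that the OLS coefficient vector $\hat\beta$ is close to the population projection coefficient $\beta^*$ supported on $S_m$. Writing $m_0(x) = x_{S_m}^\top\beta^* + \epsilon_m(x)$, the control-sample OLS coefficient decomposes as $\hat\beta = \hat\Sigma_c^{-1}\hat\EE_c[X(X_{S_m}^\top\beta^*)] + \hat\Sigma_c^{-1}\hat\EE_c[X\epsilon_m(X)] + \hat\Sigma_c^{-1}\hat\EE_c[X(Y-m_0(X))]$. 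The first term converges to $\beta^*$ (embedded into $\RR^p$ with zeros off $S_m$), the second is $o_P(n^{-1/4})$ by Cauchy--Schwarz using $\|\epsilon_m\|_{L_2}=o_P(n^{-1/4})$ and the sub-Gaussian moment bound on $X$, and the third is $O_P(n^{-1/2})$ mean-zero noise by the moment assumption $\EE[(Y(0)-\bar m(X))^2\mid X]\le M_1$. Hence $\|\hat\beta - \beta^*\|_\infty = o_P(n^{-1/4})$ entrywise.

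Next I would turn this coefficient estimate into a statement about the selected set. For $j\in S_m$, $|\beta^*_j|\ge c_0$ by assumption, so $|\hat\beta_j| \ge c_0 - o_P(n^{-1/4})$, and since $\hat\sigma_j^2$ is consistent for $\sigma_j^2$ (a fixed positive constant), the ratio $|\hat\beta_j/\hat\sigma_j| \ge c_0/\sigma_j - o_P(n^{-1/4})$; combined with the threshold bound $\underline\sigma \le c_0/\sigma_j + o_P(n^{-1/4})$ this gives $|\hat\beta_j/\hat\sigma_j| \ge \underline\sigma$ with probability tending to one, so $S_m \subseteq \hat S_m$ w.h.p. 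Therefore $S_m \subseteq \hat S$ w.h.p. Since $\eta(\hat S; m_0)$ is the $L_2(\PP_{\cdot\mid T=0})$-best linear fit of $m_0$ using variables in $\hat S$, and $\hat S \supseteq S_m$, we have $\epsilon(\hat S; m_0) \le \|m_0(X) - X_{S_m}^\top\beta^*\|_{L_2(\PP_{\cdot\mid T=0})} = \|\epsilon_m\|_{L_2} = o_P(n^{-1/4})$, where the first inequality uses that $\beta^*$ (padded with zeros) is a feasible coefficient vector over $\hat S$. This is exactly the claimed bound on $\epsilon(\hat S; m_0)$.

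For the weight side, the key observation is that $\hat\lambda = \hat\EE_c[XX^\top]^{-1}\hat\EE_t[X]$ is the finite-sample analogue of the population coefficient $\Sigma_{c}^{-1}\EE_t[X]$, which equals the best linear predictor of $w^*$ in $L_2(\PP_{\cdot\mid T=0})$ because $\EE_c[X w^*(X)] = \EE_t[X]$ by the defining property of the weight function $w^*$ (balancing). Writing $w^*(x) = x_{S_w}^\top\lambda^* + \epsilon_w(x)$, the same three-term decomposition applies: $\hat\lambda = \hat\Sigma_c^{-1}\hat\EE_c[X(X_{S_w}^\top\lambda^*)] + \hat\Sigma_c^{-1}\hat\EE_c[X\epsilon_w(X)] + (\hat\EE_t[X] - \hat\EE_c[Xw^*(X)])$ correction terms; the first converges to $\lambda^*$, the second is $o_P(n^{-1/4})$ by Cauchy--Schwarz with $\|\epsilon_w\|_{L_2}=o_P(n^{-1/4})$, and the remaining sampling fluctuation is $O_P(n^{-1/2})$. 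Thus $\|\hat\lambda-\lambda^*\|_\infty = o_P(n^{-1/4})$, and the same threshold-crossing argument (using $\|\lambda^*\|_\infty \ge c_0$, consistency of $\hat\tau_j$, and $\underline\tau \le c_0/\tau_j + o_P(n^{-1/4})$) gives $S_w \subseteq \hat S_w \subseteq \hat S$ w.h.p., hence $\epsilon(\hat S; w^*) \le \|\epsilon_w\|_{L_2} = o_P(n^{-1/4})$.

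\textbf{Main obstacle.} The delicate part is not the high-level structure but controlling the $p$-dimensional (possibly growing-$p$) inverse-covariance and noise terms uniformly in $j$: one needs the sub-Gaussianity assumption to bound $\max_j |\hat\EE_c[X_j\epsilon_m(X)]|$ and the sampling error in $\hat\Sigma_c^{-1}$ entrywise at rate $o_P(n^{-1/4})$, and to argue that $\hat\Sigma_c$ is invertible with well-behaved smallest eigenvalue on the relevant coordinate block (this is implicit in the full-rank/eigenvalue conditions already invoked elsewhere, e.g. Assumption on $\sigma(S)$ and the eigenvalue condition of Theorem~\ref{thm:inf_varsel}). I would handle this by restricting attention to a suitable low-dimensional regime (as the example heading ``Linear regression'' signals low-dimensional settings) or by importing a standard maximal-inequality bound, and flag that the precise bookkeeping for $\hat\sigma_j^2$, $\hat\tau_j^2$ — including writing down their sandwich-type estimators — is deferred to the appendix as the statement promises.
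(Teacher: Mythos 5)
Your proposal is correct and follows essentially the same route as the paper's proof: show the OLS/weight coefficients converge to $\beta^*$ and $\lambda^*$ on the signal coordinates at rate $o_P(n^{-1/4})$ (using the key identity $\EE_c[Xw^*(X)]=\EE_t[X]$ on the weight side), deduce via the threshold conditions that $S_m\subseteq\hat S_m$ and $S_w\subseteq\hat S_w$ with probability tending to one, and conclude by monotonicity of $\epsilon(\cdot\,;f)$ in the selected set that $\epsilon(\hat S;m_0)\le\|\epsilon_m\|_{L_2}$ and $\epsilon(\hat S;w^*)\le\|\epsilon_w\|_{L_2}$. The only difference is bookkeeping: the paper routes through the population projections $\bar\beta=\Sigma_c^{-1}\EE_c[XY]$ and $\bar\lambda=\Sigma_c^{-1}\EE_t[X]$ with explicit influence-function expansions (needed to construct $\hat\sigma_j^2$ and $\hat\tau_j^2$), whereas you expand directly around $\beta^*$ and defer the variance estimators, which is equivalent in the fixed-$p$ setting of this example.
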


In Proposition~\ref{prop:sel_lr}, under approximately linear outcome  and propensity models, as long as the significant thresholds $\bar\sigma$ and $\bar\tau$ decay properly, such as $\bar\sigma=o_P(1)$ and $\bar\tau=o_P(1)$ as $n\to \infty$, we achieve the desired conditions for the selection set. 

As our second example, we consider the high-dimensional setting where the dictionary of (transformed) features is of a large size. In such scenarios, a natural idea is to use the  Lasso~\citep{tibshirani1996regression} for high-dimensional variable selection. 

\begin{example}[Lasso]\label{ex:lasso}
    We consider Lasso estimators $\hat\beta = \argmin_{\beta} \frac{1}{n_c}\sum_{i\in \cI_c} (Y_i-X_i^\top \beta)^2 + \nu_1 \|\beta\|_1$ and $\hat\lambda = \argmin_{\lambda} (\lambda^\top \hat\Sigma_c \lambda - 2\lambda^\top \hat\EE_t[X] + \nu_2 \|\lambda\|_1)$ where $\hat\Sigma_c = \frac{1}{n_c}\sum_{i\in\cI_c} X_iX_i^\top$ and $\hat\EE_t[X]=\frac{1}{n_t}\sum_{j\in \cI_t}X_j$. Then, we select  $\hat{S}_m=\{j\colon \hat\beta_j\neq 0\}$ and $\hat{S}_w = \{j\colon \hat\lambda_j \neq 0\}$. 
\end{example}

Proposition~\ref{prop:sel_lasso} justifies setting $\hat{S} = \hat{S}_m \cup \hat{S}_w$ to approximately recover a powerful set of  features and the needed conditions are relatively weak. The proof is in Appendix~\ref{app:proof_sel_lasso}. 
 
As preparation, we say the compatibility condition~\citep{buhlmann2011statistics} with constant $\phi_0$ holds for a subset $S\subseteq \{1,\dots,p\}$ and data matrix $X \in \RR^{n\times p}$ if for any $\beta\in \RR^p$ obeying $\|\beta_{S^c}\|_1\leq 3\|\beta_S\|_1$, it holds that $ \|\beta_S\|_1^2 \leq (|S|/\phi_0^2) \|X\beta\|_2^2/n$. 

\begin{prop}\label{prop:sel_lasso}
Suppose each entry of $X$ is sub-Gaussian, and $\max_{j=1}^p|X_j|\leq M_1$ almost surely. 
Suppose there exists subsets $S_m, S_w\subseteq \{1,\dots,p\}$ with $s_m = |S_m|$, $s_w = |S_w|$, and constant vectors $\beta^*, \lambda^*\in \RR^p$ such that $\beta^*_{S_m^c}=0$ and $\lambda^*_{S_w^c}=0$, and $m_0(x) = x^\top \beta^* + \epsilon_m(x)$, $w^*(x)=x^\top \lambda^* + \epsilon_w(x)$ for some residual functions $\epsilon_m(\cdot)$ and $\epsilon_w(\cdot)$. Let $\hat{S}=\hat{S}_m\cup \hat{S}_w$ be obtained with penalties $\nu_1,\nu_2>0$ as in Example~\ref{ex:lasso}, and assume the following conditions: 
\begin{enumerate}[label=(\roman*)]
\setlength\itemsep{-0.5em}
    \item $\max\{|Y-m_0(X)|,w^*(X),|\epsilon_m(X)|,|\epsilon_w(X)|\}\leq M_2$ for some constant  $M_2>0$. 
    \item For any sufficiently small $\varepsilon>0$, there exists a sample size $n>0$ such that $\nu_1 \geq 4(\|\EE_c[X_j\epsilon_m(X)]\|_\infty + 2M_2 \sqrt{\frac{2\log(2p/\varepsilon)}{n_c}})$ and $\nu_2 \geq 4(\|\EE_c[X_j\epsilon_w(X)]\|_\infty + 2M_2 \sqrt{\frac{2\log(2p/\varepsilon)}{n_c}})$.
    \item The compatibility condition holds for data $X$ and subsets $S_m$, $S_w$ with constant $\phi_0$.
\end{enumerate}
Then, 
$
\epsilon(\hat{S};m_0) = \Omega(\max\{\|\epsilon_m\|_{L_2}, |S_m|\nu_1\})
$, 
$
\epsilon(\hat{S};w^*) = \Omega(\max\{\|\epsilon_w\|_{L_2}, |S_w|\nu_2\})
$ 
as $n\to \infty$. 
\end{prop}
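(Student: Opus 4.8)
The conclusion is a lower bound, so the plan is to exploit that for nonnegative quantities $A,B$ one has $Z = \Omega(\max\{A,B\})$ if and only if $Z = \Omega(A)$ and $Z = \Omega(B)$ hold simultaneously (since $\max\{A,B\} \le A+B \le 2\max\{A,B\}$, each of $A,B$ is dominated by, and dominates a constant multiple of, the max). I would therefore split the target $\epsilon(\hat{S};m_0) = \Omega(\max\{\|\epsilon_m\|_{L_2}, s_m\nu_1\})$ into the two separate lower bounds $\epsilon(\hat{S};m_0) = \Omega(\|\epsilon_m\|_{L_2})$ and $\epsilon(\hat{S};m_0) = \Omega(s_m\nu_1)$, and prove each in turn. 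The bound for $w^*$ is structurally identical, with $(\epsilon_w, S_w, s_w, \nu_2, \hat\lambda)$ replacing $(\epsilon_m, S_m, s_m, \nu_1, \hat\beta)$ and the control outcome regression replaced by the implicit weight regression of Example~\ref{ex:lasso}, so I would prove the $m_0$ statement and transcribe.

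For the first piece, the key observation is monotonicity of the approximation functional: enlarging the index set can only decrease the infimum defining $\epsilon(\cdot\,;m_0)$, so $\epsilon(\hat{S};m_0) \ge \epsilon([p];m_0)$ for every realization of $\hat{S}$. Reading the decomposition $m_0(x)=x^\top\beta^*+\epsilon_m(x)$ with $\beta^*$ as the population best linear predictor of $m_0$ on the full dictionary under $\PP_{\cdot\mid T=0}$ -- the sparse vector with support $S_m$ posited in the statement -- the residual $\epsilon_m$ is orthogonal to $\mathrm{span}(X)$ in $L_2(\PP_{\cdot\mid T=0})$, so $\epsilon([p];m_0)=\|\epsilon_m\|_{L_2}$ and hence $\epsilon(\hat{S};m_0)\ge \|\epsilon_m\|_{L_2}$ deterministically, giving $\Omega(\|\epsilon_m\|_{L_2})$ with constant one. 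If instead $\beta^*$ is only a sparse approximant lacking exact population orthogonality, I would lower-bound $\epsilon(\hat{S};m_0)$ by the full-dictionary projection residual and absorb the in-sample orthogonality defect through the quantity $\|\EE_c[X_j\epsilon_m(X)]\|_\infty$ that already controls the penalty level in condition (ii).

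The second piece, $\epsilon(\hat{S};m_0)=\Omega(s_m\nu_1)$, is the crux and the step I expect to be the main obstacle. The difficulty is intrinsic: because $\epsilon(\hat{S};m_0)=\inf_\eta\|m_0-\eta^\top X_{\hat{S}}\|_{L_2(\PP_{\cdot\mid T=0})}$ re-optimizes the coefficient vector freely over the selected support, the Lasso's shrinkage of the \emph{retained} coefficients contributes nothing to this residual, and the entire lower bound must be produced by the coordinates of $S_m$ that the Lasso \emph{omits}. My plan is to read off these omissions from the stationarity (KKT) conditions at $\hat\beta$: for each $j\notin\hat{S}_m$ the subgradient condition forces $|\tfrac{1}{n_c}\sum_{i\in\cI_c}X_{ij}(Y_i-X_i^\top\hat\beta)|\le \nu_1/2$, while the penalty level prescribed in condition (ii) dominates the stochastic correlation $\|\EE_c[X_j\epsilon_m(X)]\|_\infty + M_2\sqrt{\log(2p)/n_c}$; together these quantify the $\ell_1$ mass of $\beta^*$ that fails to enter $\hat{S}$. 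I would then convert this $\ell_1$ deficiency into an $L_2(\PP_{\cdot\mid T=0})$ lower bound on $\epsilon(\hat{S};m_0)$ via the compatibility condition (iii) with constant $\phi_0$, which ties $\ell_1$ norms on the relevant cone to the prediction seminorm. The genuinely hard point is that KKT delivers only an \emph{upper} bound on each omitted correlation, whereas I need a \emph{lower} bound on the residual; making this rigorous therefore requires an anti-concentration/irreducibility step showing that a portion of the $S_m$-signal of total $\ell_1$ mass $\Omega(s_m\nu_1)$ genuinely lies outside $\mathrm{span}(X_{\hat{S}})$ and cannot be recovered by re-fitting over $\hat{S}$ -- combining the beta-min consequences of the chosen $\nu_1$ with the compatibility lower eigenvalue to push $X_{S_m}^\top\beta^*$ away from $\mathrm{span}(X_{\hat{S}})$. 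I expect this irreducibility step, rather than any of the routine sub-Gaussian concentration bounds that certify the penalty level of condition (ii), to be where the real work lies.
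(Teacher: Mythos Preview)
You have misread the direction of the bound. Although the paper writes the conclusion with the symbol $\Omega$, the surrounding discussion and the actual proof make clear that the proposition delivers an \emph{upper} bound on the approximation error: $\epsilon(\hat S;m_0)\le C\max\{\|\epsilon_m\|_{L_2},\,s_m\nu_1\}$ and likewise for $w^*$. This is also the only direction that is useful downstream, since Theorem~\ref{thm:inf_varsel} requires $\epsilon(\hat S;m_0)\cdot\epsilon(\hat S;w^*)=o_P(n^{-1/2})$; a lower bound would be of no help there. Your ``second piece'' struggles precisely because the lower bound $\epsilon(\hat S;m_0)\ge c\,s_m\nu_1$ is in fact false in general: under exact sparse linearity ($\epsilon_m\equiv 0$) with support recovery, $\epsilon(\hat S;m_0)=0$ while $s_m\nu_1>0$. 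The difficulty you identify (``KKT gives only an upper bound on omitted correlations, whereas I need a lower bound on the residual'') is a correct diagnosis that you are chasing the wrong inequality.

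The paper's argument is the standard Lasso prediction oracle inequality. Using optimality $\hat L(\hat\beta)\le \hat L(\beta^*)$ together with the high-probability event on which $\nu_1$ dominates $\|\hat\Sigma_c\beta^*-\hat\EE_c[XY]\|_\infty$ (certified by condition~(ii) via sub-Gaussian concentration), one obtains the cone condition $\|(\hat\beta-\beta^*)_{S_m^c}\|_1\le 3\|(\hat\beta-\beta^*)_{S_m}\|_1$, and then the compatibility condition~(iii) converts the resulting basic inequality into $\|\hat\beta-\beta^*\|_1\le C\,s_m\nu_1/\phi_0^2$. The final step is the key monotonicity you did identify, but used in the opposite direction: since $\hat\beta$ is supported on $\hat S_m\subseteq\hat S$, it is feasible for the infimum defining $\epsilon(\hat S;m_0)$, so
\[
\epsilon(\hat S;m_0)\;\le\;\|m_0(X)-X^\top\hat\beta\|_{L_2}\;\le\;\|\epsilon_m\|_{L_2}+M_1\|\hat\beta-\beta^*\|_1\;\le\;\sqrt{2}\,\|\epsilon_m\|_{L_2}+C' s_m\nu_1,
\]
using $\max_j|X_j|\le M_1$. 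The weight side is identical after rewriting the implicit regression $\hat\lambda=\argmin_\lambda\{\lambda^\top\hat\Sigma_c\lambda-2\lambda^\top\hat\EE_t[X]+\nu_2\|\lambda\|_1\}$ and noting $\EE_c[Xw^*(X)]=\EE_t[X]$.
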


There are several key approximation error quantities in Proposition~\ref{prop:sel_lasso}. Besides the functional approximation error $\|\epsilon_m\|_{L_2}$ and $\|\epsilon_w\|_{L_2}$ which enter directly as a term in $\epsilon(\hat{S};m_0)$ and $\epsilon(\hat{S};w^*)$, two other key terms are $\nu_1|S_m|$ and $\nu_2|S_w|$, which are also be small under mild conditions.  
First, $|S_m|$ and $|S_w|$ are small under approximately sparse models. Regarding $\nu_1$, taking the outcome model as an example, the ideal choice is $\nu_1\asymp \|\EE_c[X_j\epsilon_m(X)]\|_\infty + \sqrt{\log (p)/n}$. If the entries in $X$ are independent and $\epsilon_m(x)$ is only a function of features in $S_m^c$, the final approximation error is $\epsilon(\hat{S};m_0)=O(\|\epsilon_m\|_{L_2} + |S_m|\sqrt{\log(p)/n})=o_P(n^{-1/4})$ given that the residual has an $L_2$-norm of $o_P(n^{-1/4})$ and $|S_m|=o_P(n^{-1/4}\sqrt{\log (p)})$.  
In general, we always have $\|\EE_c[X_j\epsilon_m(X)]\|_\infty \leq \|\epsilon_m\|_{L_2}$, which translates to $\epsilon(\hat{S};m_0) = \Omega(\|\epsilon_m\|_{L_2}|S_m| + |S_m|\sqrt{\log(p)/n})$ with an ideal choice of $\nu_1\asymp \|\epsilon_m\|_{L_2} + \sqrt{\log (p)/n}$. Again, we emphasize that we do not require the true models to be exactly sparse linear for these results to hold.



\section{Simulation study}
\vspace{-.2cm}
\label{sec:simu}

We evaluate the performance of cross-balancing through various simulation study designs. 
Section~\ref{subsec:simu_fitted} examines the variant introduced in Section~\ref{sec:fitted}, where the balancing covariates are predictions from machine learning models. 
Section~\ref{subsec:simu_varsel} focuses on the variant introduced in Section~\ref{section_cross2}, where the balancing covariates are selected from a larger dictionary.

\subsection{Evaluation of cross-balancing with learned features}
\vspace{-.2cm}
\label{subsec:simu_fitted}

We first evaluate the variant where the features are learned, random functions. 
To assess the roles of the training algorithms and the data-generating process (DGPs), we consider two broad settings: 
(i) \emph{Low-dimensional setting,} with nonlinear regression functions (Section~\ref{subsubsec:simu_fitted_lowd}) fit by random forests in the \texttt{grf} R package~\citep{athey2019generalized}.  
(ii) \emph{High-dimensional setting,} where the true functions depend on a sparse subset of variables (Section~\ref{subsubsec:simu_fitted_highd}), fit by cross-validated Lasso in the \texttt{glmnet} R package~\citep{hastie2014glmnet}.  
 
Following Theorem~\ref{thm:inference}, throughout this section we define $\hat{m}(x)$ as an estimated prognostic score for $\EE[Y(0)\mid X=x]$. 
We also define $\hat{w}(\cdot)\propto \hat{e}(\cdot)/(1-\hat{e}(\cdot))$ as a preliminary weight, where $\hat{e}(\cdot)$ is a fitted propensity score model for $e(x)$. 
We compare seven methods.   
Two cross-balancing procedures: (i) \texttt{Cross\_Bal}: the method introduced in Section~\ref{sec:fitted}, with $\hat\phi(x) = (\hat{m}(x), \hat{w}(x))$; (ii) \texttt{Cross\_Bal\_prog}: the same as above but with $\hat\phi(x) = \hat{m}(x)$, i.e., only the estimated prognostic score, included to show the role of the propensity score in our procedure. 
Two non-sample-splitting counterparts: (iii) \texttt{Naive\_Bal}: using all data to fit $\hat{m}(x), \hat{w}(x)$ and to obtain balancing weights without sample splitting; (iv) \texttt{Naive\_Bal\_prog}: the non-sample-splitting version of \texttt{Cross\_Bal\_prog}. 
Also, (v) \texttt{AIPW}: the cross-fitting version of the AIPW estimator~\citep{robins1994estimation} as in~\cite{chernozhukov2018double}, with the same sample splitting scheme and same fitted outcome models and preliminary weights; we include this method to examine bias reduction properties (Proposition~\ref{prop:bias_reduc}). 
Finally, two oracle methods: (vi) \texttt{Oracle\_CBal}: the oracle version with $\hat\phi(x) = (m_0(x), w^*(x))$, which describes cross-balancing in the absence of estimation error; 
(vii) \texttt{Oracle\_AIPW}: the AIPW estimator using $(m_0(x), w^*(x))$, which serves as a benchmark for the highest achievable performance. 

We evaluate estimators by bias, standard deviation, and root mean squared error (MSE). 
For inference, we construct confidence intervals (CIs) using two approaches. (i) Bootstrap: we apply the procedures to $B=1000$ resamples with replacement from the original data, and define the CI by the empirical $\alpha/2$-th and $(1-\alpha/2)$-th quantiles of the bootstrap estimates. (ii) Wald-type: we estimate the asymptotic variance in Theorem~\ref{thm:inference} and construct CIs by normal approximation.
All simulations are repeated independently $N=500$ times. 

\subsubsection{Low-dimensional settings}
\vspace{-.25cm}
\label{subsubsec:simu_fitted_lowd}

 Details of the DGPs  in the low-dimensional setting are in Appendix~\ref{app:dgp_fitted_lowd}; we outline the design ideas here.
The first is the well-known example in~\cite{kang2007demystifying}, which highlights the challenges of extreme propensity scores when building weights. 
 Settings 2–5 follow~\cite{chattopadhyay2020balancing}, with logistic propensity scores and linear or nonlinear prognostic scores, under strong or weak overlap. 
Random forests are expected to adequately recover the prognostic and propensity scores, but may not converge at a parametric rate. 

\begin{figure}[htbp]
    \centering
    \includegraphics[width=\linewidth]{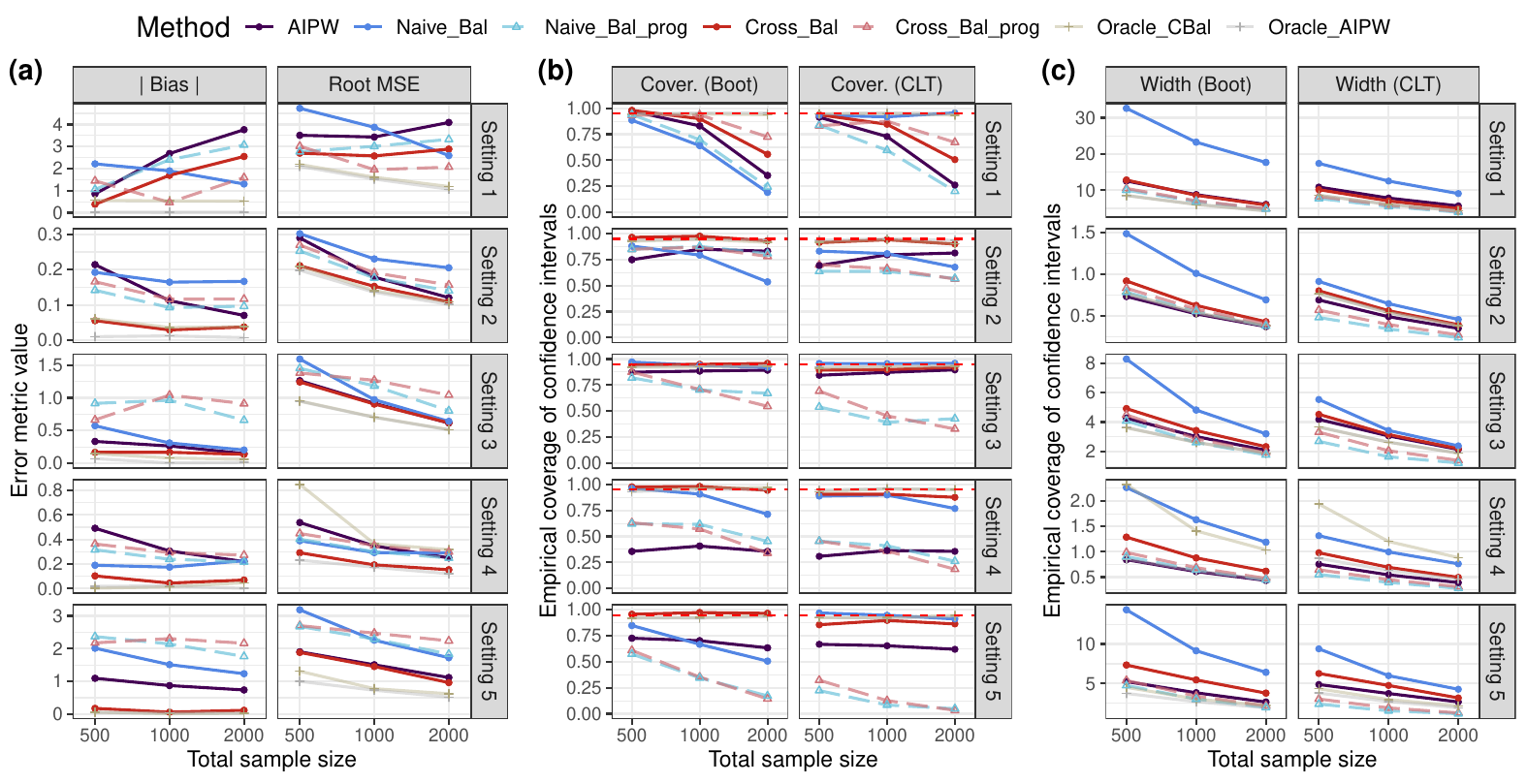}
    \caption{{\small Estimation and inference performance of cross-balancing and other methods using learned features in low-dimensional settings. 
Each row corresponds to a simulation setting, and each column to a performance metric. 
\textbf{(a)}: estimation bias and root mean squared error (RMSE) for each method. 
\textbf{(b)}: empirical coverage of confidence intervals using the bootstrap (Boot) and Wald-type (CLT) inference. 
\textbf{(c)}: average lengths of both types of confidence intervals.}}
    \label{fig:err_lowd}
\end{figure}
Figure~\ref{fig:err_lowd} displays the performance of each method across various sample sizes. 
Panel (a) shows that cross-balancing achieves the lowest estimation error among all non-oracle methods, frequently approaching the performance of their oracle counterparts. 
When comparing \texttt{Cross\_Bal} with \texttt{Naive\_Bal}, we observe that sample splitting substantially improves accuracy, particularly for complex models such as random forests with slow convergence rates. 
The comparison of \texttt{Cross\_Bal} and \texttt{AIPW} further highlights a meaningful reduction in bias (see Proposition~\ref{prop:bias_reduc}), and underscores the benefit of explicitly pursuing in-sample balance. 
Finally, compared to methods that rely solely on the estimated prognostic score, incorporating a propensity score model improves accuracy by reducing the product error.

Panels (b) and (c) display the coverage and lengths of the CIs. 
In the most challenging Setting 1, coverage  declines for all methods as sample size grows as random forests exhibit  larger bias due to extreme propensity score.  
Nevertheless, \texttt{Cross\_Bal} maintains the highest coverage.
In the remaining four settings, \texttt{Cross\_Bal} attains nominal coverage and produces comparatively short CIs.
By contrast, \texttt{Naive\_Bal} exhibits low coverage even with wider CIs, which highlights the importance of sample splitting for valid inference. 
Methods that rely solely on the prognostic score exhibit substantial under-coverage, underscoring the importance of incorporating a propensity score model for robustness.
Finally, in Settings 4-5 (weak overlap), the decline in estimation quality (particularly the propensity score) results in lower coverage for \texttt{AIPW}, further highlighting the advantage of direct covariate balance.

\subsubsection{High-dimensional settings}
\vspace{-.25cm}
\label{subsubsec:simu_fitted_highd}
 
In this part, we use four DGPs with high dimensions ($d=100$) and apply the Lasso to fit the prognostic and propensity score models. The DGPs are described in detail in Appendix~\ref{app:dgp_fitted_highd}, and we outline the ideas here. 
In Setting 1, both the prognostic score and the propensity score are sparse linear functions.
In Setting 2, the prognostic score is sparse linear while the propensity score is sparse linear in certain transformations of the features.
In Setting 3, the prognostic score is sparse linear in transformed features, and the propensity score is sparse linear in the raw features.
In Setting 4, both scores are approximately sparse linear in the raw features, each with a small additive nonlinear term. 
Since the Lasso produces a linear fit, these DGPs investigate the robustness of cross-balancing to model misspecification.

The performance of the various methods is summarized in Figure~\ref{fig:err_highd}. 
In panel (a), we again observe that \texttt{Cross\_Bal} achieves the lowest error among the methods considered. 
Consistent with the preceding part, incorporating an additional propensity score model and enforcing in-sample balance consistently improve performance. 
However, the gap between \texttt{Cross\_Bal} and \texttt{Naive\_Bal} is less pronounced here, which we attribute to the stable estimation of the Lasso. 
With sparse linear models (Setting 2), the prognostic model tends to be estimated more accurately than the propensity model, resulting in a smaller performance gap.
 
As shown in panels (b) and (c), 
\texttt{Cross\_Bal} achieves good coverage with relatively narrow confidence intervals (CIs) under both bootstrap and Wald-type inference. 
In contrast, the two prognostic-score-only methods consistently undercover, highlighting the advantages of incorporating the propensity score model for valid inference.
\texttt{AIPW} also undercovers in Settings 1, 3, and 4, though it attains the best coverage in Setting 2, yet with much wider CIs.
Lastly, while \texttt{Naive\_Bal} provides reasonable coverage, it typically results in wider CIs.

\begin{figure}[htbp]
    \centering
    \includegraphics[width=\linewidth]{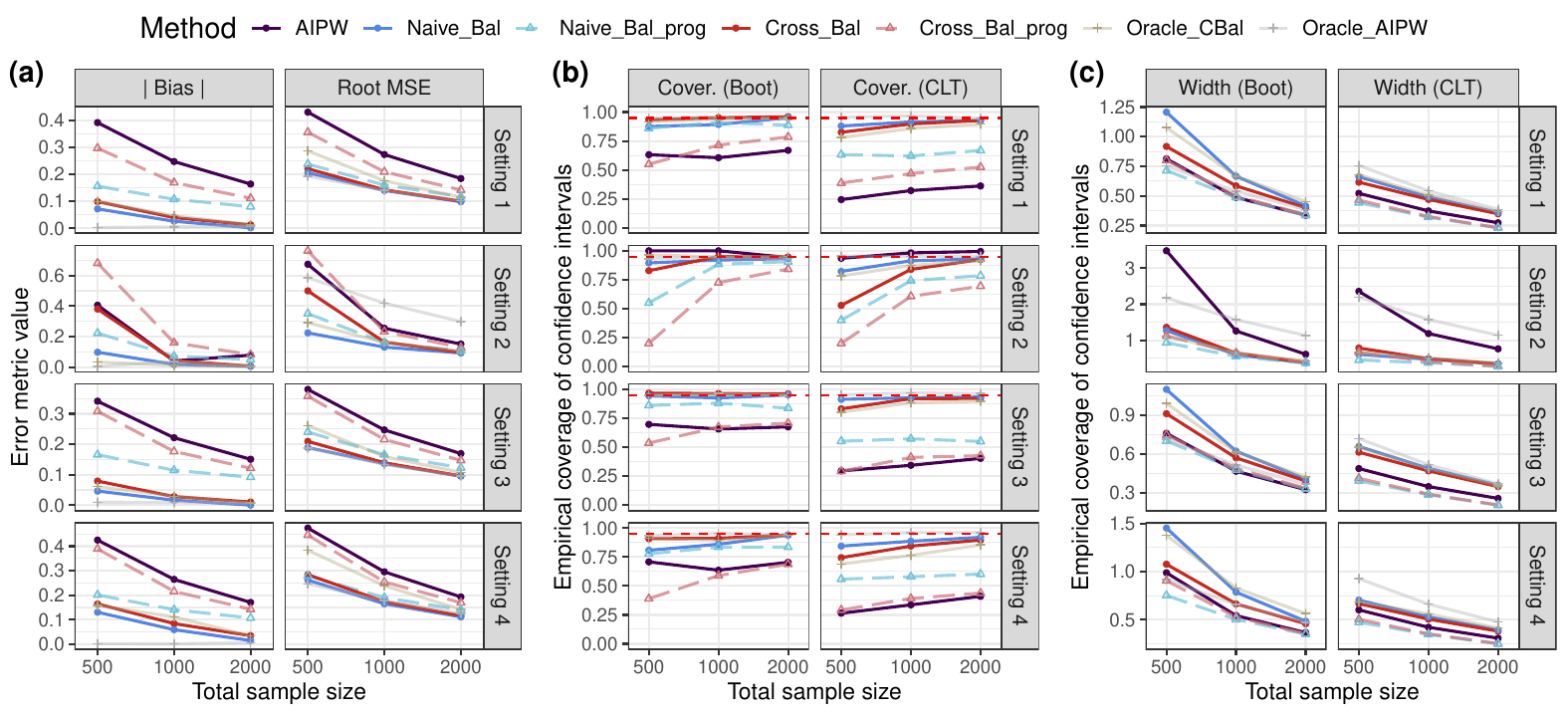}
    \caption{{\small Estimation and inference performance of cross-balancing and other methods using learned features in high-dimensional settings. All other details are as described in Figure~\ref{fig:err_lowd}.}}
    \label{fig:err_highd}
\end{figure}

\subsection{Evaluation of cross-balancing with selected variables}
\label{subsec:simu_varsel}
\vspace{-.2cm}

We next study the variant in Section~\ref{section_cross2} where balancing covariates are selected from a large dictionary of features.
We consider two settings where variable selection can be helpful: (1) \emph{Selection from raw features} (Section~\ref{subsubsec:simu_varsel_raw}).
Among high-dimensional ($d=100$) raw features, the investigator believes that only a subset is relevant. 
Selecting a smaller set of balancing covariates improves robustness by reducing the number of balancing constraints, and enhances interpretability. (2) \emph{Selection from a basis expansion} (Section~\ref{subsubsec:simu_varsel_basis}).  
Given a handful of raw features ($d=10$), to increase model expressiveness, one includes nonlinear transformations of them and yields a much higher-dimensional set of basis functions. 
Selecting the most relevant ones to balance  improves expressiveness, interpretability, and robustness.  

We assess cross-balancing when combined with different variable selection strategies.
To streamline the exposition, we focus only on cross-balancing-based methods and Wald-type CIs.
We compare four estimators:
(i) \texttt{Cross\_Bal}: In one fold of data, we select variables relevant for the prognostic and propensity score models, and take their union as the  balancing covariates for creating weights in the other fold, and vice versa; (ii) \texttt{Outc\_Only}: Only variables selected in the prognostic model are used in balancing; (iii) \texttt{Prop\_Only}: Only variables selected in the propensity model are used in balancing; (iv) \texttt{SBW}: Stable balancing weights, using all the raw features or their transformations without variable selection.
Methods (i)-(iii) are combined with three widely-used variable selection strategies, without any parameter tuning: 
(a) \texttt{lm}: linear regression, where a (generalized) linear model is fitted for the outcome or treatment using all features, and variables with $t$-test p-values less than 0.05 are selected; (b) \texttt{aic}: forward stepwise regression with the AIC criterion from the \texttt{stats} R library, where the selection starts from an empty model and iteratively adds variables from the dictionary; (c) \texttt{lasso}: cross-validated-Lasso (for the prognostic score) or GLM-Lasso (for the propensity score), using the \texttt{glmnet} R package with the \texttt{1se} rule, and variables with nonzero fitted coefficients are selected. 
We assess estimation accuracy, the average number of selected balancing covariates, and the coverage and width of the CIs. 
Across all settings, we set the total sample size $n=1000$, tolerance  $\delta = 0.01$, and enforce positive weights in the \texttt{sbw} R package\footnote{Experiments without enforcing positive weights yield qualitatively similar results; we show results with positive weights to investigate the robustness of our theoretical results.}.  
Results are averaged over $N=500$ independent runs.

\subsubsection{Selection from original variables}
\label{subsubsec:simu_varsel_raw}
\vspace{-.25cm}

In this section, we specify three DGPs in which only a small subset of the $d = 100$ raw features in $X$ are relevant for the prognostic and propensity scores; however, precisely recovering them is challenging, leading to nontrivial variable selection error. Detailed descriptions of all DGPs are in Appendix~\ref{app:dgp_varsel_raw}, and we provide the design ideas here.
The first DGP is adapted from~\cite{kang2007demystifying}, but with the true prognostic and propensity scores as linear and log-linear functions, plus a small nonlinear additive component so perfect model recovery is unattainable due to non-linearity. 
The second DGP specifies both models as linear in a subset of the features, but includes two features in both models that have a very weak signal in the prognostic score. Thus, variable selection methods often fail to identify them in the prognostic model. 
The third setting is inspired by~\cite{belloni2014inference} where both scores combine several strong linear signals and quadratically decaying linear terms of other features; thus, variable selection error persists for the variables with weak signals.

The performance of all competing methods is summarized in Figure~\ref{fig:simu_varsel_raw}. 
Compared to \texttt{SBW}, applying variable selection with \texttt{Cross\_Bal} yields substantially more parsimonious models, greater estimation accuracy, and narrower CIs. 
Among the balancing methods incorporating variable selection, \texttt{Cross\_Bal} delivers the most accurate estimates and reliable inference. 
In contrast, \texttt{Outc\_Only} and \texttt{Prop\_only} often omit important variables, resulting in higher estimation errors. 
Specifically, in settings 1 and 3, the propensity-only approach suffers from elevated error and poor coverage, while in setting 2, \texttt{Outc\_Only} performs worse (as discussed, in this setting important variables can be missed due to weak signals in the outcome model).  
Comparing variable selection strategies, \texttt{lasso} produces the most concise models, typically reducing the number of features from $d=100$ to ten or fewer without compromising accuracy.
However, this advantage is observed most clearly when combined with \texttt{Cross\_Bal}.
The \texttt{aic} also performs well overall, but tends to select larger models.

\begin{figure}
    \centering
    \includegraphics[width=0.9\linewidth]{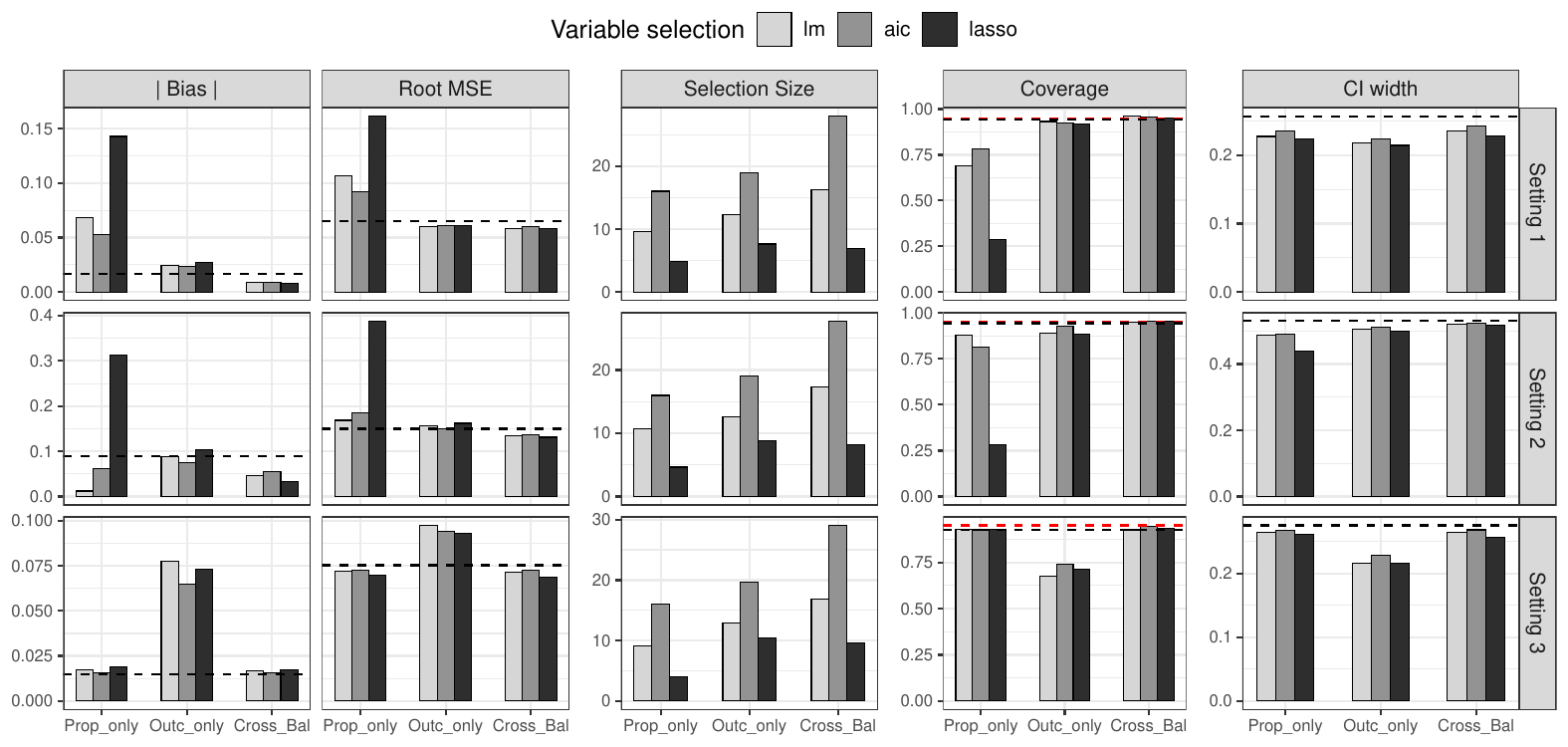}
    \caption{
    {\small Performance of cross-balancing for covariate selection from high-dimensional raw features. Each column displays a different performance metric, and each row corresponds to a simulation setting. Within each subplot, results are shown for three estimators combined with three variable selection strategies. The performance of \texttt{SBW} (which does not involve variable selection) is indicated by a dashed horizontal line.}}
    \label{fig:simu_varsel_raw}
\end{figure}

\subsubsection{Selection from a dictionary of basis functions}
\label{subsubsec:simu_varsel_basis}
\vspace{-.25cm}

We next consider settings where transformations (basis functions) of a few raw features are generated, and selection is performed from this expanded set before balancing.
We specify three settings inspired by~\cite{kang2007demystifying}, where prognostic and/or propensity scores may be complex nonlinear functions of the raw features (details in Appendix~\ref{app:dgp_varsel_basis}). 
In the first setting, the true prognostic and propensity scores are linear in certain transformations of the raw features that differ from, but can be approximated by, the basis functions. 
For settings 2-3, we modify either the prognostic or propensity score in setting 1 to be linear in several basis functions of the raw features, with a small nonlinear additive term. 
Here, the true models are not exactly in the span of the basis functions but can be well approximated. The goal is to select a small subset of relevant basis functions to improve expressiveness while preserving interpretability.
More specifically, starting with the raw features, we construct a set of basis functions that includes: (i) all raw features $X_j$, their squares $X_j^2$, and cubes $X_j^3$; and (ii) cubic splines for each individual raw feature, with 10 equally spaced knots. 
We note that these basis functions may exhibit considerable dependence. 
The same variable selection and estimation methods as in Section~\ref{subsubsec:simu_varsel_raw} are then applied.

Figure~\ref{fig:simu_varsel_basis} presents the performance of all methods. 
The large dictionary of basis functions poses substantial challenges for \texttt{sbw} (dashed line), resulting in much higher estimation error and poorer coverage compared to \texttt{Cross\_Bal}. 
Among the methods with variable selection, \texttt{Cross\_Bal}, which takes the union of variables associated with the outcome and the propensity score, performs the best. 
\texttt{Outc\_Only} also yields competitive performance, whereas \texttt{Prop\_only} results in poorer estimates. 
We also observe that these variable selection methods generally require stronger signals in the propensity model than in the prognostic model to recover relevant features.
When comparing the three variable selection strategies, \texttt{lm} proves to be less effective, potentially due to the dependence among basis functions and the limitations of its heuristic approach. 
In contrast, both \texttt{aic} and \texttt{lasso} produce concise sets of selected variables, leading to accurate estimation and valid Wald-type inference.

\begin{figure}
    \centering
    \includegraphics[width=0.9\textwidth]{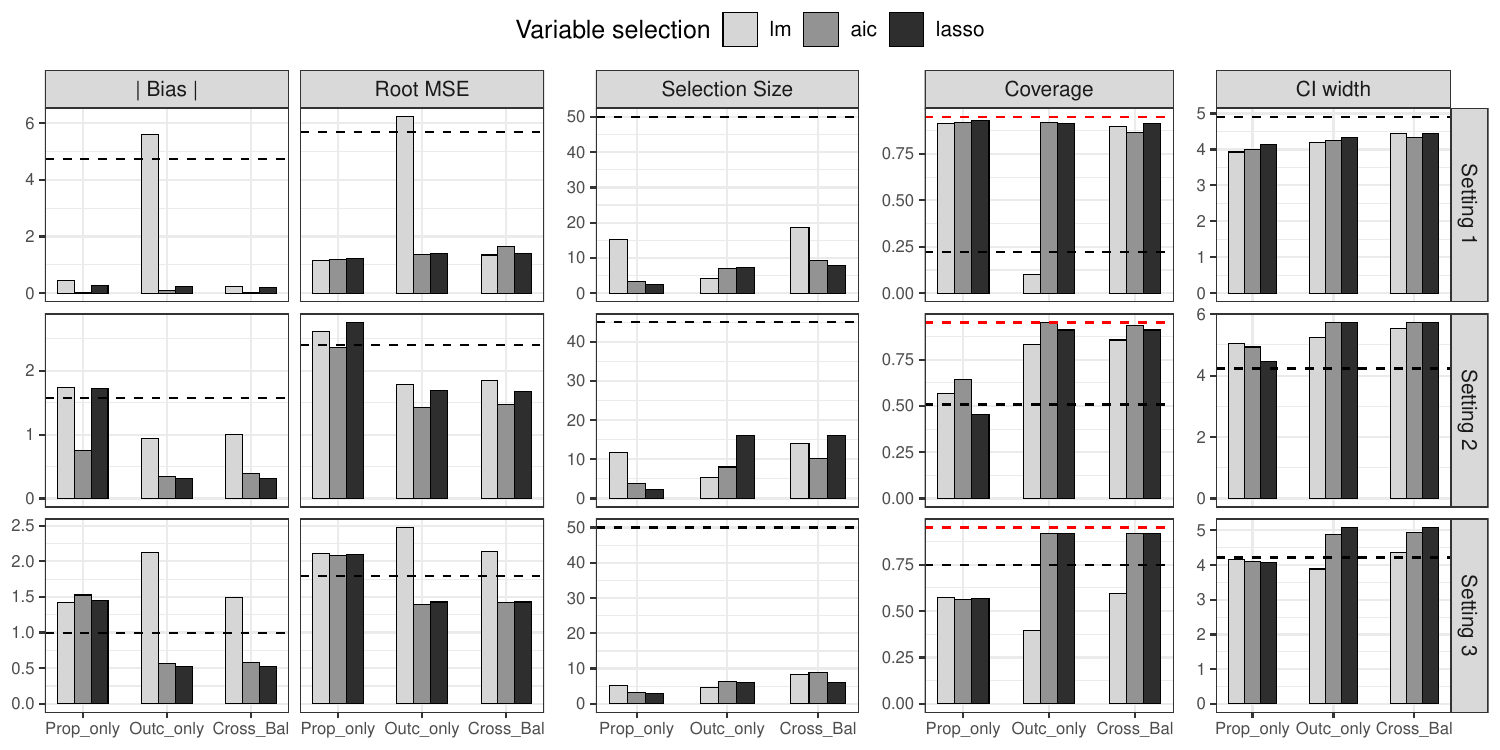}
    \caption{{\small Performance of cross-balancing and other methods when selecting from cubic spline basis functions across all settings in Section~\ref{subsubsec:simu_varsel_basis}. Details are otherwise the same as in Figure~\ref{fig:simu_varsel_raw}.}}
    \label{fig:simu_varsel_basis}
\end{figure} 

To further examine the quality and role of variable selection, we include Figure~\ref{fig:approx_err} in Appendix~\ref{app:subsec_simu_varsel} to show the approximation errors $\epsilon(\hat{S}^{(k)}; m_0)$, $\epsilon(\hat{S}^{(k)}; w^*)$, and their product across all settings in this section, where $\hat{S}^{(k)}$ denotes the selected variables for each of the three estimators, \texttt{Cross\_Bal}, \texttt{Outc\_Only}, and \texttt{Prop\_Only}. 
The product of these errors reflects the approximation error in cross-balancing that arises from imperfect modeling. 
As expected, \texttt{Cross\_Bal} achieves the smallest approximation error for a given variable selection strategy, since its selected set includes that of the others. 
Outcome approximation error is larger when using propensity-selected variables alone, and vice versa. 
Crucially, combining the sets leads to a substantial reduction in the product approximation error, which directly affects the accuracy of the final estimator. 
Among selection strategies, both \texttt{aic} and \texttt{lasso} tend to yield better approximation quality, and therefore we recommend their use in practice.

\section{Observational study}
\label{section_case}
\vspace{-.25cm}


In this section, we demonstrate the use of cross-balancing in an observational study of the effect of binge drinking on high blood pressure \citep{rosenbaum2023second}. The data come from the US National Health and Nutrition Examination Survey (NHANES, 2017–March 2020, prior to the pandemic) and are available through the \texttt{iTOS} package for R \citep{rosenbaum2025introduction}.

We primarily focus on the variable selection variant of cross-balancing to estimate the counterfactual mean $\EE[Y(0) \mid T = 1]$. 
We randomly split the data into two folds, construct a dictionary of basis functions and apply the method in Section~\ref{section_cross2} to perform variable selection and compute balancing weights. 
(For comparison, cross-balancing with learned features using random forests produces an estimate of similar magnitude.) 
Variable selection is performed using three procedures: \texttt{lm}, \texttt{aic}, and \texttt{lasso}, following the same steps as in the simulation study. 
We consider a nested sequence of four feature dictionaries. 
Each successive dictionary extends the previous by including: (i) raw features; (ii) cubic spline transformations of the raw features; (iii) all pairwise interactions among the raw features, including terms up to cubic order; and (iv) all pairwise interactions among the spline-transformed features, also up to cubic order.
In each of the 12 configurations (three selection methods $\times$ four feature sets), we find outcome-related variables $\hat{S}_m$ and propensity-related ones $\hat{S}_w$. 
We then compute the cross-balancing estimator using three sets: the union $\hat{S} = \hat{S}_m \cup \hat{S}_w$ (\texttt{Cross\_Bal}), the outcome-only set $\hat{S} = \hat{S}_m$ (\texttt{Outc\_only}), and the propensity-only set $\hat{S} = \hat{S}_w$ (\texttt{Prop\_only}). 

Figure~\ref{fig:real_estimator} shows the corresponding estimates along with their $95\%$ CIs.
We find that incorporating cubic terms expands the expressiveness of the balancing features and yields visible differences compared to estimates that use only the original features. 
Increasing the size of the feature dictionary causes a slight increase in the length of the confidence intervals.

\begin{figure}
    \centering
    \includegraphics[width=0.8\linewidth]{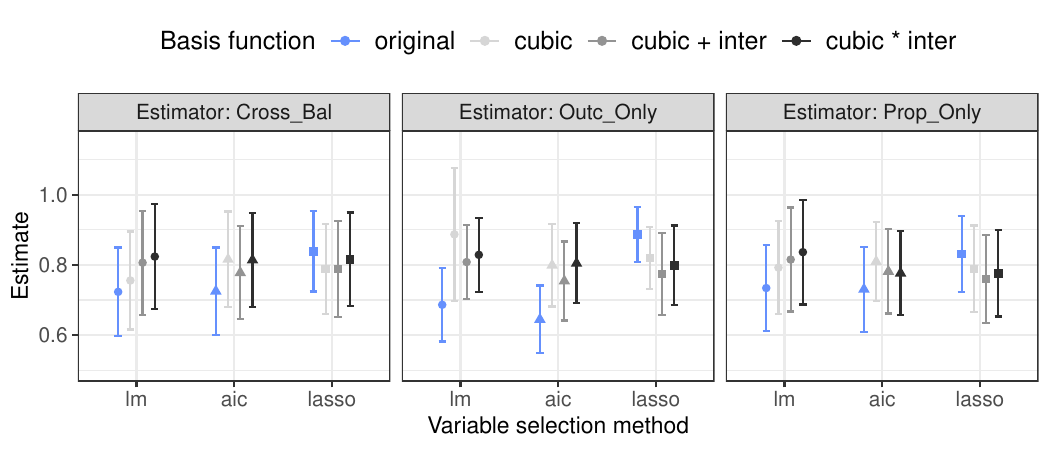}
    \caption{Point estimates and 95\% confidence intervals for the counterfactual mean $\EE[Y(0)\mid T=1]$. Each subplot displays results for one estimator: cross-balancing with both outcome- and treatment-related variables, outcome-related variables only, or treatment-related variables only. Within each subplot, point estimates are shown for three variable selection methods, applied to four feature dictionaries.}
    \label{fig:real_estimator}
\end{figure}

Figure~\ref{fig:real_imbalance} displays the standardized mean differences of all features in dictionary (ii) between treated and control groups, both before (triangles) and after (dots) weighting with cross-balancing using the set $\hat{S} = \hat{S}_m \cup \hat{S}_w$. 
Cross-balancing selects a concise set of relevant transformed features, listed in the left panel. 
As expected, the weights substantially improve balance on these selected variables. 
Notably, the right panel shows that balance is also markedly improved (approaching exact balance) for features \emph{not explicitly selected}, despite the fact that balance on these variables was not directly enforced. 
These results show that cross-balancing effectively finds and balances the most critical features, while also conferring improved balance more broadly across the covariate space.
Figure~\ref{fig:real_imbalance_large} in Appendix~\ref{app:subsec_case_study} shows the balancing metrics with dictionary (iii) which convey similar messages.

\begin{figure}
    \centering
    \includegraphics[width=0.9\linewidth]{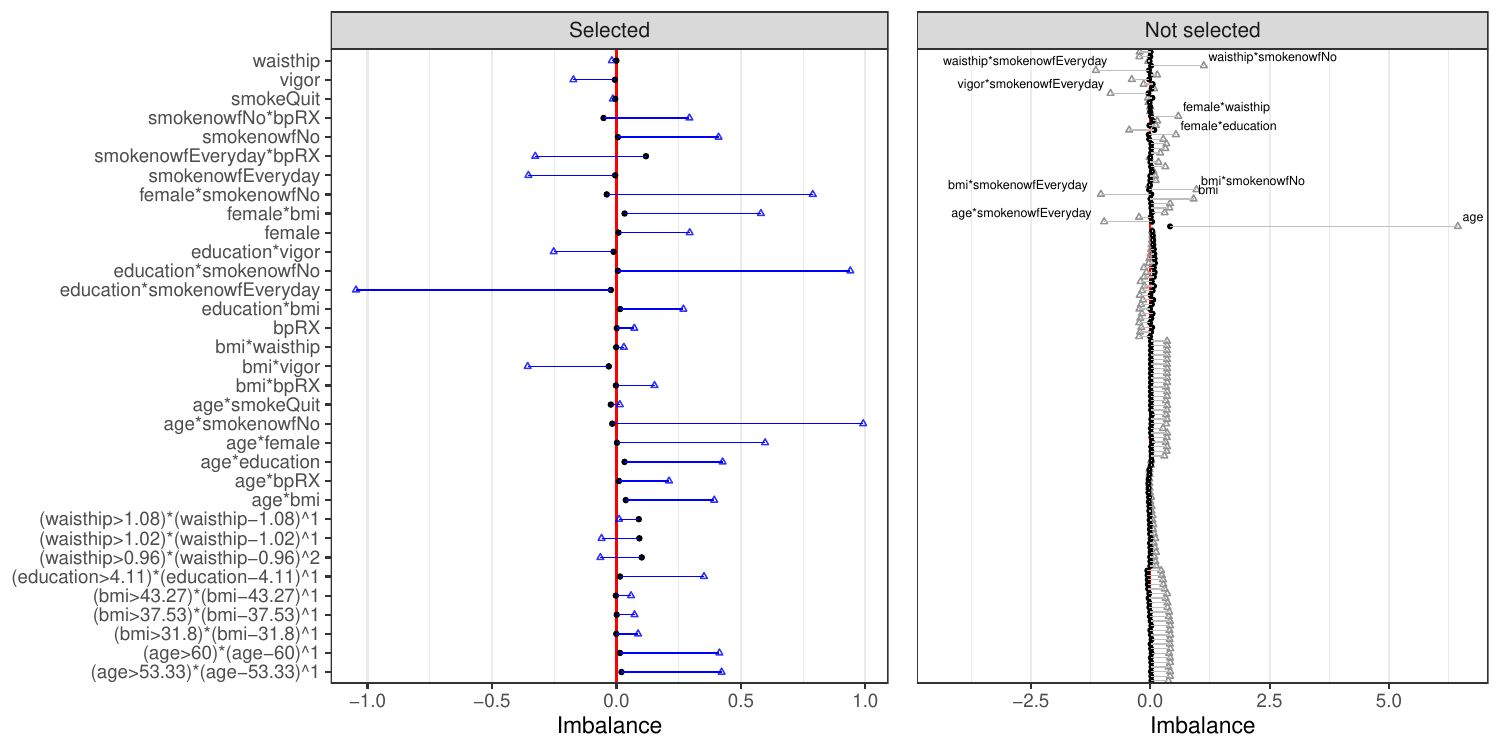}
    \caption{
    {\small Imbalance in the NHANES data before (triangles) and after (dots) weighting with cross-balancing, where cubic spline basis functions are selected using cross-validated Lasso. The left panel displays imbalance for variables explicitly balanced in at least one fold (the union selected across two folds), while the right panel shows variables not selected and thus not explicitly balanced.}}
    \label{fig:real_imbalance}
\end{figure}  

\begin{figure}[h!]
    \centering
    \includegraphics[width=0.9\linewidth]{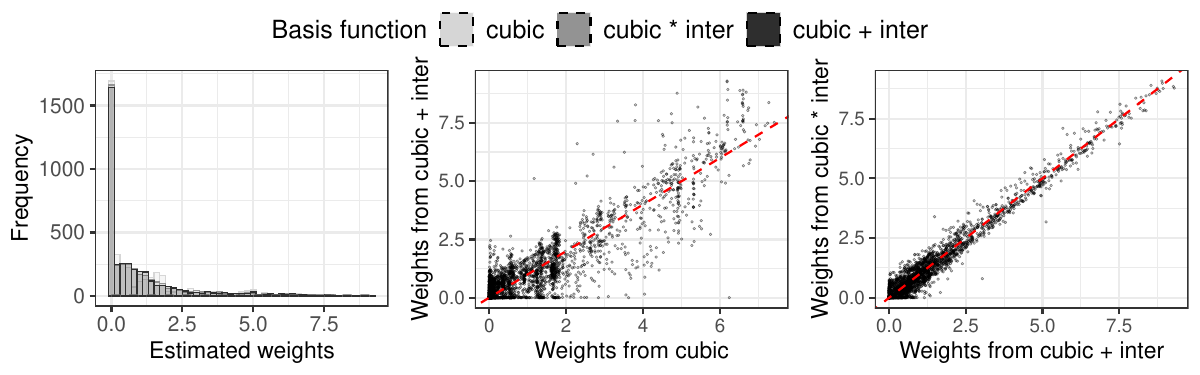}
    \caption{
    {\small Learned weights from cross-balancing, with variables selected using cross-validated Lasso. The left panel shows the empirical distributions of the weights for three dictionaries. The middle and right panels display pairwise comparisons of individual weights estimated with each dictionary.}}
    \label{fig:real_weights}
\end{figure}

As a robustness check,  Figure~\ref{fig:real_weights} depicts the weights $\{w_i\}$ produced by \texttt{Cross\_Bal} with Lasso variable selection for three feature dictionaries, excluding the one with only raw features. 
We see the empirical distribution of the weights remains stable across the three dictionaries, and the estimated weights in the middle and right panels do not change substantially as the feature dictionary is expanded. These results suggest that in this case study adding cubic terms to the spline functions offers little practical benefit for weight estimation. 

\section{Discussion}
\vspace{-0.2cm}
This paper presents two complementary strategies for building data-informed balancing features in observational studies: learning key functions and selecting relevant features. Although they rely on distinct techniques, both strategies yield estimators with strong theoretical guarantees and robust empirical performance in simulation and real-world studies.

We emphasize that the strategies in sections \ref{sec:fitted} and \ref{section_cross2} are not mutually exclusive; indeed, they can—and arguably should—be combined. For example, the estimated prognostic score can be included in the candidate feature dictionary. A straightforward extension of our theory would then establish consistency and asymptotic normality of the resulting estimator, provided the combined set approximates the prognostic and weight functions with a product rate of $o_P(1/\sqrt{n})$. 

Also, these strategies do not stand in opposition to the traditional guidelines of using subject-matter knowledge to inform the design of observational studies; rather, they can complement and strengthen each other. 
Cross-balancing provides a principled framework for observational study design, where one half of the data can be used to design the study, while the other half is reserved for analysis, and vice versa. 
If data-informed feature selection happens to omit covariates deemed essential by expert judgement, these variables can always be incorporated into the analysis to preserve study interpretability and scientific relevance.
Ultimately, cross-balancing supports both efficient estimation and valid inferences, helping to maintain the rigor of the study.
Future work includes the extension of cross-balancing to longitudinal settings with limited overlap.

\setlength{\bibsep}{-1pt}
{\footnotesize 
\bibliography{bibliography}
\bibliographystyle{asa}
}

\newpage 
\setcounter{page}{1}
\begin{center}
    {\Large {\bf Supplementary Materials for} ``Cross-Balancing for Data-Informed Design and Efficient Analysis of Observational Studies''}
\end{center}

\appendix

\section{Additional discussion, details and results}

\subsection{Multiple validity with parameteric models}
\label{app:subsec_multi_param}

Theorem~\ref{thm:multi_param} formalizes the discussion in Remark~\ref{rem:multi_param}. 

\begin{theorem}[Multiply-valid inference with parametric models]\label{thm:multi_param}
    Let $\{m_{j,\theta_j}(\cdot)\}_{j=1}^{J}$ be distinct parametric models for the prognostic score $m_0(x)$, each parameterized by $\theta_j\in \Theta_j \subseteq \RR^{d_j}$, and let $\{e_{\ell,\eta_\ell}(\cdot)\}_{\ell=1}^{L}$ be distinct parametric models for the propensity score $e(x)$, each parametrized by $\eta_\ell\in \Lambda_\ell \subseteq \RR^{d_\ell}$. 
    Suppose $\{\hat\theta_j^{(k)}\}_{j=1}^{J}$ and $\{\hat\eta_\ell^{(k)}\}_{\ell=1}^{L}$ are fitted on $\cI_{c,k}\cup\cI_{t,k}$ and satisfy $\hat\theta_j^{(k)} = \bar\theta_j+ \frac{1}{n}\sum_{i\in\cI_{c,k}\cup\cI_{t,k}} f_j(D_i)+O_{P}(1/{n})$ and $\hat\eta_\ell = \bar\eta_\ell^{(k)} + \frac{1}{n}\sum_{i\in\cI_{c,k}\cup\cI_{t,k}} g_\ell(D_i)+O_{P}(1/n)$ for some fixed parameters $\{\bar\theta_j\}_{j=1}^{J}\cup\{\bar\eta_\ell\}_{\ell=1}^{L}$ and influence functions $\{f_j\}_{j=1}^J\cup\{g_\ell\}_{\ell=1}^L$. 
	Define $\hat\theta_0$ as the cross-balancing estimator constructed with the features
    \$
    \hat\phi^{(k)}(x) = \Big\{m_{1,\hat\theta_1^{(k)}}(x),\dots,m_{J,\hat\theta_{J}^{(k)}} (x),\textstyle{\frac{e_{1,\hat\eta_1^{(k)}}(x)}{1-e_{1,\hat\eta_1^{(k)}}(x)}},\dots \textstyle{\frac{e_{L,\hat\eta_{L}^{(k)}}(x)}{1-e_{L,\hat\eta_{L}^{(k)}}(x)}}\Big\},\quad k=1,2.
    \$
	Then $\hat\theta_0 = \theta_0 + n^{-1}\sum_{i=1}^n \gamma(X_i,T_i) + O_{P}(n^{-1})$ for some function $\gamma(\cdot)$ under the following conditions: (i) The tolerance level satisfies $\delta_n = o(n^{-1/2})$. (ii) The vector
    $ 
    (m_{1,\bar\theta_1}(X), \ldots, m_{J,\bar\theta_J}(X)$, $\frac{e_{1,\bar\eta_1}(X)}{1-e_{1,\bar\eta_1}(X)}, \ldots, \frac{e_{L,\bar\eta_L}(X)}{1-e_{L,\bar\eta_L}(X)})
    $ 
    with $X\sim \mathbb{P}_{X| T=0}$ is of full rank.
    (iii) Either $m_0(x) = m_{j,\bar\theta_j}(x)$ for some $j \in [J]$, or $e(x) = e_{\ell,\bar\eta_\ell}(x)$ for some $\ell \in [L]$.
\end{theorem}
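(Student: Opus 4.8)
The plan is to view this statement as the parametric-rate, relaxed-span counterpart of Theorem~\ref{thm:inference}: because the fitted models converge to their limits at the $\sqrt n$-rate (rather than merely $o_P(n^{-1/4})$), the product of projection residuals that governs the estimator's bias becomes a single $O_P(n^{-1/2})$ term that is itself asymptotically linear, so we may trade the requirement ``$\phi^*$ spans both $m_0$ and $w^*$'' for ``$\phi^*$ spans at least one of them,'' at the cost of losing efficiency but keeping root-$n$ asymptotic linearity. Concretely, I would (a) reduce the fitted features to an asymptotically linear perturbation of $\phi^*$, (b) establish the population identity $\bar\theta_0 = \theta_0$ under condition (iii) from the dual/projection characterization of the implied balancing weights, and (c) expand $\hat\theta_0 - \bar\theta_0$ following the proof of Theorem~\ref{thm:inference}, collecting all $O(1/n)\sum_i(\cdot)$ terms into $\gamma$ and bounding the rest by $O_P(n^{-1})$.

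\emph{Step (a): feature expansion.} Writing $\hat\phi^{(k)}(x)=\phi^*(x)+\Delta^{(k)}(x)$, apply a first-order (delta-method) expansion to each coordinate map $\theta_j\mapsto m_{j,\theta_j}(\cdot)$ and $\eta_\ell\mapsto e_{\ell,\eta_\ell}(\cdot)/(1-e_{\ell,\eta_\ell}(\cdot))$ in $L_2(\mathbb{P}_{\cdot\mid T=0})$, combined with the assumed expansions $\hat\theta_j^{(k)}-\bar\theta_j=\tfrac1n\sum_{i\in\cI_{c,k}\cup\cI_{t,k}}f_j(D_i)+O_P(1/n)$ and $\hat\eta_\ell^{(k)}-\bar\eta_\ell=\tfrac1n\sum_{i\in\cI_{c,k}\cup\cI_{t,k}}g_\ell(D_i)+O_P(1/n)$, to get $\|\Delta^{(k)}\|_{L_2(\mathbb{P}_{\cdot\mid T=0})}=O_P(n^{-1/2})$ together with an asymptotically linear representation of $\Delta^{(k)}$ in the fold-$(3-k)$ data. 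In particular $\|\hat\phi^{(k)}-\phi^*\|_{L_2}=o_P(n^{-1/4})$, so Theorem~\ref{thm:convergence} applies (with $\delta_n=o(1)$, implied by (i)), giving $\hat\theta_0=\bar\theta_0+o_P(1)$ with $\bar\theta_0=\mathbb{E}[\bar w(X)m_0(X)\mid T=0]$ and $\bar w=\phi^{*\top}\bar\lambda$.

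\emph{Step (b): population double robustness.} The implied limit weight $\bar w$ satisfies the population exact-balance equations $\mathbb{E}[\bar w(X)\phi^*(X)\mid T=0]=\mathbb{E}[\phi^*(X)\mid T=1]=\mathbb{E}[w^*(X)\phi^*(X)\mid T=0]$; since the Gram matrix $\mathbb{E}[\phi^*(X)\phi^*(X)^\top\mid T=0]$ is full rank by (ii), this forces $\bar w$ to be the $L_2(\mathbb{P}_{\cdot\mid T=0})$-projection of $w^*$ onto $\mathrm{span}(\phi^*)$. Hence $\bar\theta_0-\theta_0=\langle \Pi_{\phi^*}[w^*]-w^*,\ m_0\rangle_{L_2(\mathbb{P}_{\cdot\mid T=0})}=-\langle \Pi_{\phi^*}^{\perp}[w^*],\ m_0\rangle_{L_2(\mathbb{P}_{\cdot\mid T=0})}$, which vanishes when $m_0=m_{j,\bar\theta_j}\in\mathrm{span}(\phi^*)$ (the residual is orthogonal to the span) and also vanishes when $e=e_{\ell,\bar\eta_\ell}$, since then $w^*=\tfrac{1-p}{p}\,\tfrac{e_{\ell,\bar\eta_\ell}}{1-e_{\ell,\bar\eta_\ell}}\in\mathrm{span}(\phi^*)$ so $\Pi_{\phi^*}^{\perp}[w^*]=0$. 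Thus condition (iii) yields $\bar\theta_0=\theta_0$.

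\emph{Step (c): asymptotic linearity.} Decompose $\hat\theta_0-\theta_0$ as in the proof of Theorem~\ref{thm:inference}: a ``leading'' piece $\tfrac{1}{n_c}\sum_{i\in\cI_c}\hat w_i\{Y_i-\hat m^{(k)}(X_i)\}+\tfrac{1}{n_t}\sum_{j\in\cI_t}\hat m^{(k)}(X_j)-\theta_0$ (using the balance identity, up to an $O_P(\delta_n\|\hat\lambda\|_1)$ slack term), plus the bias term, which by a version of Proposition~\ref{prop:bias_reduc} equals the empirical average of $\Pi^{(k)}_{\hat\phi}[\hat w^{(k)}-w^*]\cdot\Pi^{(k)}_{\hat\phi}[\hat m^{(k)}-m_0]$. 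Under (iii), one of the two projection-residual factors is $O_P(n^{-1/2})$ — when the outcome model is correct, $\Pi^{(k)}_{\hat\phi}[\hat m^{(k)}-m_0]=-\Pi^{(k)}_{\hat\phi}[m_{j,\bar\theta_j}-m_{j,\hat\theta_j^{(k)}}]=O_P(\|\hat\theta_j^{(k)}-\bar\theta_j\|)$, and symmetrically for the weight residual when the propensity model is correct — while the other factor is $O_P(1)$ by the moment/sub-Gaussian bounds on $\phi^*$; so the bias term is $O_P(n^{-1/2})$ and, through Step~(a), asymptotically linear. Now replace each $O_P(n^{-1/2})$-order quantity (the parameter deviations, the sample means $\tfrac{1}{n_t}\sum\phi^*(X_j)$, the empirical Gram matrices, $\hat p-p=\tfrac1n\sum(T_i-p)$) by its asymptotically linear expansion; collect all resulting $\tfrac1n\sum_i(\cdot)$ terms into a single $\gamma(X_i,T_i)$, using the control $\mathbb{E}[\gamma]=0$; and bound the remainders ($\|\Delta^{(k)}\|^2$, products of two $n^{-1/2}$-errors, $\delta_n^2$, $\delta_n\|\hat\lambda\|_1$ with $\|\hat\lambda\|_1=O_P(1)$ from (ii), and conditional-mean-zero empirical-process terms made small by cross-fitting) by $O_P(n^{-1})$.

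\emph{Main obstacle.} The delicate part is Step~(c): propagating the first-order feature perturbation through the data-dependent projection operators $\Pi^{(k)}_{\hat\phi}$ and through the QP solution map $\hat\phi^{(k)}\mapsto(\hat w,\hat\lambda)$ to obtain a clean asymptotically linear representation, and then verifying that \emph{all} second-order interactions are genuinely $O_P(n^{-1})$ (in particular the $\delta_n\|\hat\lambda\|_1$ slack term, which requires a uniform bound on $\|\hat\lambda\|_1$ obtained from the strong convexity of the balancing program under condition (ii)). Cross-fitting is what makes this work: because $\hat\phi^{(k)}$ is independent of the fold-$k$ data, the cross terms $\tfrac{1}{n_{c,k}}\sum_{i\in\cI_{c,k}}\Delta^{(k)}(X_i)\{Y_i-m_0(X_i)\}$ and their analogues are conditionally mean zero and hence of the required lower order, without which the argument would collapse to the $n^{-1/4}$-rate analysis and the $O_P(n^{-1})$ remainder could not be claimed.
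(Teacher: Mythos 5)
Your Steps (a) and (b) coincide with the paper's argument: the paper likewise writes $\hat\phi^{(k)}=\phi^*+\phi^\Delta$ with $\phi^\Delta$ asymptotically linear in the held-out fold via the gradients $\nabla m_{j,\bar\theta_j}$ and $\nabla w_{\ell,\bar\eta_\ell}$, and it establishes $\bar\theta_0=\theta_0$ under condition (iii) by exactly the projection/double-robustness identity you give (deferring to the last part of the proof of Theorem~\ref{thm:convergence}). Where you genuinely diverge is Step (c). The paper does not route through the Theorem~\ref{thm:inference} decomposition or Proposition~\ref{prop:bias_reduc} at all: it first uses the dual characterization together with $\delta_n=o(n^{-1/2})$ to show the penalized dual solution agrees with the unpenalized one up to $o_P(n^{-1/2})$, so that $\hat\theta_0^{(k)}$ equals the explicit smooth moment functional $\widehat{\mathbb{E}}_c[Y\hat\phi]^\top\hat\Sigma_c^{-1}\widehat{\mathbb{E}}_t[\hat\phi]$ up to negligible error, and then performs a direct first-order expansion of each ingredient ($\widehat{\mathbb{E}}_c[Y\hat\phi]$, $\hat\Sigma_c^{-1}$, $\widehat{\mathbb{E}}_t[\hat\phi]$) around its population limit, reading off $\gamma$ from the resulting linear terms; the ``one correct model'' hypothesis is used only at the very end to identify $\bar\theta_0=\theta_0$, never in the linearization. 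This sidesteps precisely the two points you flag as the main obstacle: there is no data-dependent projection operator $\Pi^{(k)}_{\hat\phi}$ to differentiate, and no need to verify that the product of projection residuals is itself asymptotically linear. Your route is workable but heavier: centering at the oracle AIPW estimator with the true $(w^*,m_0)$ forces you to carry terms such as $\frac{1}{n_c}\sum_{i}(\bar w - w^*)(X_i)\{Y_i-m_0(X_i)\}$ that are $O_P(n^{-1/2})$ and do not vanish when only one model is correctly specified; they are i.i.d.\ mean-zero averages and can be absorbed into $\gamma$, but this makes explicit that $\gamma$ is generally not the efficient influence function, consistent with the theorem's weaker claim. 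One caveat common to both arguments: condition (i) only gives $\delta_n=o(n^{-1/2})$, so the penalty-induced slack in the dual is $o_P(n^{-1/2})$ rather than $O_P(n^{-1})$; neither your write-up nor the paper's can honestly upgrade the remainder to $O_P(n^{-1})$ from (i) alone, but this is an imprecision in the theorem statement rather than a gap specific to your approach.
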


In Theorem~\ref{thm:multi_param}, the first two conditions are regularity requirements analogous to those used previously. 
Condition~(iii) says that, if $\hat\phi^{(k)}$ includes several parametric candidate models for both the prognostic and propensity functions, the cross-balancing estimator remains consistent and asymptotically normal provided at least one model is correctly specified. 
Thus, practitioners may incorporate multiple plausible models to enhance robustness to model misspecification.

\subsection{Alternative convergence analysis for variable selection}
\label{app:subsec_varsel_conv}

In this part, we provide an alternative convergence analysis for cross-balancing after variable selection, which might better suite low-dimensional settings. 

For a symmetric matrix $A \in \mathbb{R}^{d \times d}$, let $\nu_{\min}(A)$ denote its minimum eigenvalue and $\|A\|_{\mathrm{op}} = \sup_{\|x\|=1} |x^\top A x|$ its operator norm. 
For any $S \subseteq \{1, \dots, p\}$, define $\Sigma_{c,S} = \mathbb{E}[X_S X_S^\top \mid T = 0]$ as the population covariance matrix in the control group, and let $\nu_{\min}^\Sigma(S) = \nu_{\min}(\Sigma_{c,S})$.

We slightly extend the sub-Gaussian condition in the main text.
\begin{assumption}\label{assump:subgaussian_app}
For each $j\in\{1,\ldots,p\}$, the variables $\{X_{ij}\}_{i=1}^n$ are i.i.d., sub-Gaussian random variables with mean zero. 
Furthermore, there exists an unknown function $\sigma(S)$ such that for any subset $S\subseteq\{1,\ldots,p\}$, the random vector $X_{i,S}$ is $\sigma^2(S)$-sub-Gaussian; that is, for any $\lambda \in \RR^{|S|}$, the linear combination $X_{i,S}^\top \lambda$ is sub-Gaussian with parameter $\|\lambda\|^2\sigma^2(S)$.
\end{assumption}
 
The general function $\sigma(S)$ in the vector sub-Gaussian condition allows for feature correlation, which is particularly relevant when the variables $X_i$ are constructed as transformations of raw features. 
In the special case where the entries of $X_i$ are independent, we may simply take $\sigma(S) = 1$. 
In high-dimensional statistics, it is common to assume $\sigma(\{1,\ldots,p\}) = \sigma^2$ for some constant $\sigma^2$~\citep{wainwright2019high,rigollet2023high}.

The proof of Theorem~\ref{thm:convergence_varsel} is provided in Appendix~\ref{subsec:proof_convergence_varsel}.

\begin{theorem}\label{thm:convergence_varsel}
Suppose that, for each $k=1,2$, the following conditions hold:
\begin{enumerate}[label=(\roman*)]
    \item \textnormal{Approximation error:} There exist fixed functions $\bar{w}, \bar{m} \colon \mathbb{R}^p \to \mathbb{R}$ such that
    $ 
    \|\bar{w} - \eta(\hat{S}^{(k)}, w^*)^\top X_{\hat{S}^{(k)}}\|_{L_2(\mathbb{P}_{\cdot \mid T=0})} = o_P(1)
    $
    and $\epsilon(\hat{S}^{(k)}; \bar{m}) = o_P(1)$, 
    with $\mathbb{E}[(Y(0) - \bar{m}(X))^2 \mid X] \leq M_1$ and $\bar{w}(X) \leq M_2$ almost surely for some constants $M_1, M_2 > 0$. In addition, $\|\eta(\hat{S}^{(k)}, w^*)\| \leq M_3$ for some constant $M_3 > 0$.
    
    \item \textnormal{Selection size:} 
    $ 
    \delta_n \|\eta(\hat{S}^{(k)}; \bar{m})\| = o_P(1)$, $\delta_n |\hat{S}^{(k)}| = o_P(1)$,  $|\hat{S}^{(k)}| \log |\hat{S}^{(k)}| = o_P(n)$. 
    \item \textnormal{Covariance eigenvalues:} 
    $ 
    \zeta(\hat{S}^{(k)}) = o_P(1)$,  $\zeta(\hat{S}^{(k)})^2 \cdot \|\Sigma_{c,\hat{S}^{(k)}}\|_{\mathrm{op}} = o_P(1)$, 
    where for any $S \subseteq \{1, \dots, p\}$,
    $ 
    \zeta(S) := \sigma(S) \cdot \nu_{\min}^{-1}(\Sigma_{c,S}) \cdot |S|^{1/2} n^{-1/2}$ where $\sigma(S)$ is defined in Assumption~\ref{assump:subgaussian_app}.
\end{enumerate}
Then there exists some fixed $\bar\theta \in \mathbb{R}$ such that $\hat{\theta}_0 = \bar{\theta} + o_P(1)$. 
Furthermore, $\bar{\theta} = \theta_0 + o_P(1)$ if either $\bar{w} = w^*$ or $\bar{m} = m_0$.
\end{theorem}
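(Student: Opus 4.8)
\textbf{Proof proposal for Theorem~\ref{thm:convergence_varsel}.}

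The plan is to follow the same dual-analysis template used for Theorem~\ref{thm:convergence_varsel_alt}, but replacing the penalized-regression argument (which only needs eigenvalue conditions restricted to cone-type vectors) with a direct convex-analysis argument that exploits the quantity $\zeta(\hat S^{(k)})$ built from the minimum eigenvalue of the population control-covariance $\Sigma_{c,S}$. First I would condition on the selected sets $\hat S^{(1)}, \hat S^{(2)}$, treating them as fixed, and work on the high-probability event on which the empirical control Gram matrix $\frac1n X_{c,\hat S}X_{c,\hat S}^\top$ is close to $\Sigma_{c,\hat S}$ in operator norm: a matrix-Bernstein / sub-Gaussian-covariance concentration bound gives a deviation of order $\sigma(\hat S)\sqrt{|\hat S|\log|\hat S|/n}$, which is $o_P(\nu_{\min}^\Sigma(\hat S))$ precisely because $\zeta(\hat S)=o_P(1)$ and $|\hat S|\log|\hat S|=o_P(n)$. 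On this event the empirical Gram matrix inherits a lower eigenvalue bound of the same order as $\nu_{\min}^\Sigma(\hat S)$, so the balancing program~\eqref{eq:opt_w2_sel} is feasible and its optimal dual is well-controlled.

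Next I would write the optimal weights in their dual (implicit-propensity) form: solving~\eqref{eq:opt_w2_sel} is equivalent to regressing a ``target'' onto $X_{\hat S}$ in the control sample with an $\ell_\infty$-type penalty of size $\delta_n$ on the balance constraints, so $\hat w_i = \hat\lambda^\top X_{i,\hat S}$ (plus the intercept term) where $\hat\lambda$ solves a finite-sample normal-equation-type system. The estimator then decomposes as
\begin{equation*}
\hat\theta_0 - \theta_0 = \underbrace{\frac1{n_c}\sum_{i\in\cI_c}\hat w_i\{Y_i - \bar m(X_i)\}}_{\text{(I)}} + \underbrace{\Big(\frac1{n_c}\sum_{i\in\cI_c}\hat w_i\bar m(X_i) - \frac1{n_t}\sum_{j\in\cI_t}\bar m(X_j)\Big)}_{\text{(II)}} + \underbrace{\Big(\frac1{n_t}\sum_{j\in\cI_t}\bar m(X_j) - \theta_0\Big)}_{\text{(III)}}.
\end{equation*}
Term (III) converges (to $0$ if $\bar m = m_0$) by the LLN and the overlap/boundedness assumptions. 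Term (II) is the balance error: since $\bar m$ is approximated by $\eta(\hat S;\bar m)^\top X_{\hat S}$ up to $L_2$-error $\epsilon(\hat S;\bar m)=o_P(1)$, and the balance constraint forces the $X_{\hat S}$-part to agree up to $\delta_n$ coordinatewise, (II) is bounded by $\delta_n\|\eta(\hat S;\bar m)\| + \epsilon(\hat S;\bar m)\cdot(\text{a bounded-weight factor})$, both $o_P(1)$ by the approximation-error and selection-size assumptions. Term (I) is the ``variance'' term; I would control it by first showing $\hat\lambda$ is close to the population coefficient $\eta(\hat S;\bar w)$ (equivalently to $\eta(\hat S;w^*)$, up to the $o_P(1)$ gap in the assumption), using the eigenvalue lower bound from step one together with $\|\eta(\hat S;w^*)\|\le M_3$; this is where $\zeta(\hat S)$ and the operator-norm control $\zeta(\hat S)^2\|\Sigma_{c,\hat S}\|_{\mathrm{op}}=o_P(1)$ are used, to bound the amplification of noise through the inverse Gram matrix. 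Then $\frac1{n_c}\sum\hat w_i\{Y_i-\bar m(X_i)\}$ is, up to $o_P(1)$, $\frac1{n_c}\sum \eta(\hat S;\bar w)^\top X_{i,\hat S}\{Y_i-\bar m(X_i)\}$, which has mean $O(\epsilon\cdot\text{bound})$ and variance $O(M_3^2 M_1/n_c)$, hence is $o_P(1)$. Finally, existence of the fixed limit $\bar\theta$ follows because each of (I)--(III) converges in probability to a deterministic limit (under $\bar w,\bar m$ fixed and the assumed $L_2$-approximations), and $\bar\theta=\theta_0$ when $\bar w=w^*$ or $\bar m=m_0$ by the representation $\theta_0 = \mathbb E[w^*(X)Y\mid T=0] = \mathbb E[m_0(X)\mid T=1]$.

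The main obstacle I anticipate is the control of term (I) when $|\hat S|$ is allowed to grow: one must propagate the operator-norm closeness of $\frac1n X_{c,\hat S}X_{c,\hat S}^\top$ to $\Sigma_{c,\hat S}$ through the matrix inverse while keeping track of the random, data-dependent nature of $\hat S$ (the bounds must hold uniformly enough, or one argues conditionally and then integrates). The delicate point is that, unlike in the penalized-regression route of Theorem~\ref{thm:convergence_varsel_alt}, here $\delta_n$ may be small, so there is no regularization cushion and the inverse-Gram amplification is genuinely governed by $\nu_{\min}^\Sigma(\hat S)^{-1}$; making the bookkeeping precise — showing the cross term between the estimation error of $\hat\lambda$ and the residuals $Y_i-\bar m(X_i)$ is $o_P(1)$ rather than merely $O_P(1)$ — is the technical crux, and it is exactly what the three parts of condition (iii) are designed to deliver.
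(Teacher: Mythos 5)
Your overall architecture matches the paper's: condition on $\hat S^{(k)}$ so it can be treated as fixed, pass to the dual so that $\hat w_i=\hat\lambda^\top X_{i,\hat S}$, use sub-Gaussian covariance concentration together with the strong-convexity/minimum-eigenvalue control encoded in $\zeta(\hat S)$ to show $\hat\lambda$ converges to the population projection coefficient $\eta(\hat S;w^*)$, and then decompose the estimator. Your treatment of the balance-error term (your (II)) and of the treated-side term (your (III)) is essentially the paper's handling of its terms (c), (d), and $\beta^\top\hat\Delta$.

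The gap is in term (I). You claim $\frac{1}{n_c}\sum_{i}\hat w_i\{Y_i-\bar m(X_i)\}=o_P(1)$ on the grounds that the residuals $Y_i-\bar m(X_i)$ act like mean-zero noise against the weights. That is true only when $\bar m=m_0$. In general $\mathbb{E}[Y_i-\bar m(X_i)\mid X_i,T_i=0]=m_0(X_i)-\bar m(X_i)\neq 0$, so (I) converges to $\mathbb{E}[\bar w(X)\{m_0(X)-\bar m(X)\}\mid T=0]$, a genuinely nonzero constant; neither $\epsilon(\hat S;\bar m)$ nor $\epsilon(\hat S;w^*)$ controls it, since those quantify linear approximation of $\bar m$ and $w^*$ by $X_{\hat S}$, not the distance between $\bar m$ and $m_0$. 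This is not a cosmetic issue: if (I) and (II) were both $o_P(1)$, your decomposition would give $\hat\theta_0\to\mathbb{E}[\bar m(X)\mid T=1]$, which is not $\theta_0$ when $\bar w=w^*$ but $\bar m\neq m_0$ --- so your accounting would break the $\bar w=w^*$ branch of the stated double robustness. The paper instead keeps the leading term as $\frac{1}{n_c}\sum_i\bar w(X_i)\{Y_i-\bar m(X_i)\}+\frac{1}{n_t}\sum_j\bar m(X_j)$, whose LLN limit is $\bar\theta=\mathbb{E}\big[\tfrac{1-T}{1-p}\bar w(X)\{Y-\bar m(X)\}+\tfrac{T}{p}\bar m(X)\big]$, and shows that only the correction terms (weights minus $\eta(\hat S;w^*)^\top X_{\hat S}$, and $\eta(\hat S;w^*)^\top X_{\hat S}$ minus $\bar w$, each paired with the residuals) are $o_P(1)$. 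The identity $\bar\theta=\theta_0$ under $\bar w=w^*$ then comes from the cancellation $\mathbb{E}[w^*(X)\{m_0(X)-\bar m(X)\}\mid T=0]=\mathbb{E}[m_0(X)-\bar m(X)\mid T=1]$, which your writeup never exhibits; indeed your closing sentence (each of (I)--(III) has a deterministic limit) contradicts your earlier claim that (I) is $o_P(1)$. Regroup the decomposition as above and the rest of your argument goes through.
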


The requirements for $\hat\theta_0$ to converge to a fixed value are weaker than those needed for consistency. 
The first condition in (i), namely $\|\bar{w} - \eta(\hat{S}^{(k)},w^*)^\top X_{\hat{S}^{(k)}}\|_{L_2(\PP_{\cdot\given T=0})} = o_P(1)$, ensures that the OLS projection of $w^*$ onto the variables in $\hat{S}^{(k)}$ converges to some fixed function. 
This does not necessarily imply that $\hat{S}^{(k)}$ provides a good approximation of $w^*$ itself; that is, $\epsilon(\hat{S}^{(k)}, w^*)$ may not converge to zero. 
For example, suppose that $w^*$ is linear in the union of two disjoint subsets $S_1$ and $S_2$, but $\hat{S}^{(k)}$ only recovers $S_1$ due to variable selection error.
In this case, the projection onto $\hat{S}^{(k)}$ can still converge to a fixed function even though the overall approximation error remains nonzero.

The last two conditions control the growth rate and tail behavior of the selected balancing features.  
Condition (ii) ensures that the bias introduced by inexact balancing (with imbalance tolerance $\delta_n$) remains negligible, and that the number of selected features does not increase too rapidly with the sample size.  
Condition (iii) is designed to accommodate settings with correlated features. 
In the special case where the $X_{ij}$ are i.i.d.~standard normal, we have $\|\Sigma_{c,\hat{S}^{(k)}}\|_{\textnormal{op}} \leq 1$, $\sigma(S) = 1$, and $\nu_{\min}(\Sigma_{c,\hat{S}^{(k)}}) = 1$, so that condition (iii) is automatically satisfied whenever (ii) holds.  
Finally, if either $\bar{m} = m_0$ or $\bar{w} = w^*$, the cross-balancing estimator is consistent for $\theta_0$.

\subsection{Finite-sample analysis after variable selection}
\label{app:subsec_thm_finite}

Theorem~\ref{thm:inf_varsel_finite}, whose proof is in Appendix~\ref{app:subsec_inf_varsel_finite}, offers a finite-sample analysis relying on concentration inequalities. 
It can be used to derive Theorem~\ref{thm:convergence_varsel_alt} and Theorem~\ref{thm:inf_varsel} under the respective given conditions. 

\begin{theorem}[Finite-sample version of Theorem~\ref{thm:inf_varsel}]
\label{thm:inf_varsel_finite}
    Suppose the following conditions hold for $k=1,2$:
    \begin{enumerate}[label=(\roman*)]
        \item \textnormal{Boundedness.} There exists a constant $M_1>0$ such that $ Y(0)-m_0(X) $ is $M_1^2$-sub-Gaussian conditional on $X$, and $|m_0(X)-\eta(\hat{S}^{(k)};m_0)^\top X_{\hat{S}}|\leq M_1$. In addition, $w^*(X)\leq M_2$, $|{w}^*(X) -\eta(\hat{S}^{(k)};w^*)^\top X_{\hat{S}}|\leq M_2$  almost surely for some constant $M_2>0$, and $\|\eta(\hat{S}^{(k)},m_0)\|_1\leq M_3$ for some constant $M_3>0$. 
        \item There exists some function $\xi(\cdot)\colon 2^{[p]}\to \RR^+$ such that with probability tending to $1$, $\frac{1}{n}\|X_{c,\hat{S}^{(k)}}v\|_2^2 \geq \xi(\hat{S}^{(k)}) \|v\|_1^2$ holds for any $v\in \RR^{|\hat{S}^{(k)}|}$, where the $i$-th row of $X_{c,\hat{S}^{(k)}}$ is  $X_{i,\hat{S}^{(k)}}$, $i\in \cI_{c,k}$. 
    \end{enumerate}  
    Then, with probability at least $1- \varepsilon$, it holds that 
    \$
    | \hat\theta_0 - \hat\theta_0^* | \leq \tilde\cO \Bigg(  \frac{\sqrt{2} \cdot \epsilon_m\cdot \epsilon_w}{\sqrt{n_c}} &+ \epsilon_m \cdot 6M_2\sqrt{\log(64/\varepsilon)}\cdot \bigg(\frac{1}{\sqrt{n_c}}+\frac{1}{\sqrt{n_t}}\bigg) \\ 
    & \qquad \qquad \qquad + \frac{2\sqrt{2} \cdot \delta_n \cdot (\epsilon_m + M_1  \sqrt{2\log(64/\varepsilon)})}{\sqrt{n_c\cdot \bar{\xi}_S}}      \Bigg),
    \$
    where $\tilde\cO(\cdot)$ hides additive terms that are of a higher order in $(n_c,n_t)$ than those we display inside, and $\hat\theta_0^* =\frac{1}{n_c}\sum_{i\in \cI_{c}} {w}^*(X_i)(Y_i- m_0(X_i)) + \frac{1}{n_t}\sum_{j\in \cI_{t}}  m_0(X_j)  $ is the semi-parametrically efficient estimator. Here, we define the quantities $\epsilon_m = \max_{k=1,2}\epsilon(\hat{S}^{(k)};m_0)$, $\epsilon_w = \max_{k=1,2}\epsilon(\hat{S}^{(k)}; w^*)$, and  $\bar\xi_S = \min_{k=1,2} \xi(\hat{S}^{(k)})$. 
\end{theorem}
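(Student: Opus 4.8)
The plan is to prove the finite-sample bound in Theorem~\ref{thm:inf_varsel_finite} by decomposing the error $\hat\theta_0 - \hat\theta_0^*$ into interpretable pieces and controlling each by concentration. Write $\hat\theta_0 = \frac{1}{n_c}\sum_{k=1}^2\sum_{i\in\cI_{c,k}}\hat w_i Y_i$ and add and subtract the oracle quantities. The natural decomposition, applied fold by fold, is
\begin{align*}
\hat\theta_0 - \hat\theta_0^* &= \frac{1}{n_c}\sum_{k=1}^2\sum_{i\in\cI_{c,k}}(\hat w_i - w^*(X_i))\,(Y_i - m_0(X_i)) \\
&\quad + \frac{1}{n_c}\sum_{k=1}^2\sum_{i\in\cI_{c,k}}(\hat w_i - w^*(X_i))\,m_0(X_i) - \frac{1}{n_t}\sum_{j\in\cI_t} m_0(X_j).
\end{align*}
For the first term, since $\hat w$ on fold $k$ is measurable with respect to $\cI_{c,k}\cup\cI_{t,k}$ and the selection set $\hat S^{(k)}$ is built from the complementary fold, the residuals $Y_i - m_0(X_i)$ are conditionally mean-zero and sub-Gaussian given everything else; I would condition on the weights and covariates and apply a sub-Gaussian tail bound, so this term is of order $\|\hat w - w^*\|_{n_c}\cdot M_1\sqrt{\log(1/\varepsilon)/n_c}$, where $\|\cdot\|_{n_c}$ is the empirical $L_2$ norm on the control fold. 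For the second group, I would further split $m_0(X_i) = \eta(\hat S^{(k)};m_0)^\top X_{i,\hat S^{(k)}} + r_i$ where $r_i$ is the approximation residual with empirical norm close to $\epsilon(\hat S^{(k)};m_0) = \epsilon_m$; the linear part interacts with the balancing constraint $|\frac{1}{|\cI_{c,k}|}\sum w_i X_{i,\hat S^{(k)}} - \frac{1}{|\cI_t|}\sum X_{j,\hat S^{(k)}}|\le\delta_n$ to produce a bias bounded by $\delta_n\|\eta(\hat S^{(k)};m_0)\|_1 \le \delta_n M_3$, while the residual part contributes $\|\hat w - w^*\|_{n_c}\cdot\epsilon_m$ via Cauchy–Schwarz, and the remaining discrepancy between the control-fold empirical mean of $m_0$ and the treated empirical mean (both approximating $\theta_0$) is controlled by a Bernstein/sub-Gaussian bound of order $M_2(1/\sqrt{n_c}+1/\sqrt{n_t})$.

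The crux, and what I expect to be the main obstacle, is bounding $\|\hat w - w^*\|_{n_c}$ — the empirical $L_2$ distance between the cross-balancing weights and the oracle weight function on the selected features. This is where the eigenvalue condition (ii) enters. The idea is that $\hat w$ solves a variance-minimization problem subject to approximate moment matching, which is essentially the dual of a Lasso-type penalized regression of $w^*$ onto $X_{\hat S^{(k)}}$ with penalty level tied to $\delta_n$; I would follow the penalized-regression argument referenced in the proof of Theorem~\ref{thm:convergence_varsel_alt}. Using the lower bound $\frac{1}{n}\|X_{c,\hat S^{(k)}}v\|_2^2 \ge \xi(\hat S^{(k)})\|v\|_1^2$, one converts a basic inequality (comparing the objective at $\hat w$'s implied coefficient vector to that at $\eta(\hat S^{(k)};w^*)$) into a bound of the form $\|\hat w - w^*\|_{n_c} \lesssim \epsilon_w + \delta_n\sqrt{|\hat S^{(k)}|}/\sqrt{\xi(\hat S^{(k)})}$ or, in the sharper compatibility-style form, $\lesssim \epsilon_w + \delta_n/\sqrt{\xi(\hat S^{(k)})}$, modulo deviation terms of order $\sqrt{\log|\hat S^{(k)}|/(n\,\xi)}$ from concentrating $\frac{1}{n_c}\sum X_{i,\hat S}(Y_i - \cdots)$ and $\frac{1}{n_c}\sum X_{i,\hat S}X_{i,\hat S}^\top$ around their means. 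Care is needed because $\hat S^{(k)}$ is random; since it is built from the held-out fold, conditioning on it makes the control-fold randomness independent, so the concentration inequalities apply conditionally and then integrate out, which is exactly why sample splitting is used.

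Putting the pieces together, I would substitute $\|\hat w - w^*\|_{n_c} \lesssim \epsilon_w + \delta_n/\sqrt{n_c\,\bar\xi_S} + (\text{lower order})$ into the first-term bound (giving $\epsilon_m\epsilon_w/\sqrt{n_c}$ after multiplying by the sub-Gaussian factor $M_1\sqrt{\log(1/\varepsilon)/n_c}$ — wait, more precisely $\epsilon_w\cdot M_1\sqrt{\log/n}$ times nothing, so the $\epsilon_m\epsilon_w/\sqrt{n_c}$ term actually comes from the Cauchy–Schwarz pairing of $\hat w - w^*$ against the residual $r_i$ in the second group), into the residual-pairing term (giving $\epsilon_m\epsilon_w/\sqrt{n_c}$ and $\epsilon_m\delta_n/\sqrt{n_c\bar\xi_S}$), and collecting the bias term $\delta_n M_3$ and the $M_2$-dependent averaging term $6M_2\sqrt{\log(64/\varepsilon)}(1/\sqrt{n_c}+1/\sqrt{n_t})$ multiplied by $\epsilon_m$ where it pairs against the residual rather than appearing alone. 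The constant $64$ simply reflects a union bound over the $O(1)$ many events (two folds, several decomposition terms) each at level $\varepsilon/64$. Finally I would absorb the genuinely lower-order pieces — terms with an extra $1/\sqrt{n}$ or $1/\sqrt{\xi n}$ factor relative to those displayed, including the $\sqrt{\log|\hat S|/(\xi n)}$ deviation terms which condition (iii) of Theorem~\ref{thm:inf_varsel} forces to vanish faster — into the $\tilde\cO(\cdot)$ notation, yielding exactly the stated bound. The derivation of Theorems~\ref{thm:convergence_varsel_alt} and~\ref{thm:inf_varsel} as corollaries then follows by plugging in the respective rate assumptions and sending $n\to\infty$.
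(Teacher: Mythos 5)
Your proposal is correct and takes essentially the same route as the paper: the same decomposition into the conditionally mean-zero term $(\hat w_i - w^*(X_i))(Y_i-m_0(X_i))$, the Cauchy--Schwarz pairing of the weight error against the outcome approximation residual, the Bernstein-controlled control-versus-treated mean discrepancy of that residual, and the imbalance bias $\delta_n\|\eta(\hat S^{(k)};m_0)\|_1\le \delta_n M_3$, with the weight error itself bounded via the implicit Lasso-dual basic inequality combined with the compatibility-type condition (ii) to yield $2(A_n+\delta_n)/\sqrt{\xi(\hat S^{(k)})}$ for the empirical $L_2$ distance. The only blemish is an algebra slip in your displayed decomposition, whose second group should read $\frac{1}{n_c}\sum_{i}\hat w_i\, m_0(X_i) - \frac{1}{n_t}\sum_{j} m_0(X_j)$ (as written it omits the summand $\frac{1}{n_c}\sum_i w^*(X_i)m_0(X_i)$), but your subsequent prose describes the correct grouping, matching the paper's terms (a)--(c).
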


\subsection{Data generating processes in Section~\ref{subsubsec:simu_fitted_lowd}}
\label{app:dgp_fitted_lowd}

In this part, we detail the data generating processes in Section~\ref{subsubsec:simu_fitted_lowd}, where we study the performance of the method in Section~\ref{sec:fitted} in a low dimensional setting. 
In all settings, we first generate i.i.d.~features $X_i$, and then generate the control outcome $Y_i(0) = m_0(X_i)+\epsilon_i$ for some prognostic score function $m_0(\cdot)$ and i.i.d.~exogenous errors $\{\epsilon_i\}$, as well as independently drawn treatment assignment indicators $T_i\sim \text{Bern}(e_i)$ for some propensity score $e_i\in (0,1)$. 

Setting 1 follows that of~\cite{kang2007demystifying}. We first generate $U\in \RR^{6}$ each entry independent from $N(0,1)$, and set $X\in \RR^4$ whose entries are given by $X_1 = e^{U_1/2}$, $X_2 = U_1/(1+e^{U_1})+10$, $X_3 = (U_1U_3/25+0.6)^3$, and $X_4 = (U_2U_4+20)^2$. The outcome is $Y(0) = 210+27.4U_1+13.7U_2+13.7U_3+ 13.7U_4+\epsilon$, where $\epsilon\sim N(0,1)$. For the $i$-th sample, the  propensity score is given by $e_i = 1/(1+e^{U_{i,1}-0.5U_{i,2}+0.25U_{i,3}+0.1U_{i,4}})$.

Settings 2-5 follow those in~\cite{chattopadhyay2020balancing}. In all cases, the features are $X\in \RR^6$, where $X_{1:3}\sim N(0,\Sigma)$ with $\Sigma = \begin{pmatrix}
    2&1&-1\\1&1&-0.5\\-1&-0.5&1
\end{pmatrix}$, $X_4 \sim \textrm{Unif}([-3,3])$, $X_5 = \chi_1^2$, and $X_6 \sim \text{Bern}(0.5)$ are independent of each other. 
The outcomes ($\epsilon_i \sim N(0,1)$) and treatments ($e_i=e(X_i)$) are generated with the following two designs: 
\begin{itemize}
    \item In design A, $m_0(x) = \sum_{j=1}^6 x_j$, and $e(x) = \Phi((x_1+2x_2-2x_3-x_4-0.5x_5+x_6)/\sigma)$, where $\Phi(\cdot)$ is the cumulative distribution function of a standard Gaussian random variable.
    \item In design B, $m_0(x) = (x_1+x_2+x_5)^2$, and $e(x) = \Phi((x_1+2x_2-2x_3-x_4-0.5x_5+x_6)/\sigma)$.
\end{itemize}
Specifically, setting 2 is design A with $\sigma=10$ (strong overlap); setting 3 is design B with $\sigma=10$; setting 4 is design A with $\sigma=\sqrt{30}$ (weak overlap), and setting 5 is design B with $\sigma=\sqrt{30}$.

\subsection{Data generating processes in Section~\ref{subsubsec:simu_fitted_highd}}
\label{app:dgp_fitted_highd}

Section~\ref{subsubsec:simu_fitted_highd} studies the performance of the method in Section~\ref{sec:fitted} in a high-dimensional setting where $\hat\phi$ contains the fitted prognostic score and a preliminary weight function involving a fitted propensity score. 
Again, we generate $Y_i(0)=m_0(X_i)+\epsilon_i$ where $\epsilon_i\sim N(0,1)$, and $T_i \sim \textrm{Bern}(e(X_i))$ for some function $e(\cdot)$. 
The dimensions are fixed for $X_i\in \RR^p$, $p=100$, and each element of $X_i$ is independent from a standard Gaussian clipped within $[-1,1]$ to ensure reasonable overlap. 

To diversity the model specification scenarios, we define a nonlinear transformation of the observed features   $g(X_i)\in \RR^{p}$, where $g(x)= (e^{x_{ 1}/2}, \sin(x_{ 2}), x_{3}^2, x_6/(1+e^{x_1}), x_4,x_5,x_{7:100})$. We define $\alpha \in \RR^p$ with $\alpha_j = \ind\{j\leq 10\}$ and $\beta\in \RR^p$ with $\beta_j = \ind\{j\leq 5\}$. Then, the four settings are specified as follows:
\vspace{0.5em}
\begin{itemize}
    \item Setting 1: $m_0(x)=\alpha^\top x$ and $e(x)=1/(1+\exp(-\beta^\top x))$. 
    \item Setting 2: $m_0(x) = \alpha^\top x$ and $e(x) =1/(1+\exp(-\beta^\top g(x)))$.
    \item Setting 3: $m_0(x) = \alpha^\top g(x)$ and $e(x) = 1/(1+\exp(-\beta^\top x))$. 
    \item Setting 4: $m_0(x) = \alpha^\top (x+0.2g(x))$, and $e(x) = 1/(1+\exp(-\beta^\top (x+0.2g(x)))$. 
\end{itemize}
\vspace{0.5em}
Since we use the Lasso to estimate $\hat\phi^{(k)}(\cdot)$, all four settings are misspecified for our consistency conditions (a well-specified propensity score model should be inverse linear, not logistic). Thus, the goal is mainly to test the robustness of cross balancing with respect to different degrees of misspecification. Among them, setting 2 tests misspecification with nonlinear propensity score, setting 3 tests misspecification with nonlinear prognostic score, while setting 4 tests slight misspecification in both prognostic and propensity scores. 

\subsection{Data generating processes in Section~\ref{subsubsec:simu_varsel_raw}}
\label{app:dgp_varsel_raw}

In Section~\ref{subsubsec:simu_varsel_raw}, we examine the performance of our method in Section~\ref{section_cross2} where balancing weights are obtained after selecting among observed covariates. 
To demonstrate the robustness of the procedure to imperfect variable selection, as well as the impact of the selection set to the estimation error, 
we design the following three data generating processes, fixing $p=100$. 
Like before, the outcomes and treatments are generated via $Y_i(0)=m_0(X_i)+\epsilon_i$ where $\epsilon_i\sim N(0,1)$ and $T_i\sim \textrm{Bern}(e(X_i))$ independently.
\vspace{0.5em}
\begin{itemize}
    \item Setting 1: This setting is inspired by~\cite{kang2007demystifying}. First, we generate $U\sim \RR^p$ where each entry is i.i.d.~from Unif$([0,1])$. We then generate $X\in \RR^p$ by $X_1=3(e^{U_1/2}-1)$, $X_2 = 2U_2/(1+e^{U_5})$, $X_3 = (U_2U_3/2+0.6)^3$, $X_4 = (U_2+U_4)^2$, and $X_{5:p}=U_{5:p}$. The prognostic score is $m_0(x,u)=1.74x_2+2x_3 + 0.1 \ind\{x_5>0.5\} + 0.2 (x_{8}-0.5)^2 + 0.1 u_1$, and the propensity score is $e(x) = 1/(1+\exp(0.5x_2+x_3+2x_5+0.1x_8+0.2(x_8-0.5)^2-1))$.  
    
    In this case, including observed covariates in $\hat{S}^{(k)}$ can only approximately recover the prognostic and propensity scores, since the prognostic score contains a component $0.1U_1$ and the propensity score is logistic and contains a quadratic term. This setup also poses challenges to outcome-only and propensity-only selection. If one only uses the outcome information, it is quite likely to miss $x_5$ which is important in $e(x)$; if one only focuses on treatment relationship, it is likely to miss $x_8$ which is important in $m_0(x)$. 
    \item Setting 2: We generate $X\in \RR^p$ using the same distribution as in settings 2-5 in Appendix~\ref{app:dgp_fitted_lowd}. Then, the prognostic score is $m_0(x)=x_1+x_5+0.1x_4+0.1x_6$, and the propensity score is $e(x) = 1/(1+\exp(1.5+(x_1+0.2x_2+0.2x_3+4x_4-0.1x_5+0.5x_6-6)/\sqrt{10}))$. 

    In this case, while the prognostic score is a linear function in $(x_1,x_4,x_5,x_6)$, the propensity score is logistic instead of inverse linear, providing a test for the robustness of function approximation conditions. In addition, outcome-only selection is likely to miss $x_4$ and $x_6$ which are important in $e(x)$, while treatment-only selection is likely to miss $x_5$ which is important in $m_0(x)$. 
    \item Setting 3: We generate $X\in \RR^p$ where each entry is i.i.d.~from $N(0,1)$. We then define $\alpha \in \RR^p$ with $\alpha_j=0.4$ for $1\leq j\leq 4$, and $\alpha_j=0.1(j-6)^{-2}$ for $j\geq 7$, as well as $\beta \in \RR^p$ by $\beta_j = \ind\{j\geq 5\} (j-4)^{-2}$. Both of them are designed to create a slight misspecification inspired by~\cite{belloni2014inference}. The prognostic score is given by $\mu_0(x)=(0.5 \alpha +0.1\beta)^\top x $, and the propensity score is given by $e(x) = 1/(1+\exp(-1+2\alpha^\top x + 0.2\beta^\top x^2))$. 

    In this setting, similar to~\cite{belloni2014inference}, the decaying coefficients for $j\geq 7$ makes it easy for variable selection algorithms to miss large indices, which then tests the robustness of our function approximation conditions.
\end{itemize}

\subsection{Data generating processes in Section~\ref{subsubsec:simu_varsel_basis}}
\label{app:dgp_varsel_basis}

Section~\ref{subsubsec:simu_varsel_basis} examines the performance of our method in Section~\ref{section_cross2} when the balancing features are selected from a dictionary of basis functions. We design the following three data generating processes adapted from~\cite{kang2007demystifying}. In all settings, after generating $U\in \RR^p$ for $p=4$ with each entry i.i.d.~from Unif$([0,1])$, the observed features are given by $X=g(U)\in \RR^p$, where $X_1=e^{U_1/2}$, $X_2 = U_1/(1+e^{U_1})$, $X_3 = (U_1U_3/25+0.6)^3$, and $X_4 = (U_2+U_4)^2$. The prognostic and propensity scores are linear in $U$, and their dependence on $X$ is intricate.

\begin{itemize}
    \item Setting 1: The prognostic score is $m_0(u)=210+27.4u_1+13.7u_2+13.7u_3+13.7u_4$, and the propensity score is $e(u) = 1/(1+\exp(1+u_1-0.5u_2+0.25u_3+0.1u_4)/2)$. 
    In this setting, the 
    \item Setting 2:  The prognostic score is $m_0(u) = 210 + 27.4 e^{u_1} + 13.7 u_2^2\ind\{u_2^2>0.2\} + 13.7 (u_2+u_3)^2+13.7 u_4 + 0.05 (u_2+u_3)\cdot u_2^2\ind\{u_2^2>0.6\}$, and the propensity score is $e(u)=1/(1+\exp(1+u_1-u_2+1.5u_3+0.1u_4))$. 
    \item Setting 3: The prognostic score is $m_0(u)=210+27.4u_1+13.7u_2+13.7u_3+13.7u_4$, and the propensity score is $e(x) = 1/(1+\exp(1+x_1-0.5x_2+0.25x_3-0.1x_4)/2)$.
    
\end{itemize}

In all three settings, the prognostic and propensity scores are complex nonlinear functions of the observed variables. 
As such, the linear space of the selected basis functions needs to accurately approximate them and fulfill the conditions in Section~\ref{section_cross2} for accurate estimation and reliable inference. 

\subsection{Additional results for variable selection}
\label{app:subsec_simu_varsel}

\begin{figure}
    \centering
    \begin{subfigure}{0.49\textwidth}
        \centering
        \includegraphics[width=\textwidth]{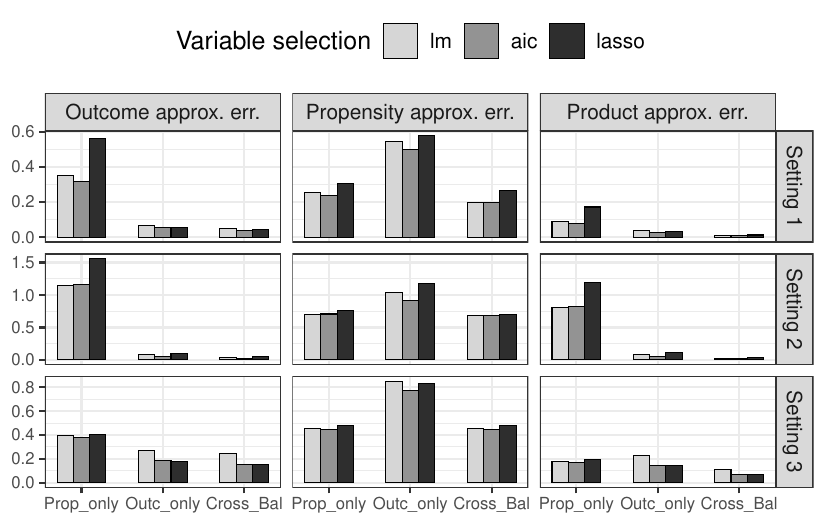} 
    \end{subfigure}
    \hfill
    \begin{subfigure}{0.49\textwidth}
        \centering
        \includegraphics[width=\textwidth]{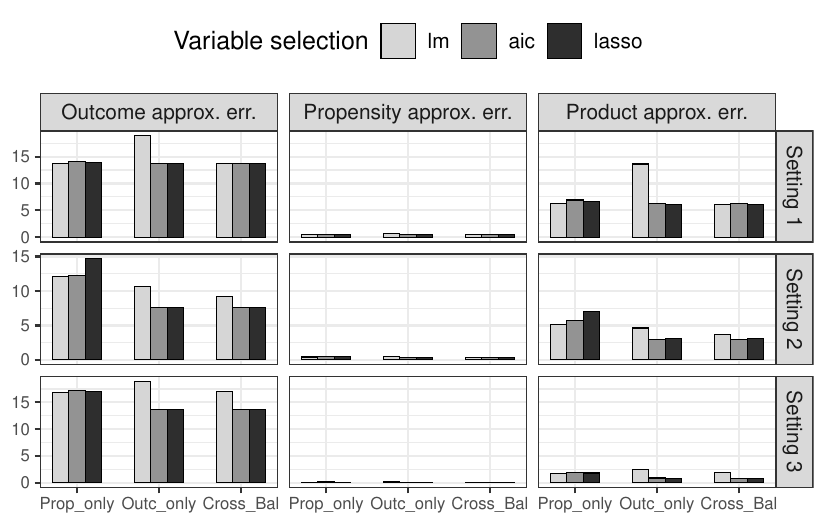} 
    \end{subfigure}
    \caption{
    {\small Approximation errors for selected variables in three estimators across all settings. The left panels display results for the settings in Section~\ref{subsubsec:simu_varsel_raw}, while the right panels correspond to those in Section~\ref{subsubsec:simu_varsel_basis}. In each column, outcome approx.~err.~is $\epsilon(m_0;\hat{S}^{(k)})$ averaged over two folds; propensity approx.~err.~is $\epsilon(w^*;\hat{S}^{(k)})$ averaged over two folds, and product approx.~err.~is their product.}}
    \label{fig:approx_err}
\end{figure}


\subsection{Additional results for case study}
\label{app:subsec_case_study}

\begin{figure}
    \centering
    \includegraphics[width=\linewidth]{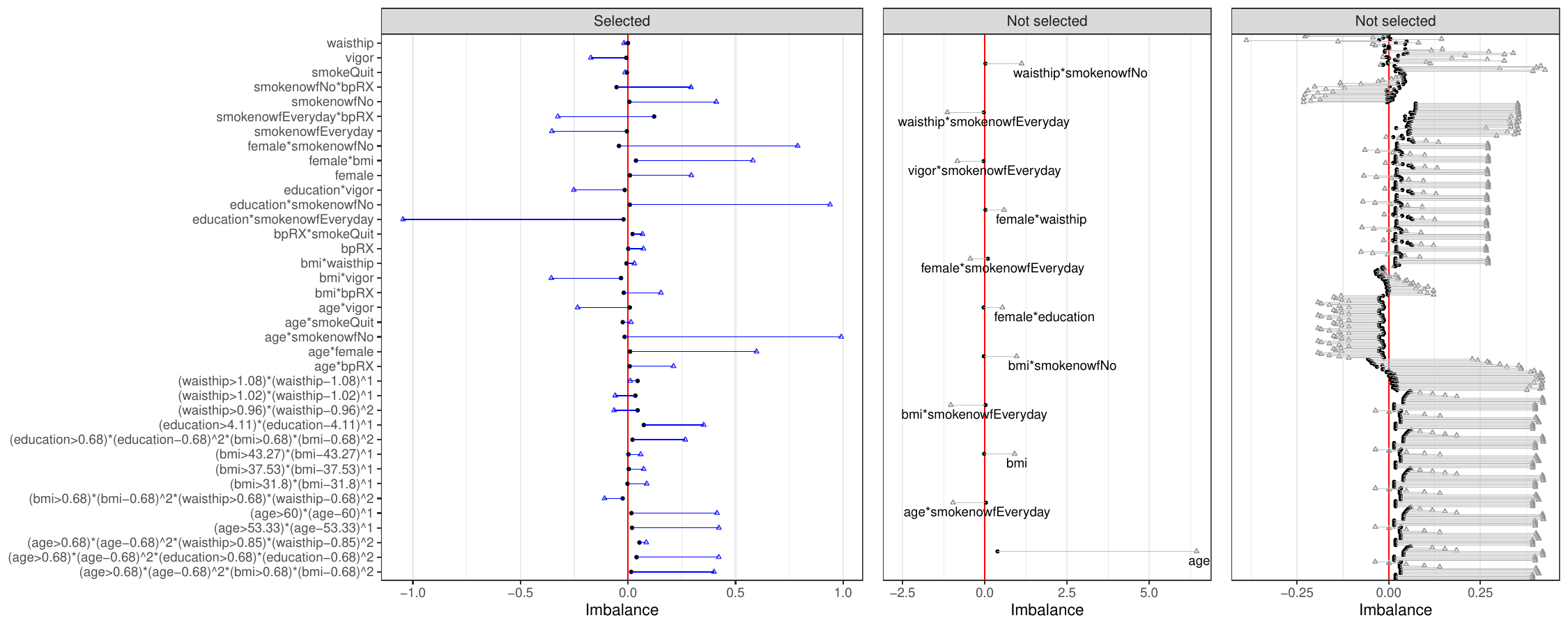}
    \caption{{\small Imbalance in the NHANES data before (triangles) and after (dots) re-weighting with cross balancing weights, where variables (basis functions being cubic splines with interaction terms) are selected via cross-validated Lasso. The left panel shows the imbalance measures for selected variables (the union of those selected in two folds) which are explicitly balanced in at least one fold. The middle and right panels show the imbalance measures for unselected variables which are not explicitly balanced, where the middle panel shows those with large before-balancing imbalance measures.}}
    \label{fig:real_imbalance_large}
\end{figure}

\section{Proof for cross-balancing with learned features}

\subsection{Proof of Theorem~\ref{thm:convergence}}
\label{app:proof_convergence}

\begin{proof}[Proof of Theorem~\ref{thm:convergence}]
    Without loss of generality, we incorporate the normalization constraint into the features $\hat\phi$ and balancing tolerance $\delta$, such that the first entry in $\hat\phi(X_i)$ is the intercept, and the first entry of $\delta$ equals zero. 
    We also fix the fold $k=1$ and the same conclusion for fold $k=2$ naturally applies. For notational simplicity, we then write $\hat\phi=\hat\phi^{(1)}$, and denote $\hat\Phi_c\in \RR^{|\cI_{c,1}|\times p}$ be the feature matrix in the control group, and $\hat\EE_t[\hat\phi] = \frac{1}{|\cI_{t,1}|}\sum_{j\in \cI_{t,1}}\hat\phi^{(1)}(X_j)$ be the sample mean of the features in the treated group. 
    
    By convex optimization theory, the Lagrangian of~\eqref{eq:opt_w2} is (writing $n = |\cI_{c,1}|$ with abuse of notations) 
    \$
    \cL(w,\lambda) = \|w\|_2^2 + \lambda_1^\top (\hat\Phi_c^\top w/n - \hat{\EE}_t[\hat\phi] -\delta) - \lambda_2^\top (\hat\Phi_c^\top w/n - \hat{\EE}_t[\hat\phi] +\delta_n),\quad \lambda_1,\lambda_2\succeq 0.
    \$
    Plugging in the value of $w$ that maximizes $\cL(w,\lambda)$,  the dual problem of~\eqref{eq:opt_w2} is 
    \$
    \cL_{\text{dual}}(\lambda) = - (\lambda_1-\lambda_2)^\top \hat\Phi_c^\top \hat\Phi_c (\lambda_1-\lambda_2)/(4n^2) - (\lambda_1-\lambda_2)^\top \hat\EE_t[\hat\phi] - (\lambda_1+\lambda_2)\delta_n.
    \$
    In addition, by duality, the optimal solution is $\hat{w} = -(2/n)\cdot\hat\Phi_c^\top (\hat\lambda_1-\hat\lambda_2) \in \RR^{n}$ where $(\hat\lambda_1,\hat\lambda_2)$ is the maximizer of $\cL_{\text{dual}}(\lambda)$. 
    Rewriting $\cL_{\text{dual}}(\lambda)$ in terms of $\lambda = \lambda_1-\lambda_2$ and $\eta = \lambda_1+\lambda_2$, and optimizing $\eta$ at $\eta=|\lambda|$, we yield the equivalent representation  $\hat{w}= \hat\Phi_c^\top \hat\lambda$, where $\hat\lambda $ minimizes the (unconstrained) convex objective  
    \$
    \hat{L}(\lambda) := \lambda^\top \hat\Sigma_c \lambda - 2\lambda^\top \hat\EE_t[\hat\phi] + 2|\lambda|^\top \delta_n, \quad \lambda \in \RR^p,
    \$
    and $\hat\Sigma_c = \hat\Phi_c^\top \hat\Phi_c/n$ is the sample covariance matrix. 

    We now consider the limiting objective 
    \$
    L(\lambda) = \lambda^\top \Sigma_c \lambda - 2\lambda^\top \EE_t[\phi^*],
    \$
    where $\EE_t[\phi^*]:= \EE[\phi^*(X)\given T=1]$, and $\Sigma_c := \EE[\phi^*(X)\phi^*(X)^\top\given T=0]$ is the limiting covariance matrix. 
    Let $\bar\lambda$ be the minimizer of $L(\lambda)$, which is $\bar\lambda = \Sigma_c^{-1} \EE_t[\phi^*]$.  
    Since $\nabla^2 \hat{L}(\lambda) = 2 \hat\Sigma_c$, $\|\hat\phi-\phi^*\|_{L_2} = o_P(1)$,  and $\phi^*$ is full rank, we know that for sufficiently large $n$, there exists some $\nu>0$ such that $\nabla^2 \hat{L}(\lambda) \succ \nu \cdot \mathbf{I}_{p\times p}$ for all $\lambda\in \RR^p$. Invoking Lemma~\ref{lem:convex} with $f=\hat{L}$, $\theta_0=\bar\lambda$ and $\theta  = \hat\lambda$, 
    we know that 
    \$
    \hat{L}(\hat\lambda)&\geq \hat{L}(\bar\lambda) + \nabla \hat{L}( \bar\lambda)^\top (\bar\lambda-\hat\lambda) + \nu/2\cdot \min \big\{ \|\hat\lambda-\bar\lambda\|^2, c\|\hat\lambda-\bar\lambda\| \big\} \\ 
    &\geq \hat{L}(\hat\lambda) + \nabla \hat{L}(\bar\lambda)^\top (\bar\lambda-\hat\lambda) + \nu/2\cdot \min \big\{ \|\hat\lambda-\bar\lambda\|^2, c\|\hat\lambda-\bar\lambda\| \big\}.
    \$
    since $\hat\lambda$ minimizes $\hat{L}$. By Cauchy-Schwarz inequality, we further have 
    \@\label{eq:bd_hat_lambda}
    \min \big\{ \|\hat\lambda-\bar\lambda\| , c \big\} \leq 2/\nu\cdot \|\nabla \hat{L}(\bar\lambda) \|.
    \@
    Using the fact that $\nabla L(\bar\lambda)=0$,  we have 
    \$
    \| \nabla \hat{L}(\bar\lambda) \| =  \| \nabla \hat{L}(\bar\lambda) - \nabla  {L}(\bar\lambda) \| 
    \leq \|  (\hat\Sigma_c-\Sigma_c) \bar\lambda \| + 2\|\hat\EE_t[\hat\phi] - \EE_t[\phi^*] \| \cdot  \|\bar\lambda\| + 2\delta_n \sqrt{p}.
    \$ 
    Since $\delta_n=o(1)$ and $\|\hat\phi-\phi^*\|_{L_2}=o_P(1)$, we know that $\|\hat\Sigma_c - \Sigma_c\|_2 = o_P(1)$ and $\|\hat\EE_t[\hat\phi] - \EE_t[\phi^*] \|= o_P(1)$. This further implies $\| \nabla \hat{L}(\bar\lambda) \| = o_P(1)$ and thus $\|\hat\lambda-\bar\lambda\|=o_P(1)$ due to~\eqref{eq:bd_hat_lambda}. We thus have 
    \$
    \big\| \hat{w} - (\Phi^*_c)^\top \bar\lambda\big\| = \big\| \hat\Phi_c^\top \hat\lambda - (\Phi^*_c)^\top \bar\lambda\big\| =   o_P(1),
    \$
    where $\Phi_c^*\in \RR^{|\cI_{c,1}|\times p}$  is the feature matrix whose $i$-th row is $\phi^*(X_i)$. Furthermore, letting $\bar{w}(X_i)=\phi^*(X_i)^\top \bar\lambda$,  the point estimate (for the current fold) is 
    \$
    \frac{1}{|\cI_{c,1}|}\sum_{i\in \cI_{c,1}} \hat{w}_i Y_i = o_P(1) + \frac{1}{|\cI_{c,1}|}\sum_{i\in \cI_{c,1}} \phi^*(X_i)^\top \bar\lambda Y_i(0) = \EE\big[ \bar{w}(X)Y(0)\given T=0  \big] + o_P(1)
    \$
    by the law of large numbers. 
    We thus complete the first part of Theorem~\ref{thm:convergence}.

    We now proceed to prove that $\bar\theta_0 = \theta_0$ under either of the given conditions. Recall that the limiting coefficient is $\bar\lambda = \Sigma_c^{-1}\EE_t[\phi^*]$, the minimizer of 
    \$
    L(\lambda) = \lambda^\top \Sigma_c \lambda - 2\lambda^\top \EE_t[\phi^*].
    \$
    Also, $\hat\theta=\bar\theta_0 = o_P(1)$, where $\bar\theta_0 = \EE[\phi^*(X)^\top\bar\lambda\cdot Y(0)\given T=0]$. 
    
    We first show the desired result under condition (i). In this case, since the true density ratio is $w^*(x)=\phi^*(x)^\top\lambda^*$, writing $\EE_c[\cdot]=\EE[\cdot\given T=0]$ and $\EE_t[\cdot]=\EE[\cdot\given T=1]$, we have 
    \$
    \bar\lambda &= \EE_c \big[\phi^*\phi^*(X)^\top \big]^{-1} \EE_t [\phi^*(X) ] = \EE_c \big[\phi^*\phi^*(X)^\top \big]^{-1} \EE_c [w(X)\phi^*(X) ]\\
    &= \EE_c \big[\phi^*\phi^*(X)^\top \big]^{-1} \EE_c [ \phi^*(X) \phi^*(X)^\top\lambda^*] = \lambda^*.
    \$
    As such, $\bar\theta_0 = \EE[\bar{w}(X)Y(0)\given T=0] = \EE[w^*(X)Y(0)\given T=0]=\theta_0$, which is the desired conclusion. 

    On the other hand, under condition (ii), we have 
    \$
    \bar\theta &=\EE[\bar{w}(X)Y(0)\given T=0] = \EE[\bar{w}(X)\phi^*(X)^\top \beta^*\given T=0] \\ 
    &= \EE[\bar\lambda^\top\phi^*(X)\phi^*(X)^\top \beta^*\given T=0] \\ 
    &= \bar\lambda^\top \Sigma_c \beta^* \\
    &= \EE_t[\phi^*(X)^\top\beta^*] = \EE[Y(0)\given T=1] = \theta_0,
    \$
    which concludes the proof of the second case. 
\end{proof}

\subsection{Proof of Proposition~\ref{prop:bias_reduc}}
\label{app:subsec_proof_bias}

\begin{proof}[Proof of Proposition~\ref{prop:bias_reduc}]
    For notational convenience, we consider the first half of cross-balancing with $k=1$, and we denote the  feature vector as $\hat\phi$, which can then be viewed as fixed. With $\delta=0$, using the duality analysis in the proof of Theorem~\ref{thm:convergence}, we see that  cross balancing returns 
\@\label{eq:form_hat_w}
\hat{w}_i = \hat\phi(X_i)^\top \hat\lambda , \quad \lambda = (\hat\Phi_c^\top \hat\Phi_c)^{-1} \hat\EE_t[\hat\phi]/n,
\@
where we recall that $\hat\Phi_c$ is the data matrix for $\hat\phi(X_i)$ in the control group, and $\bar\phi_t$ is the sample mean of $\hat\phi(X_j)$ for data in the treated group. This also means $\hat{w}_i$ is a linear combination of $\hat{m}(X_i)$ and $\hat{w}(X_i)$ together with other features in $\hat\phi(X_i)$ (note that $\hat{w}(X_i)$ is not the balancing weights $\hat{w}_i$).

We define the ``oracle'' AIPW estimator with true weights and regression functions:
\@\label{eq:oracle_aipw_est}
\hat\theta_{\aipw}^*:=
\frac{1}{|\cI_{c,1}|}\sum_{i\in \cI_{c,1}}  w^*(X_i)  \cdot \big\{ Y_i(0)-m_0(X_i) \big\} + \frac{1}{|\cI_{t,1}|}\sum_{j \in \cI_{t,1}} m_0(X_j) ,
\@
which is unbiased and most efficient. 
By the definition of the AIPW estimator, 
\$
\hat\theta_0^\aipw - \hat\theta_{\aipw}^* = &
\frac{1}{|\cI_{c,1}|}\sum_{i\in \cI_{c,1}} \big(\hat{w}(X_i)-w^*(X_i)\big) \big(m_0(X_i)-\hat{m}(X_i)\big) \\ 
&\quad + \underbrace{\frac{1}{|\cI_{c,1}|}\sum_{i\in \cI_{c,1}} (\hat{w}(X_i)-w^*(X_i))(Y_i(0)-m_0(X_i))}_{\text{term (ii)}} \\
&\quad + \underbrace{\frac{1}{|\cI_{t,1}|}\sum_{j \in \cI_{t,1}} (\hat{m}(X_j)-m_0(X_j)) -  \frac{1}{|\cI_{c,1}|}\sum_{i\in \cI_{c,1}} w^*(X_i) (\hat{m}(X_i)-m_0(X_i)) }_{\text{term (iii)}}.
\$
Note that $\EE[\text{term (ii)}] =0$ since $m_0(x)$ is the true regression function and $\hat{w}$ is obtained independent of the data it applies to. Since $w^*(X_i)$ is the true density ratio and  $\hat{m}$ is  independent of the data in $\cI_{c,k}$,  
\$
&\EE\Bigg[ \frac{1}{|\cI_{c,1}|}\sum_{i\in \cI_{c,1}} w^*(X_i) (\hat{m}(X_i)-m_0(X_i)) \Biggiven \cI_{c,2}\cup \cI_{t,2}\Bigg] \\ 
&= \EE[w^*(X)[\hat{m}-m_0](X)\given T=0] = \EE[ [\hat{m}-m_0](X)\given T=1] \\ 
&= \EE\Bigg[ \frac{1}{|\cI_{t,1}|}\sum_{j \in \cI_{t,1}} (\hat{m}(X_j)-m_0(X_j))  \Biggiven \cI_{c,2}\cup \cI_{t,2}\Bigg] ,
\$
where $(X,T)$ is an independent copy. Therefore, term (iii) is also mean zero. Finally, noting $\EE[\hat\theta_{\aipw}^*] = \theta_0$ yields the first equation in Proposition~\ref{prop:bias_reduc}.

On the other hand, by definition we have 
\$
\hat\theta_0 - \hat\theta_{\aipw}^* = &
\frac{1}{|\cI_{c,1}|}\sum_{i\in \cI_{c,1}} \hat{w}_i m_0 (X_i)   - \frac{1}{|\cI_{t,1}|}\sum_{j \in \cI_{t,1}}  m_0(X_j) \\
& + \underbrace{\frac{1}{|\cI_{c,1}|}\sum_{i\in \cI_{c,1}} \big(\hat{w}_i - w^*(X_i)\big) \big(Y_i(0) - m_0(X_i)\big)}_{\text{term (i)}} .
\$
Here, since $\hat{w}_i$ only depends on the covariates in $  \cI_{c,1}$, by the tower property we know $\EE[\text{term(i)}]=0$. 
For the remaining part in this decomposition, due to the balancing conditions, for any $\beta\in \RR^p$, it holds that 
\$
&\frac{1}{|\cI_{c,1}|}\sum_{i\in \cI_{c,1}} \hat{w}_i m_0(X_i)   - \frac{1}{|\cI_{t,1}|}\sum_{j \in \cI_{t,1}} m_0(X_j) \\ 
&= 
\frac{1}{|\cI_{c,1}|}\sum_{i\in \cI_{c,1}} \hat{w}_i \cdot \big\{ m_0(X_i) - \beta^\top \hat\phi(X_i) \big\}  - \frac{1}{|\cI_{t,1}|}\sum_{j \in \cI_{t,1}} \big\{ m_0(X_j) - \beta^\top \hat\phi(X_i)\big\}.
\$
Since $\hat\phi$ contains $\hat\mu$, equivalently, for any (potentially data-dependent) $\beta\in \RR^p$, it holds that 
\$
&\frac{1}{|\cI_{c,1}|}\sum_{i\in \cI_{c,1}} \hat{w}_im_0(X_i)   - \frac{1}{|\cI_{t,1}|}\sum_{j \in \cI_{t,1}} m_0(X_j) \\ 
&= 
\frac{1}{|\cI_{c,1}|}\sum_{i\in \cI_{c,1}} \hat{w}_i \cdot \big\{ m_0(X_i) -\hat\mu(X_i) - \beta^\top \hat\phi(X_i) \big\}  - \frac{1}{|\cI_{t,1}|}\sum_{j \in \cI_{t,1}} \big\{ m_0(X_j) -\hat\mu(X_j)- \beta^\top \hat\phi(X_j)\big\}.
\$
We take $\beta$ as the sample OLS projection coefficient of $m_0(X_i)-\hat\mu(X_i)$ onto the space of $\hat\phi(X_i)$, 
and write 
\$
\Pi^{(k)}_{\hat\phi}[\hat{m}^{(k)}   - m_0  ] (x) = \hat{m}^{(k)}(x) - m_0(x)-\beta^\top \hat\phi(x).
\$
Since $\beta$ is the projection coefficient, we have (we now use the superscript $k$ to emphasize the data fold)
\$
\frac{1}{|\cI_{c,1}|}\sum_{i\in \cI_{c,1}} \hat\phi(X_i) \cdot \Pi^{(k)}_{\hat\phi}[\hat{m}^{(k)}  - m_0  ] (X_i)  = 0.
\$
Thus, for any $\eta\in \RR^p$, it holds that 
\$
&\frac{1}{|\cI_{c,1}|}\sum_{i\in \cI_{c,1}}  \hat{w}_i m_0(X_i)   - \frac{1}{|\cI_{t,1}|}\sum_{j \in \cI_{t,1}} m_0(X_j) \\
&= 
\frac{1}{|\cI_{c,1}|}\sum_{i\in \cI_{c,1}}  \big\{ \hat{w}_i - \eta^\top \hat\phi(X_i)\big\}  \cdot \Pi^{(k)}_{\hat\phi}[\hat{m}^{(k)}  - m_0  ] (X_i) - \frac{1}{|\cI_{t,1}|}\sum_{j \in \cI_{t,1}} \Pi^{(k)}_{\hat\phi}[\hat{m}^{(k)}  - m_0  ] (X_j) \\ 
&= \frac{1}{|\cI_{c,1}|}\sum_{i\in \cI_{c,1}} \big\{ \hat{w}_i - w^*(X_i) - \eta^\top \hat\phi(X_i)\big\}  \cdot \Pi^{(k)}_{\hat\phi}[\hat{m}^{(k)}  - m_0  ] (X_i)  \\ 
&\quad + \underbrace{\frac{1}{|\cI_{c,1}|}\sum_{i\in \cI_{c,1}}    w^*(X_i)   \cdot \Pi^{(k)}_{\hat\phi}[\hat{m}^{(k)}  - m_0  ] (X_i) - \frac{1}{|\cI_{t,1}|}\sum_{j \in \cI_{t,1}} \Pi^{(k)}_{\hat\phi}[\hat{m}^{(k)}  - m_0  ] (X_j)}_{\text{term (iv)}}.
\$
Note that term (iv) has mean zero since $w^*(X_i)$ is the true density ratio between a control covariate $X_i$ and treated covariate $X_j$. 
In addition, by~\eqref{eq:form_hat_w},   $\hat{w}_i$ is a linear combination of $\hat{w}(X_i)$ and $\hat\mu(X_i)$, and thus it lives in the linear space spanned by $\hat\phi(X_i)$. Consequently, for any $\eta\in \RR^p$, there exists some $\tilde\eta\in \RR^p$ such that 
\$
&\frac{1}{|\cI_{c,1}|}\sum_{i\in \cI_{c,1}} \big\{ \hat{w}_i - w^*(X_i) - \eta^\top \hat\phi(X_i)\big\}  \cdot \Pi^{(k)}_{\hat\phi}[\hat{m}^{(k)}  - m_0  ] (X_i) \\
&= \frac{1}{|\cI_{c,1}|}\sum_{i\in \cI_{c,1}} \big\{ \hat{w}(X_i) - w^*(X_i) - \tilde\eta^\top \hat\phi(X_i)\big\}  \cdot \Pi^{(k)}_{\hat\phi}[\hat{m}^{(k)}  - m_0  ] (X_i) ,
\$ 
and vice versa (i.e., the mapping between $\eta$ and $\tilde\eta$ is a bijection). 
We now set $\eta$ as the sample OLS coefficient that projects $\hat{w}(X_i)-w^*(X_i)$ onto the space of $\hat\phi(X_i)$. This leads to 
\$
&\frac{1}{|\cI_{c,1}|}\sum_{i\in \cI_{c,1}}  \hat{w}_i m_0(X_i)   - \frac{1}{|\cI_{t,1}|}\sum_{j \in \cI_{t,1}} m_0(X_j) \\
&= \frac{1}{|\cI_{c,1}|}\sum_{i\in \cI_{c,1}} \big\{ \hat{w}(X_i) - w^*(X_i) - \tilde\eta^\top \hat\phi(X_i)\big\}  \cdot \Pi^{(k)}_{\hat\phi}[\hat{m}^{(k)}  - m_0  ] (X_i)  + \text{term(iv)} \\ 
&= \frac{1}{|\cI_{c,1}|}\sum_{i\in \cI_{c,1}} \Pi^{(k)}_{\hat\phi}[\hat{w}^{(k)}  - w^*  ] (X_i)  \cdot \Pi^{(k)}_{\hat\phi}[\hat{m}^{(k)}  - m_0  ] (X_i)  + \text{term(iv)}.
\$
The above arguments lead to 
\$
\EE\big[\hat\theta_0 +\hat\theta_\aipw^*\big] = 
\EE\Bigg[\frac{1}{|\cI_{c,1}|}\sum_{i\in \cI_{c,1}} \Pi^{(k)}_{\hat\phi}[\hat{w}^{(k)}  - w^*  ] (X_i)  \cdot \Pi^{(k)}_{\hat\phi}[\hat{m}^{(k)}  - m_0  ] (X_i) \Bigg],
\$
since $\EE[\text{term (i)}] =\EE[\text{term (iv)}] = 0$. 
That is, cross-balancing projects out the component in the space of $\hat\phi$ from the biases in $\hat{w}$ and $\hat{\mu}$, completing the proof for the second conclusion in Proposition~\ref{prop:bias_reduc}. 
\end{proof}

\subsection{Proof of Theorem~\ref{thm:inference}}
\label{app:subsec_proof_inference}

\begin{proof}[Proof of Theorem~\ref{thm:inference}]
    We first fix a fold $k=1$, and similar results naturally apply to the other fold. 
    For notational simplicity, we sometimes write $\hat\phi$ instead of $\hat\phi^{(k)}$. 
    
    Following the convex analysis in the proof of Theorem~\ref{thm:convergence}, we know 
     $\hat{w}= \hat\Phi_c^\top \hat\lambda$, where $\hat\Phi_c$ is the data matrix for $\hat\phi(X_i)$, $i\in \cI_{c,1}$, and  $\hat\lambda $ minimizes the (unconstrained) convex objective  
    \$
    \hat{L}(\lambda) := \lambda^\top \hat\Sigma_c \lambda - 2\lambda^\top \hat\EE_t[\hat\phi] + 2|\lambda|^\top \delta, \quad \lambda \in \RR^p,
    \$
    and $\hat\Sigma_c = \hat\Phi_c^\top \hat\Phi_c/n$ is the sample covariance matrix.  
    Consider the limiting objective 
    \$
    L(\lambda) = \lambda^\top \Sigma_c \lambda - 2\lambda^\top \EE_t[\phi^*],
    \$
    where $\EE_t[\phi^*]:= \EE[\phi^*(X)\given T=1]$, and $\Sigma_c := \EE[\phi^*(X)\phi^*(X)^\top\given T=0]$ is the limiting covariance matrix. 
    Let $\bar\lambda$ be the minimizer of $L(\lambda)$, which is $\bar\lambda = \Sigma_c^{-1} \EE_t[\phi^*]$.  
    Similar to the proof of Theorem~\ref{thm:convergence}, for sufficiently large $n$, there exists some $\nu>0$ such that $\nabla^2 \hat{L}(\lambda) \succ \nu \cdot \mathbf{I}_{p\times p}$ for all $\lambda\in \RR^p$. Invoking Lemma~\ref{lem:convex} with $f=\hat{L}$, $\theta_0=\bar\lambda$ and $\theta  = \hat\lambda$, 
    we know that 
    \$
    \hat{L}(\hat\lambda)&\geq \hat{L}(\bar\lambda) + \nabla \hat{L}( \bar\lambda)^\top (\bar\lambda-\hat\lambda) + \nu/2\cdot \min \big\{ \|\hat\lambda-\bar\lambda\|^2, c\|\hat\lambda-\bar\lambda\| \big\} \\ 
    &\geq \hat{L}(\hat\lambda) + \nabla \hat{L}(\bar\lambda)^\top (\bar\lambda-\hat\lambda) + \nu/2\cdot \min \big\{ \|\hat\lambda-\bar\lambda\|^2, c\|\hat\lambda-\bar\lambda\| \big\}.
    \$
    since $\hat\lambda$ minimizes $\hat{L}$. By Cauchy-Schwarz inequality, we further have 
    \@\label{eq:bd_hat_lambda}
    \min \big\{ \|\hat\lambda-\bar\lambda\| , c \big\} \leq \|\nabla \hat{L}(\bar\lambda) \|.
    \@
    Using the fact that $\nabla L(\lambda^*)=0$,  we have 
    \$
    \| \nabla \hat{L}(\bar\lambda) \| =  \| \nabla \hat{L}(\bar\lambda) - \nabla  {L}(\bar\lambda) \| 
    \leq \|  (\hat\Sigma_c-\Sigma_c) \bar\lambda \| + 2\|\hat\EE_t[\hat\phi] - \EE_t[\phi^*] \| \cdot  \|\bar\lambda\| + 2\delta.
    \$ 
    Since $\delta=o(n^{-1/4})$ and $\|\hat\phi-\phi^*\|_{L_2}=o_P(n^{-1/4})$, we know that $\|\hat\Sigma_c - \Sigma_c\|_2 = o_P(n^{-1/4})$ and $\|\hat\EE_t[\hat\phi] - \EE_t[\phi^*] \|= o_P(n^{-1/4})$. This further implies $\| \nabla \hat{L}(\bar\lambda) \| = o_P(n^{-1/4})$ and thus $\|\hat\lambda-\bar\lambda\|=o_P(n^{-1/4})$ due to~\eqref{eq:bd_hat_lambda}.  Finally, similar to the proof of Theorem~\ref{thm:convergence}, under the given conditions we have $\bar\lambda=\lambda^*$.

    We now proceed to analyze the estimator $\hat\theta_0$ related to the first fold, i.e., 
    \$
    \hat\theta_0^{(k)} := \frac{1}{|\cI_{c,k}|}\sum_{i\in \cI_{c,k}}\hat{w}_i Y_i(0).
    \$
    We will show that under the given conditions, it is asymptotically equivalent to the oracle AIPW estimator 
    \$
    \hat\theta_{\aipw}^{(k)}:= \frac{1}{|\cI_{c,k}|}\sum_{i\in \cI_{c,k}} w^*(X_i)(Y_i - m_0(X_i)) + \frac{1}{|\cI_{t,k}|}\sum_{j\in \cI_{t,k}} m_0(X_j).
    \$
    Define the difference after balancing as 
    \$
    \Delta^{(k)} = \frac{1}{|\cI_{c,1}|} \sum_{i\in \cI_{c,1}} \hat{w}_i \hat\phi(X_i) - \frac{1}{|\cI_{t,1}|} \sum_{i\in \cI_{t,1}} \hat\phi(X_i) \in \RR^p.
    \$
    The balancing condition implies $|\Delta^{(k)}|\leq \delta$ where $|\cdot|$ denotes entry-wise absolute value for a vector. As such, 
    \@\label{eq:est_decomp}
    \hat\theta_0^{(k)} = \frac{1}{|\cI_{c,1}|}\sum_{i\in \cI_{c,1}}\hat{w}_i \cdot (Y_i(0) -\hat\phi(X_i)^\top \beta^*)  + \frac{1}{|\cI_{t,1}|} \sum_{i\in \cI_{t,1}} \hat\phi(X_i)^\top \beta^* + (\beta^*)^\top \Delta^{(k)}.
    \@
    Using~\eqref{eq:est_decomp} and reorganizing terms, we thus obtain 
    \$
    \hat\theta_0^{(k)} - \hat\theta_{\aipw}^{(k)}
    = & (\beta^*)^\top \Delta^{(k)} +  \underbrace{
    \frac{1}{|\cI_{c,1}|}\sum_{i\in \cI_{c,1}} (\hat{w}_i-w^*(X_i))(Y_i-m_0(X_i))
    }_{\text{term (i)}} \\ 
    & + \underbrace{
    \frac{1}{|\cI_{c,1}|}\sum_{i\in \cI_{c,1}} (\hat{w}_i - w^*(X_i))(m_0(X_i) - \hat\phi(X_i)^\top \beta^*)
    }_{\text{term (ii)}} \\ 
    & + \underbrace{
    \frac{1}{|\cI_{c,1}|}\sum_{i\in \cI_{c,1}} w^*(X_i)(m_0(X_i) - \hat\phi(X_i)^\top \beta^*)  - \frac{1}{|\cI_{t,1}|}\sum_{j\in \cI_{t,1}} (m_0(X_j) - \hat\phi(X_j)^\top \beta^*)
    }_{\text{term (iii)}}.
    \$
    First of all, define $\cD_{1,X}:=\{X_i,T_i,Y_i\}_{i\in \cI_{c,2}} \cup \{X_i,T_i\}_{i\in \cI_{t,2}} \cup \{X_i,T_i\}_{i\in \cI_{t,1}\cup \cI_{t,2}}$, which contains data in  $\cI_{c,2}\cup\cI_{t,2}$ and the covariates in $\cI_{c,1}\cup \cI_{t,1}$. 
    By construction, $\hat{w}_i$ is measurable with respect to $\cD_{1,X}$, and conditional on $\cD_{1,X}$, every $Y_i$ is conditionally independent across $i \in \cI_{c,1}$ with mean $m_0(X_i)$.  Therefore, 
    \$
    \EE\Big[|\text{term (i)}|^2 \Biggiven \cD_{1,X} \Big] 
    = \frac{1}{|\cI_{c,1}|^2}\sum_{i\in \cI_{c,1}} \EE\Big[ (\hat{w}_i-w^*(X_i))^2(Y_i-m_0(X_i))^2 \Biggiven \cD_{1,X} \Big] .
    \$
    By the tower property and the bounded conditional variance condition, we know 
    \$
    \EE\Big[|\text{term (i)}|^2   \Big] 
    \leq M_2^2\cdot \EE\Bigg[ \frac{1}{|\cI_{c,1}|^2}\sum_{i\in \cI_{c,1}}  (\hat{w}_i-w^*(X_i))^2   \Bigg] .
    \$
    By the definition of $\hat{w}_i$ and the convergence of $\hat\phi$ to $\phi^*$ and $\hat\lambda$ to $\lambda^*$, we know that $\EE [|\text{term (i)}|^2] = o(1/n)$. Therefore, by Markov's inequality, this further implies $\text{term (i)} = o_P(n^{-1/2})$. 

    By H\"older's inequality, 
    \$
    \big|  \text{term (ii)} \big| 
    &\leq \sqrt{\frac{1}{|\cI_{c,1}|}\sum_{i\in \cI_{c,1}} (\hat{w}_i - w^*(X_i))^2} \cdot \sqrt{\frac{1}{|\cI_{c,1}|}\sum_{i\in \cI_{c,1}} 
 (m_0(X_i) - \hat\phi(X_i)^\top \beta^*)^2} \\
 & = \sqrt{\frac{1}{|\cI_{c,1}|}\sum_{i\in \cI_{c,1}} \big[ \hat\phi(X_i)^\top\hat\lambda - \phi^*(X_i)^\top \lambda^*\big]^2} \cdot \sqrt{\frac{1}{|\cI_{c,1}|}\sum_{i\in \cI_{c,1}} 
  \big[  \phi^*(X_i)^\top \beta^* - \hat\phi(X_i)^\top \beta^*\big]^2} .
    \$
    By Cauchy-Schwarz inequality, 
    \$
    \frac{1}{|\cI_{c,1}|}\sum_{i\in \cI_{c,1}} \big[ \hat\phi(X_i)^\top\hat\lambda - \phi^*(X_i)^\top \lambda^*\big]^2 \leq \frac{2}{|\cI_{c,1}|}\sum_{i\in \cI_{c,1}} \Big\{ \big[ \hat\phi(X_i)^\top\hat\lambda - \phi^*(X_i)^\top \hat\lambda \big]^2 +  \big[  \phi^*(X_i)^\top\hat\lambda - \phi^*(X_i)^\top  \lambda^* \big]^2 \Big\},
    \$
    where due to the $o_P(n^{-1/4})$ convergence rates of $\hat\phi$ to $\phi^*$, 
    \$
    \frac{2}{|\cI_{c,1}|}\sum_{i\in \cI_{c,1}}  \big[ \hat\phi(X_i)^\top\hat\lambda - \phi^*(X_i)^\top \hat\lambda \big]^2 &\leq \frac{2}{|\cI_{c,1}|}\sum_{i\in \cI_{c,1}}  \big\| \hat\phi(X_i)- \phi^*(X_i) \big\|^2 \| \hat\lambda \|^2\\
    &= \|\hat\lambda\|^2 \cdot O_P( \|\hat\phi-\phi^*\|_{L_2}^2) = o_P(n^{-1/2}).
    \$
    On the other hand, due to the $o_P(n^{-1/4})$ convergence rates of $\hat\lambda$ to $\lambda^*$,
    \$
    \frac{2}{|\cI_{c,1}|}\sum_{i\in \cI_{c,1}}  \big[  \phi^*(X_i)^\top\hat\lambda - \phi^*(X_i)^\top  \lambda^* \big]^2 
    \leq \|\hat\lambda-\lambda^*\|^2 \cdot O_P(\EE[\|\phi^*(X)\|^2\given T=0]) = o_P(n^{-1/2}).
    \$
    Similarly, we can show 
    \$
    \frac{1}{|\cI_{c,1}|}\sum_{i\in \cI_{c,1}} 
  \big[  \phi^*(X_i)^\top \beta^* - \hat\phi(X_i)^\top \beta^*\big]^2 = o_P(n^{-1/2}).
    \$
    Putting these bounds together, we know $|\text{term (ii)}| = o_P(n^{-1/2})$. 

    Since $w^*(X_i)$ is the true density ratio and $\hat\phi(\cdot)$ is obtained from another fold, we have 
    \$
    \EE\Big[ w^*(X_i)(m_0(X_i) - \hat\phi(X_i)^\top \beta^*) \Biggiven T_i=0, \cI_{c,2}\cup\cI_{t,2}\Big] = \EE\Big[   m_0(X_j) - \hat\phi(X_j)^\top \beta^*  \Biggiven T_j=0, \cI_{c,2}\cup\cI_{t,2}\Big] =: \mu_0
    \$
    for $i\in \cI_{c,1}$ and $j\in \cI_{t,1}$. Therefore, 
    \@\label{eq:term3}
    \text{term (iii)} = \frac{1}{|\cI_{c,1}|}\sum_{i\in \cI_{c,1}} \big[ w^*(X_i)(m_0(X_i) - \hat\phi(X_i)^\top \beta^*) - \mu_0\big]  - \frac{1}{|\cI_{t,1}|}\sum_{j\in \cI_{t,1}} \big[ m_0(X_j) - \hat\phi(X_j)^\top \beta^* - \mu_0\big].
    \@
    Conditional on full data in $\cI_{c,2}\cup\cI_{c,2}$ (and $\{T_i\}_{i\in \cI_{c,1}\cup\cI_{t,1}}$ to be rigorous), each item in the summation is conditionally independent and mean zero. This means 
    \$
    &\EE\Big[|\text{term (iii)}|^2\Biggiven \cI_{c,2}\cup\cI_{c,2}\cup \{T_i\}_{i\in \cI_{c,1}\cup\cI_{t,1}}\Big] \\
    &= \frac{1}{|\cI_{c,1}|} \EE\Big[ (w^*(X_i)(m_0(X_i) - \hat\phi(X_i)^\top \beta^*) - \mu_0)^2\Biggiven \cI_{c,2}\cup\cI_{c,2}\cup \{T_i\}_{i\in \cI_{c,1}\cup\cI_{t,1}}\Big] \\
    &\quad + \frac{1}{|\cI_{t,1}|}\sum_{j\in \cI_{t,1}} \EE\Big[ (m_0(X_j) - \hat\phi(X_j)^\top \beta^* - \mu_0)^2 \Biggiven \cI_{c,2}\cup\cI_{c,2}\cup \{T_i\}_{i\in \cI_{c,1}\cup\cI_{t,1}}\Big].
    \$
    It's straightforward to see that each conditional expectation in the summation  is $O_P(\|\hat\phi-\phi^*\|_{L_2}^2) = o_P(n^{-1/2})$, hence by the Markov's inequality, for any fixed $\varepsilon>0$, 
    \$
    \PP\Big(  |\text{term (iii)}|>\varepsilon \cdot \sqrt{n} \Biggiven \cI_{c,2}\cup\cI_{c,2}\cup \{T_i\}_{i\in \cI_{c,1}\cup\cI_{t,1}} \Big) \leq \frac{ o_P(1)}{\varepsilon^2} = o_P(1).
    \$
    Taking expectation over the random variable on the LHS of the above equation and applying the dominated convergence theorem (with any subsequence of $n$) then yields $|\text{term (iii)}|=o_P(n^{-1/2})$. 

    Putting the bounds on all three terms together, we have 
    \$
\hat\theta_0^{(k)} - \hat\theta_{\aipw}^{(k)} = o_P(n^{-1/2}) +(\beta^*)^\top \Delta^{(k)}  = o_P(n^{-1/2}) 
    \$
    since $|\Delta^{(k)}|\leq \delta$ and $\delta = o_P(n^{-1/2})$. Using the same result for $k=2$, we then have 
    \$
    \hat\theta_0 &= \frac{1}{n}\sum_{i=1}^n \frac{1-T_i}{n_c/n} w^*(X_i)[Y_i-m_0(X_i)] + \frac{T_i}{n_t/n} m_0(X_i) + o_P(n^{-1/2}) \\ 
    &= \frac{1}{n}\sum_{i=1}^n \frac{1-T_i}{1-p} w^*(X_i)[Y_i-m_0(X_i)] + \frac{T_i}{p} m_0(X_i) + o_P(n^{-1/2}),
    \$
    which leads to the desired result by the central limit theorem. 

\paragraph{Variance estimation.} Finally, we can estimate the asymptotic variance $\sigma_0^2$ via 
\$
\hat\sigma^2 = \frac{1}{n_0(1-\hat{p})} \sum_{T_i=0} \big[\hat{w}_i\cdot (Y_i-\hat\beta^\top \hat\phi(X_i))\big]^2 + \frac{1}{n_1\hat{p}}\sum_{T_i=1} (\hat\beta^\top \hat\phi(X_i) - \hat\beta^\top \bar\phi)^2,
\$
where $\hat{w}_i$ is the cross-balancing weights. For $i \in \cI_{c,k}\cup\cI_{t,k}$, we write  $\hat\phi(X_i)=\hat\phi^{(k)}(X_i)$, and $\bar\phi = \bar\phi^{(k)} = \frac{1}{|\cI_{t,k}|}\sum_{i\in\cI_{t,k}}\hat\phi^{(k)}(X_i)$, and $\hat\beta$ is the OLS coefficient of $\{Y_i\}_{i\in \cI_{c,k}}$ over $\{\hat\phi^{(k)}(X_i)\}_{i\in \cI_{c,k}}$.

We now show that $\hat\sigma^2 = \sigma_0^2+o_P(1)$. Under the given conditions, following the same arguments as before, we have $\|\hat{w} - w^*(\cdot)\|_2 = o_P(1)$. Since $m_0(x) = (\beta^*)^\top \phi^*(x)$, the covariance matrix of $\phi^*(\cdot)$ is full-rank, and $\|\hat\phi^{(k)}-\phi^*(\cdot)\|_{L_2} = o_P(1)$, it is straightforward to show that $\|\hat\beta^{(k)} -\beta^*\| = o_P(1)$ for $k=1,2$. This means 
\$
\hat\sigma^2 = \frac{1}{n_0(1-\hat{p})} \sum_{T_i=0} \big[{w}^* (X_i)(Y_i-m_0(X_i))\big]^2 + \frac{1}{n_1\hat{p}}\sum_{T_i=1} (m_0(X_i) - \EE[m_0(X)\given T=1])^2 + o_P(1) ,
\$
and thus $\hat\sigma^2 = \sigma_0^2+o_P(1)$,
which concludes the proof of Theorem~\ref{thm:inference}.
\end{proof}

\subsection{Proof of Theorem~\ref{thm:multi_param}}
\label{app:subsec_proof_multi_param}

\begin{proof}[Proof of Theorem~\ref{thm:multi_param}]
We first fix a fold $k=1$, and similar results naturally apply to the other fold. 
    For notational simplicity, we sometimes write $\hat\phi$ in the place of $\hat\phi^{(k)}$,   $\hat{m}_j(x)$ in the place of  $\hat{m}_{j,\hat{\theta}_j^{(k)}}(x)$, and $\hat{w}_\ell(x)$ in the place of $ e_{\ell,\hat\eta_\ell^{(k)}}(x) /[1-e_{\ell,\hat\eta_\ell^{(k)}}(x)]$. 
    Thus, the fitted features are written as 
    \$
    \hat\phi(x) = \big(\hat{m}_1(x),\dots,\hat{m}_J(x),\hat{w}_1(x),\dots,\hat{w}_L(x)\big). 
    \$
    We write $n_{c,1} = |\cI_{c,1}|$ and $n_{t,1}=|\cI_{t,1}|$. We assume $\theta_j$'s and $\eta_j$'s are of the same dimension $d_j=d_\ell \equiv d$ for some $d\in \NN^+$, for notational simplicity. 
    We also define the limiting features 
    \$
    \phi^*(x) = \big(m_{1,\bar\theta_1}(X),\dots,m_{J,\bar\theta_J}(X),  w_{1,\eta_1}(X) ,\dots, w_{L,\eta_L}(X) \big),
    \$
    where $w_{\ell,\eta_\ell}(x) = e_{\ell,\eta_\ell}(x)/[1-e_{\ell,\eta_\ell}(x)]$. 
    The given conditions imply $\|\hat\phi(\cdot)-\phi^*(\cdot)\|_{L_2} = O_P(n^{-1/2})$.  Precisely, denoting $\nabla m_{j,\theta}(x) = \frac{\partial m_{j,\theta_j}}{\partial \theta_j}(x) \mid_{\theta_j=\theta} \in \RR^d$ and similarly $\nabla w_{\ell,\eta}(x) \in \RR^d$, we have 
    \$
    \hat{m}_j(x) - m_{j,\bar\theta_j}(x) = \nabla m_{j,\bar\theta_j}(x) (\hat\theta_j^{(k)} - \bar\theta_j) + O_P(1/n) =   \frac{1}{n}\sum_{i\in \cI_{c,3-k}\cup\cI_{t,3-k}} f_j(D_i) \nabla m_{j,\bar\theta_j}(x)   + O_P(1/n).
    \$
    For notational convenience, we denote $\hat\EE^\dag[f_j] = \frac{1}{n}\sum_{i\in \cI_{c,3-k}\cup\cI_{t,3-k}} f_j(D_i)$, whose population mean is zero,  and $\dot{m}_j(x) = \nabla m_{j,\bar{\theta_j}}(x) \in \RR^d$, and similarly for these components in the weight models. We then have 
    \$
    \phi^\Delta(x)&:= \hat\phi(x) - \phi^*(x) \\ 
    &= \big(\hat{\EE}^\dag[f_1]^\top \dot{m}_1(x),\dots, \hat{\EE}^\dag[f_J]^\top \dot{m}_J(x), \hat{\EE}^\dag[g_1]^\top \dot{w}_1(x),\dots, \hat{\EE}^\dag[g_L]^\top \dot{w}_L(x) \big)^\top\in \RR^{J+L}.
    \$
    Here the randomness in $\hat\EE^\dag[\cdot]$ from the model-fitting data $\cI_{c,3-k}\cup\cI_{t,3-k}$, while $\nabla \phi^*$ is a fixed mapping. 
    
    Following the convex analysis in the proof of Theorem~\ref{thm:convergence}, we know the cross-balancing weights are 
     $\hat{w}= \hat\Phi_c^\top \hat\lambda$, where $\hat\Phi_c$ is the data matrix for $\hat\phi(X_i)$, $i\in \cI_{c,1}$, and  $\hat\lambda $ minimizes the  convex objective  
    \$
    \hat{L}(\lambda) := \lambda^\top \hat\Sigma_c \lambda - 2\lambda^\top \hat\EE_t[\hat\phi] + 2|\lambda|^\top \delta, \quad \lambda \in \RR^{J+L},
    \$ 
    and $\hat\Sigma_c = \hat\Phi_c^\top \hat\Phi_c/n_{c,1}$ is the sample covariance matrix.  
    Consider the non-penalized objective 
    \$
    \hat{L}^*(\lambda) := \lambda^\top \hat\Sigma_c \lambda - 2\lambda^\top \hat\EE_t[\hat\phi] , \quad \lambda \in \RR^{J+L},
    \$
    and denote its minimizer as $\hat\lambda^*$. 
    Similar to the proof of Theorem~\ref{thm:convergence}, for sufficiently large $n$, there exists some $\nu>0$ such that $\nabla^2 \hat{L}(\lambda) \succ \nu \cdot \mathbf{I}_{ (J+L)\times (J+L)}$ for all $\lambda\in \RR^{J+L}$ due to condition (ii). Invoking Lemma~\ref{lem:convex} with $f=\hat{L}$, $\theta_0=\hat\lambda^*$ and $\theta  = \hat\lambda$, 
    we know that 
    \$
    \hat{L}(\hat\lambda)&\geq \hat{L}(\hat\lambda^*) + \nabla \hat{L}( \hat\lambda^*)^\top (\hat\lambda^*-\hat\lambda) + \nu/2\cdot \min \big\{ \|\hat\lambda-\hat\lambda^*\|^2, c\|\hat\lambda-\hat\lambda^*\| \big\} \\ 
    &\geq \hat{L}(\hat\lambda) + \nabla \hat{L}(\hat\lambda^*)^\top (\hat\lambda^*-\hat\lambda) + \nu/2\cdot \min \big\{ \|\hat\lambda-\hat\lambda^*\|^2, c\|\hat\lambda-\hat\lambda^*\| \big\}.
    \$
    This implies 
    \@\label{eq:bd_hat_lambda_prm}
    \min \big\{ \|\hat\lambda-\hat\lambda^*\| , c \big\} \leq \|\nabla \hat{L} (\hat\lambda^*) \|.
    \@ 
    Using the fact that $\nabla L(\hat\lambda^*)=0$,  we have 
    \$
    \| \nabla \hat{L}(\hat\lambda^*) \| =  \| \nabla \hat{L}(\hat\lambda^*) - \nabla  {L}(\hat\lambda^*) \| 
    \leq  2\delta = o_P(n^{-1/2}).
    \$ 
    This implies the following expression for the cross-balancing weights:
    \@\label{eq:hat_w_prm}
    \hat{w} = \hat\Phi_c^\top \hat\lambda^* + o_P(n^{-1/2}),\quad  
    \hat\lambda^* = \hat\Sigma_c^{-1} \hat\EE_t[\hat\phi],
    \@
    where the expression of $\hat\lambda^*$ follows from its definition. Note that $\hat\Sigma_c$ and $\hat\EE_t[\cdot]$ is averaged over the current fold, while $\hat\phi$ is obtained using data from the other fold.

    We now proceed to analyze the estimator $\hat\theta_0$ related to the first fold. Based on the arguments above, 
    \$
    \hat\theta_0^{(k)} &:= \frac{1}{|\cI_{c,k}|}\sum_{i\in \cI_{c,k}}\hat{w}_i Y_i = \frac{1}{|\cI_{c,k}|} \sum_{i\in \cI_{c,k}}  Y_i   \hat\phi(X_i)^\top \hat\Sigma_c^{-1} \hat\EE_t[\hat\phi]+ o_P(n^{-1/2})
    = \hat\EE_c[  Y   \hat\phi] ^\top \hat\Sigma_c^{-1} \hat\EE_t[\hat\phi]+ o_P(n^{-1/2}).
    \$
    Here, $\hat\EE_c[\cdot]$ denotes the empirical average over $i\in \cI_{c,k}$, and $\hat\EE_t[\cdot]$ denotes the empirical average over $i\in \cI_{t,k}$. By definition, we then have 
    \$
    \hat\EE_c[  Y   \hat\phi] &= \hat\EE_c[Y\phi^*] +  \hat\EE_c[Y \phi^\Delta]   + O_P(1/n),\\ 
    \hat\EE_t[\hat\phi] &= \hat\EE_t[\phi^*] +  \hat\EE_t[\phi^\Delta]  + O_P(1/n), \\ 
    \hat\Sigma_c &= \hat\EE_c[\hat\phi\hat\phi^\top] = \hat\EE_c[\phi^*(\phi^*)^\top] +  \hat\EE_c[ \phi^* (\phi^\Delta)^\top] +  \hat\EE_c[  \phi^\Delta ( \phi^*)^\top]   + O_P(1/n).
    \$
    In cases of vectors and matrices, the notation $O_P(1/n)$ is in terms of the $L_2$-norms. 
    Noting several terms in the order of $O_P(1/\sqrt{n})$, including $\hat\EE^\top[f_j]$ and $\hat\EE^\top[g_\ell]$, we have 
    \$
    \hat\EE_c[  Y   \hat\phi]   
    &= \EE_c[Y\phi^*] +  \underbrace{\hat\EE_c[Y\phi^* -\EE_c[Y\phi^*]] + \big( (\hat\EE^\dag[f_j]^\top  \EE_c[Y\dot{m}_j])_{j=1}^J, \hat\EE^\dag[g_\ell]^\top  \EE_c[Y\dot{w}_\ell])_{\ell=1}^L \big)^\top}_{:= \epsilon_1 = O_P(1/\sqrt{n})} + O_P(1/n), \\ 
    \hat\EE_t[\hat\phi] &= \EE_t[\phi^*] + \underbrace{\hat\EE_t[\phi^* -\EE_t[\phi^*]] +  \big( (\hat\EE^\dag[f_j]^\top  \EE_t[ \dot{m}_j])_{j=1}^J, \hat\EE^\dag[g_\ell]^\top \EE_t[ \dot{w}_\ell])_{\ell=1}^L \big)^\top}_{:= \epsilon_2 = O_P(1/\sqrt{n})}  + O_P(1/n), \\ 
    \hat\Sigma_c &=   \Sigma_c^* + \underbrace{\hat\EE_c[\phi^*(\phi^*)^\top -\Sigma_c^*] +  \hat\EE_c[ A^\Delta]}_{O_P(1/\sqrt{n})}   + O_P(1/n),
    \$
    where $\Sigma_c^* = \EE_c[\phi\phi^*]$, and the matrix $A^\Delta(x)\in \RR^{(J+L)\times(J+L)}$ is given by $A_{ij}^\Delta(x) = \hat\EE^\dag [f_i]^\top \dot{m}_i(x) \bar{m}_j(x) +  \hat\EE^\dag [f_j]^\top \dot{m}_j(x) \bar{m}_i(x) $ for $i,j\leq J$, and similarly for other pairs $(i,j)$ involving the weight models. Thus, 
    \$
    \hat\Sigma_c^{-1} = (\Sigma_c^{*})^{-1} - \underbrace{(\Sigma_c^{*})^{-1} \Big(\hat\EE_c[\phi^*(\phi^*)^\top -\Sigma_c^*] +  \hat\EE_c[ A^\Delta]\Big)(\Sigma_c^{*})^{-1}}_{:=E_3 = O_P(1/\sqrt{n})} + O_P(1/n).
    \$
    These quantities give the asymptotic linear expansion of $\hat\theta_0^{(k)}$ around 
    \$
    \bar\theta_0 := \EE_c[Y\phi^*]^\top (\Sigma_c^*)^{-1} \EE_t[\phi^*].
    \$
    Namely, with the above definition of $\epsilon_1\in \RR^{J+L}$, $\epsilon_2 \in \RR^{J+L}$, and $E_3\in \RR^{(J+L)\times(J+L)}$, we have 
    \$
    \hat\theta_0^{(k)} = \bar\theta + \epsilon_1^\top  (\Sigma_c^*)^{-1} \EE_t[\phi^*] -  \EE_c[Y\phi^*]^\top E_3 \EE_t[\phi^*] +  \EE_c[Y\phi^*]^\top (\Sigma_c^*)^{-1} \epsilon_2 + O_P(1/n).
    \$
    Now, if we further consider the aggregated estimator $\hat\theta_0 = (\hat\theta_0^{(1)}-\hat\theta_0^{(2)})/2$, we have  
    \$
    \hat\theta_0^{(k)} = \bar\theta + \hat\epsilon_1^\top  (\Sigma_c^*)^{-1} \EE_t[\phi^*] -  \EE_c[Y\phi^*]^\top \hat{E}_3 \EE_t[\phi^*] +  \EE_c[Y\phi^*]^\top (\Sigma_c^*)^{-1} \hat\epsilon_2 + O_P(1/n),
    \$
    where the error terms are ($\hat\EE_c[\cdot]$ denotes empirical average over all control units, $\hat\EE_t[\cdot]$ denotes empirical average over all treated units, and $\hat\EE[\cdot]$ denotes empirical average over all units) given by 
    \$
    &\hat\epsilon_1 = \hat\EE_c[Y\phi^* -\EE_c[Y\phi^*]] + \big( (\hat\EE [f_j]^\top  \EE_c[Y\dot{m}_j])_{j=1}^J, \hat\EE [g_\ell]^\top  \EE_c[Y\dot{w}_\ell])_{\ell=1}^L \big)^\top,\\ 
    &\hat\epsilon_1 = \hat\EE_t[\phi^* -\EE_t[\phi^*]] +  \big( (\hat\EE [f_j]^\top  \EE_t[ \dot{m}_j])_{j=1}^J, \hat\EE [g_\ell]^\top \EE_t[ \dot{w}_\ell])_{\ell=1}^L \big)^\top, \\ 
    &\hat{E}_3 =(\Sigma_c^{*})^{-1} \Big(\hat\EE_c[\phi^*(\phi^*)^\top -\Sigma_c^*] +   \hat{A}^\Delta\Big)(\Sigma_c^{*})^{-1},
    \$
    and the matrix $\hat{A}^\Delta \in \RR^{(J+L)\times(J+L)}$ is given by 
    \$
    \hat{A}^\Delta_{ij} = \hat\EE[f_i]^\top \EE_c[\dot{m}_i(X)\bar{m}_j(X)] + \hat\EE[f_j]^\top \EE_c[\dot{m}_j(X)\bar{m}_i(X)].
    \$

    To summarize, so far, we have shown that there exists a function---whose expression is quite involved---$\gamma(x)\in \RR $, such that $\EE[\gamma(X)]=0$ and 
    \$
    \hat\theta_0 = \bar\theta_0 + \frac{1}{n}\sum_{i=1}^n \gamma(X_i) + O_P(1/n).
    \$
    The remaining of the proof aims to show $\bar\theta_0 = \theta_0$ under the given conditions. 

    Indeed, when $m_0(x) = m_{j,\bar\theta_j}(x)$, there exists   $\beta^*\in \RR^{(J+L)}$ whose $j$-th element is $1$ and other elements are zero, such that $m_0(x) = (\beta^*)^\top \phi^*(x)$. Thus, following the last part in the proof of Theorem~\ref{thm:convergence}, we have $\bar\theta_0 = \theta_0$. 
    On the other hand, when $e(x) = e_{\ell,\bar\eta_\ell}(x)$ hence $w^*(x) = w_{\ell,\bar\eta_\ell}(x)$, there exists $\lambda^* \in \RR^{J+L}$ whose $J+\ell$-th element is $1$ and other elements are zero, such that $w^*(x)=(\lambda^*)^\top \phi^*(x)$. Following the last part of the proof of Theorem~\ref{thm:convergence}, we have $\bar\theta_0=\theta_0$. 
    We thus conclude the proof of Theorem~\ref{thm:multi_param}. 
\end{proof}


\section{Proof for cross-balancing after variable selection}

\subsection{Proof of Theorem~\ref{thm:convergence_varsel}}
\label{subsec:proof_convergence_varsel}

\begin{proof}[Proof of Theorem~\ref{thm:convergence_varsel}]
    We first prove $\hat\theta_0=\bar\theta+o_P(1)$ for some fixed $\bar\theta\in \RR$. To this end, it suffices to prove $\hat\theta^{(k)}:=\frac{1}{|\cI_{c,k}|}\sum_{i\in \cI_{c,k}}\hat{w}_iY_i = \theta_0+o_P(1)$ for each $k=1,2$. 
    Due to symmetry between $k=1,2$, 
    we fix $k=1$  and condition on $\cI_{c,2}\cup\cI_{t,2}$ without loss of generality. Also, we write $\hat{S}:=\hat{S}^{(k)}$ for simplicity, which can then be viewed as fixed. 
    The given conditions state that $\|\bar{w} - \eta(\hat{S}^{(k)},w^*)^\top X_{\hat{S}^{(k)}}\|_{L_2(\PP_{\cdot\given T=0})}=o_P(1)$ and $\epsilon(\hat{S}^{(k)},\bar{m})=o_P(1)$. To be clear, the randomness in these $o_P(1)$ is with respect to $\cI_{c,2}\cup\cI_{t,2}$, which is conditioned on. 
    Most of our analysis in the next will be  conditional on $\hat{S}$, but sometimes after obtaining some $o_P(1)$ quantities conditional on $\cI_{c,2}\cup\cI_{t,2}$, we implicitly invoke Lemma~\ref{lem:cond_event} and Lemma~\ref{lem:cond_quant} to prove that they are also $o_P(1)$ marginally without explicitly mentioning this every time. 
    
    Write $\beta:= \beta^*(\hat{S})$ and $\eta = \eta(\hat{S},w^*)$, which are both fixed. Following the analysis in the proof of Theorem~\ref{thm:convergence}, we know that $\hat{w} = \hat\Phi_{c,\hat{S}}^\top \hat\lambda$, where $\hat\Phi_{c,\hat{S}}\in\RR^{|\cI_{c,1}|\times|\hat{S}|}$ is the data matrix, and $\lambda \in \RR^{|\hat{S}|}$ minimizes the unconstrained convex objective
    \$
    \hat{L}(\lambda) := \lambda^\top \hat\Sigma_{c,\hat{S}} \lambda - 2\lambda^\top \hat\EE_t[X_{\hat{S}}] + 2|\lambda|^\top \delta_n, \quad \lambda \in \RR^{|\hat{S}|}.
    \$
    Here $\hat\Sigma_{c,\hat{S}} = \hat\Phi_{c,\hat{S}}^\top \hat\Phi_{c,\hat{S}}/|\cI_{c,1}|$ is the sample covariance matrix, and $\hat\EE_t[\cdot]$ denotes the empirical mean over the treated samples $\cI_{t,1}$. 
    Let $\mu_{t,\hat{S}} = \EE[X_{\hat{S}}\given T=1]$ be the population mean for the treated. 


    Consider the objective (which is also deterministic since we condition on $\hat{S}$)
    \$
    L(\lambda) = \lambda^\top \Sigma_{c,\hat{S}} \lambda - 2\lambda^\top \mu_{t,\hat{S}},
    \$
    and $\bar\lambda$ be the minimizer of $L(\lambda)$, which is equivalently 
    \$
    \bar\lambda = \Sigma_{c,\hat{S}}^{-1} \mu_{t,\hat{S}} 
    = \Sigma_{c,\hat{S}}^{-1} \EE_c[w^*(X) X_{\hat{S}}] = \eta(\hat{S};w^*),
    \$
    where $\EE_c[\cdot]$ denotes expectation under $\PP(\cdot\given T=0)$, viewing $\hat{S}$ as fixed. 
    Since $\nabla^2 \hat{L}(\lambda) = 2 \hat\Sigma_{c,\hat{S}}$, 
    recalling $\nu_{\min}(\hat\Sigma_{c,\hat{S}})$ is the minimum eigenvalue of $\hat\Sigma_{c,\hat{S}}$,  Lemma~\ref{lem:convex} with $f=\hat{L}$, $\theta_0=\bar\lambda$, and $\theta  = \hat\lambda$ yields
    \$
    \hat{L}(\hat\lambda)&\geq \hat{L}(\bar\lambda) + \nabla \hat{L}( \bar\lambda)^\top (\bar\lambda-\hat\lambda) + \nu_{\min}(\hat\Sigma_{c,\hat{S}})\cdot \min \big\{ \|\hat\lambda-\bar\lambda\|^2, c\|\hat\lambda-\bar\lambda\| \big\} \\ 
    &\geq \hat{L}(\hat\lambda) + \nabla \hat{L}(\bar\lambda)^\top (\bar\lambda-\hat\lambda) + \nu_{\min}(\hat\Sigma_{c,\hat{S}})\cdot \min \big\{ \|\hat\lambda-\bar\lambda\|^2, c\|\hat\lambda-\bar\lambda\| \big\}
    \$
    for any constant $c>0$ 
    since $\hat\lambda$ minimizes $\hat{L}(\cdot)$. By the Cauchy-Schwarz inequality, we further have 
    $ 
      \|\hat\lambda-\bar\lambda\|   \leq \nu_{\min}(\hat\Sigma_{c,\hat{S}})^{-1}\cdot \|\nabla \hat{L}(\bar\lambda) \|.
    $
    Using the fact that $\nabla L(\bar\lambda)=0$,  we have 
    \$
    \| \nabla \hat{L}(\bar\lambda) \| &=  \| \nabla \hat{L}(\bar\lambda) - \nabla  {L}(\bar\lambda) \|  \leq \|   (\hat\Sigma_{c,\hat{S}}-\Sigma_{c,\hat{S}} )\bar\lambda \| + 2\big| (\hat\EE_t[X_{\hat{S}}] - \EE_t[X_{\hat{S}}] )^\top \bar\lambda \big| + 2\delta_n |\hat{S}|^{1/2} .
    \$ 
    Since $X_{\hat{S}}^\top \bar\lambda$ is $ \|\bar\lambda\|^2 \sigma^2(\hat{S})$-sub-Gaussian, by the Chernoff's inequality,  with probability at least $1-\varepsilon$, 
    \$
    \big| (\hat\EE_t[X_{\hat{S}}] - \EE_t[X_{\hat{S}}] )^\top \bar\lambda \big|
    \leq \|\bar\lambda\| \cdot  \sigma(\hat{S}) \cdot \sqrt{\frac{2 \log(2/\varepsilon)}{|\cI_{c,1}|} },
    \$  
    hence $\big| (\hat\EE_t[X_{\hat{S}}] - \EE_t[X_{\hat{S}}] )^\top \bar\lambda \big| = o_P(1)$. 
    In addition, since $X_{i,\hat{S}}$'s are $\sigma^2(\hat{S})$-sub-Gaussian, by Lemma~\ref{lem:concen_covmat},  there exists a universal constant $C>0$ such that with probability at least $1-\varepsilon$, 
    \@\label{eq:concen_eigen_S}
    \|\hat\Sigma_{c,\hat{S}}- \Sigma_{c,\hat{S}}\|_{\textrm{op}} 
    \leq C \cdot \sigma^2(\hat{S})\cdot \max \Bigg\{ \sqrt{\frac{|\hat{S}|+\log(2/\varepsilon)}{|\cI_{c,1}|}},  \frac{|\hat{S}|+\log(2/\varepsilon)}{|\cI_{c,1}|}   \Bigg\}.
    \@
    Since $|\hat{S}| = o_P(n)$, these two bounds imply 
    \$
\| \nabla \hat{L}(\bar\lambda) \| \leq O_P(\|\bar\lambda\| \cdot \sigma^2(\hat{S})\cdot  |\hat{S}|^{1/2}n^{-1/2}).
    \$ 
    Since $\nu_{\min}( \Sigma_{c,\hat{S}})  = \Omega_P(\sigma^2(\hat{S})( |\hat{S}|/n)^{1/2})$, we know 
    $
    \nu_{\min}(\hat\Sigma_{c,\hat{S}}) \geq \nu_{\min}( \Sigma_{c,\hat{S}}) - O_P(\sigma^2(\hat{S})(|\hat{S}|/n)^{1/2}),
    $
    and hence by~\eqref{eq:concen_eigen_S}
    \@\label{eq:hat_lam_conv}
    \|\hat\lambda - \bar\lambda\|\leq O_P\big( \nu_{\min}( \Sigma_{c,\hat{S}})^{-1} \cdot \|\bar\lambda\| \cdot \sigma^2(\hat{S}) \cdot  |\hat{S}|^{1/2}n^{-1/2} \big) 
    = O_P\big( \zeta(\hat{S})\cdot n^{-1/2} \big),
    \@
    where the probability is with respect to $\cI_{c,1}\cup \cI_{t,1}$, viewing $\hat{S}$ as fixed. Now using the fact that $\|\bar\lambda\|\cdot \sigma^2(\hat{S}) \cdot  |\hat{S}|^{1/2}n^{-1/2} /\nu_{\min}( \Sigma_{c,\hat{S}}) = o_P(1)$, we have $\|\hat\lambda-\bar\lambda\|=o_P(1)$ where the probability is marginally over all data. 
    
    We now proceed to analyze the estimator $\hat\theta^{(k)}$. Define the imbalance gap 
    \$
    \hat\Delta:= \frac{1}{|\cI_{c,1}|} \sum_{i\in \cI_{c,1}} w_i X_{i,\hat{S}} - \frac{1}{|\cI_{t,1}|} \sum_{j\in \cI_t^{(1)}} X_{j,\hat{S}}  \in \RR^{|\hat{S}|},
    \$
    which obeys $|\hat\Delta_j|\leq \delta_n$ for any entry $j \in [|\hat{S}|]$. Thus, by the balancing condition we know 
    \$
    \hat\theta^{(k)} = \frac{1}{|\cI_{c,1}|}\sum_{i\in \cI_{c,1}}\hat{w}_i (Y_i - \beta^\top X_{i,\hat{S}}) + \frac{1}{|\cI_{t,1}|}\sum_{j\in \cI_{t,1}} \beta^\top X_{j,\hat{S}} - \beta^\top \hat\Delta. 
    \$
    Reorganizing the above equation, we have 
    \$
    \hat\theta^{(k)}& = \frac{1}{|\cI_{c,1}|}\sum_{i\in \cI_{c,1}} \bar{w}(X_i)(Y_i-\bar{m}(X_i)) + \frac{1}{|\cI_{t,1}|}\sum_{j\in \cI_{t,1}}  \bar{m}(X_j) \\ 
    &\quad + \underbrace{
        \frac{1}{|\cI_{c,1}|}\sum_{i\in \cI_{c,1}} (\hat{w}_i - \eta(\hat{S};w^*)^\top X_{i,\hat{S}} ) (Y_i - \bar{m}(X_i))
    }_{\textrm{term (a)}} \\ 
    &\quad + \underbrace{
        \frac{1}{|\cI_{c,1}|}\sum_{i\in \cI_{c,1}} \big[ \eta(\hat{S};w^*)^\top X_{i,\hat{S}} - \bar{w}(X_i) \big] \cdot \big[Y_i - \bar{m}(X_i)\big]
    }_{\textrm{term (b)}}\\ 
    &\quad + \underbrace{
        \frac{1}{|\cI_{c,1}|}\sum_{i\in \cI_{c,1}}  \hat{w}_i (\bar{m}(X_i) - \beta^\top X_{i,\hat{S}})
    }_{\textrm{term (c)}} - \underbrace{
        \frac{1}{|\cI_{t,1}|}\sum_{j\in \cI_{t,1}}   (\bar{m}(X_j) - \beta^\top X_{j,\hat{S}})
    }_{\textrm{term (d)}} - \beta^\top \hat\Delta.
    \$

    Note that $|\beta^\top \hat\Delta|\leq \delta_n \cdot \|\beta\| = o_P(1)$ by condition (ii). In the next, we show that terms (a-d) are $o_P(1)$. First note that 
    \$
     \text{term (a)} = \frac{1}{|\cI_{c,1}|}\sum_{i\in \cI_{c,1}} (\hat{w}_i - \eta(\hat{S};w^*)^\top X_{i,\hat{S}} ) (Y_i - \bar{m}(X_i))= \frac{1}{|\cI_{c,1}|}\sum_{i\in \cI_{c,1}}  (\hat\lambda-\bar\lambda)^\top X_{i,\bar{S}} (Y_i - \bar{m}(X_i)).
    \$
    Since $\|\hat\lambda-\bar\lambda \|=o_P(1)$, we have $|\text{term (a)}| = o_P(1)$ by  Markov's inequality and noting that 
    \$
    \EE\Bigg\| \frac{1}{|\cI_{c,1}|}\sum_{i\in \cI_{c,1}}    X_{i,\hat{S}} (Y_i - \bar{m}(X_i))   \Bigg\|^2 \leq \frac{1}{|\cI_{c,1}|}\sum_{j\in \hat{S}} \EE\Big[ X_{ij}^2(Y_i-\bar{m}(X_i))^2\Biggiven T_i=0\Big] = O_P(|\hat{S}|/n)
    \$ 
    since $\EE[(Y(0)-\bar{m}(X)^2\given X]$ is bounded, $X_{i,\hat{S}}(Y_i-\bar{m}(X_i))$ is mean zero, and each entry $X_{ij}$ is sub-Gaussian. 
    Second, by the Cauchy-Schwarz inequality, 
    \$
    \big|\textrm{term (b)}\big|\leq 
    \sqrt{\frac{1}{|\cI_{c,1}|}\sum_{i\in \cI_{c,1}} \big[ \eta(\hat{S};w^*)^\top X_{i,\hat{S}} - \bar{w}(X_i) \big]^2 }\cdot \sqrt{\frac{1}{|\cI_{c,1}|}\sum_{i\in \cI_{c,1}} \big[Y_i - \bar{m}(X_i)\big]},
    \$
    where the second term is $O(1)$  $\EE[(Y(0)-\bar{m}(X)^2\given X]$ is bounded. Applying Chernoff's inequality yields 
    \@\label{eq:approx_to_bar}
    \frac{1}{|\cI_{c,1}|}\sum_{i\in \cI_{c,1}} \big[ \eta(\hat{S};w^*)^\top X_{i,\hat{S}} - \bar{w}(X_i) \big]^2 = \epsilon(\hat{S};\bar{w})^2 + O_P(1/\sqrt{n}) = o_P(1),
    \@
    which further implies $|\text{term (b)}| = o_P(1)$. Again by Cauchy-Schwarz inequality, 
    \$
    \big|\textrm{term (c)}\big|\leq 
    \sqrt{\frac{1}{|\cI_{c,1}|}\sum_{i\in \cI_{c,1}} \hat{w}_i^2 }\cdot \sqrt{\frac{1}{|\cI_{c,1}|}\sum_{i\in \cI_{c,1}} \big[\bar{m}(X_i) - \beta^\top X_{i,\hat{S}}\big]^2},
    \$
    where the second term in the product is $o_P(1)$. For the first term, noting that 
    \$
    \frac{1}{|\cI_{c,1}|}\sum_{i\in \cI_{c,1}} (\hat{w}_i - \eta(\hat{S};w^*)^\top X_{i,\hat{S}})^2 = (\hat\lambda-\bar\lambda)^\top \hat\Sigma_{c,\hat{S}} (\hat\lambda-\bar\lambda) \leq \|\hat\lambda-\bar\lambda\|^2 \cdot \|\hat\Sigma_{c,\hat{S}}\|_{\textnormal{op}} .
    \$
    Combining~\eqref{eq:concen_eigen_S} and~\eqref{eq:hat_lam_conv}, we have 
    \$
     \|\hat\lambda-\bar\lambda\|^2 \cdot \|\hat\Sigma_{c,\hat{S}}\|_{\textnormal{op}} 
     & \leq  \|\hat\lambda-\bar\lambda\|^2 \cdot  \Big(\|\Sigma_{c,\hat{S}}\|_{\textrm{op}} + O_P\big(\|\bar\lambda\|\sigma^2(\hat{S})(|\hat{S}|/n)^{1/2}\big)\Big) \\ 
     &= O_P\big(\zeta^2(\hat{S})\cdot n^{-1} \cdot  \|\Sigma_{c,\hat{S}}\|_{\textrm{op}}\big) + O_P\big(  \zeta^2(\hat{S})\cdot n^{-3/2} \cdot \nu_{\min}(\Sigma_{c,\hat{S}})\cdot \zeta(\hat{S}) \big) = O_P(1)
    \$
    by condition (iii). 
    Combining this result with equation~\eqref{eq:approx_to_bar} and $\frac{1}{|\cI_{c,1}|}\sum_{i\in \cI_{c,1}}  \bar{w}(X_i)^2=O_P(1)$, and applying Cauchy-Schwarz inequality twice, we know that the first term is $O_P(1)$. Putting them together, we obtain $|\textrm{term (c)}| = o_P(1)$. 
    Also, by similar arguments as before we know $|\textrm{term (d)}| = o_P(1)$. 
    These results altogether lead to 
    \$
    \hat\theta^{(k)}& = \frac{1}{|\cI_{c,1}|}\sum_{i\in \cI_{c,1}} \bar{w}(X_i)(Y_i-\bar{m}(X_i)) + \frac{1}{|\cI_{t,1}|}\sum_{j\in \cI_{t,1}}  \bar{m}(X_j) + o_P(1) = \bar\theta + o_P(1),
    \$
    where $\bar\theta:= \EE[(1-T)/(1-p)\cdot\bar{w}(X)(Y-\bar{m}(X)) + T/p\cdot \bar{m}(X)$. 
    Finally, noting that $\bar\theta=\theta_0$ if either $\bar{w}=w^*$ or $\bar{m}=m_0$ similar to the proof of previous theorems, we complete the proof of Theorem~\ref{thm:convergence_varsel}. 
\end{proof}

\subsection{Proof of Theorem~\ref{thm:convergence_varsel_alt}}
\label{subsec:proof_convergence_varsel_alt}


\begin{proof}[Proof of Theorem~\ref{thm:convergence_varsel_alt}]
    We first prove $\hat\theta_0=\bar\theta+o_P(1)$ for some fixed $\bar\theta\in \RR$. To this end, it suffices to prove $\hat\theta^{(k)}:=\frac{1}{|\cI_{c,k}|}\sum_{i\in \cI_{c,k}}\hat{w}_iY_i = \theta_0+o_P(1)$ for each $k=1,2$. 
    Due to symmetry between $k=1,2$, 
    we fix $k=1$  and condition on $\cI_{c,2}\cup\cI_{t,2}$ without loss of generality. Also, we write $\hat{S}:=\hat{S}^{(k)}$ for simplicity, which can then be viewed as fixed. 
    The given conditions state that $\|\bar{w} - \eta(\hat{S}^{(k)},w^*)^\top X_{\hat{S}^{(k)}}\|_{L_2(\PP_{\cdot\given T=0})}=o_P(1)$ and $\epsilon(\hat{S}^{(k)},\bar{m})=o_P(1)$. To be clear, the randomness in these $o_P(1)$ is with respect to $\cI_{c,2}\cup\cI_{t,2}$, which is conditioned on. 
    Most of our analysis in the next will be  conditional on $\hat{S}$, but sometimes after obtaining some $o_P(1)$ quantities conditional on $\cI_{c,2}\cup\cI_{t,2}$, we implicitly invoke Lemma~\ref{lem:cond_event} to prove that they are also $o_P(1)$ marginally without mentioning this. 
    
    Write $\beta:= \beta^*(\hat{S})$ and $\eta = \eta(\hat{S},w^*)$, which are both deterministic conditional on $\cI_{c,2}\cup\cI_{t,2}$. Following the analysis in the proof of Theorem~\ref{thm:convergence}, we know that $\hat{w} = X_{c,\hat{S}}^\top \hat\lambda$, where $X_{c,\hat{S}}$ is the $\RR^{|\cI_{c,1}|\times|\hat{S}|}$ data matrix whose $i$-th row is $X_{i,\hat{S}}$, and $\lambda$ minimizes the unconstrained convex objective
    \$
    \hat{L}(\lambda) := \lambda^\top \hat\Sigma_{c,\hat{S}} \lambda - 2\lambda^\top \hat\EE_t[X_{\hat{S}}] + 2 \delta_n \|\lambda\|_1, \quad \lambda \in \RR^{|\hat{S}|},
    \$
    and $\hat\Sigma_{c,\hat{S}} = X_{c,\hat{S}}^\top X_{c,\hat{S}}/|\cI_{c,1}|$ is the sample covariance matrix. We write 
    \$
    \bar\lambda := \eta(\hat{S};w^*)
    \$
    which is deterministic conditional on $\cI_{c,2}\cup\cI_{t,2}$. 
    By condition (i), we know $\|\bar{w}-\bar\lambda^\top X_{\hat{S}}\|_{L_2(\PP_{\cdot\given T=0})}=o_P(1)$. 
    For convenience, we write 
    \$
    \hat\mu_t = \hat\EE_t[X_{\hat{S}}]\quad \text{and}\quad  X_c:= X_{c,\hat{S}}.
    \$ 
    In this way, $\hat{L}(\lambda) = \lambda^\top X_c^\top X_c \lambda - 2\lambda^\top\hat\mu_t + 2|\lambda|^\top \delta_n$. 
    By the fact that $\hat\lambda$ minimizes $\hat{L}(\lambda)$, we have 
    \$
    \frac{1}{|\cI_{c,1}|}\hat\lambda^\top X_c^\top X_c \hat\lambda - 2\hat\lambda^\top\hat\mu_t + 2\delta_n \|\hat\lambda\|_1
    \leq \frac{1}{|\cI_{c,1}|}\bar\lambda^\top X_c^\top X_c \bar\lambda - 2\bar\lambda^\top\hat\mu_t + 2\delta_n \|\bar\lambda\|_1 .
    \$
    Reorganizing terms, and denoting $n = |\cI_{c,1}|$, this means 
    \@\label{eq:risk_bd1}
    \frac{1}{|\cI_{c,1}|}\big\|X_c \hat\lambda - X_c \bar\lambda \big\|_2^2\leq 2(\bar\lambda-\hat\lambda)^\top (X_cX_c^\top \bar\lambda / |\cI_{c,1}| - \hat\mu_t) + 2\delta_n \cdot (\|\bar\lambda\|_1-\|\hat\lambda\|_1).  
    \@
    Now we define  
    \$
    \varepsilon_{t,j}:= X_{j,\hat{S}} - \mu_t,\quad \mu_t:=\EE[X_{j,\hat{S}}\given T_j=1], \quad 
    \varepsilon_{c,i}:= \bar{w}(X_i) X_{i,\hat{S}} - \bar\mu_c, \quad \bar\mu_c:= \EE[\bar{w}(X_i)X_{i,\hat{S}}\given T_i=0] 
    \$
    viewing $\hat{S}$ as fixed. Thus, $\{\varepsilon_{t,j}\}_{j\in \cI_{t,1}}$ and $\{\varepsilon_{c,i}\}_{i\in \cI_{c,1}}$ are i.i.d.~mean zero random variables, respectively. We also define $\bar{\epsilon}_{t}$ as the sample mean of $\{\varepsilon_{t,j}\}_{j\in \cI_{t,1}}$, and $\bar\varepsilon_{c}$ as the sample mean of $\{\varepsilon_{c,i}\}_{i\in \cI_{c,1}}$. 
    Finally, define 
    \$
    \varepsilon_{w,i} = \bar{w}(X_i) - \bar\lambda^\top X_{i,\hat{S}},
    \$
    where $\EE[\varepsilon_{w,i}^2] = o_P(1)$ (viewing $\hat{S}$ as fixed) by condition (i), 
    and let $\varepsilon_w = (\varepsilon_{w,i})_{i\in \cI_{c,1}}^\top\in \RR^{|\cI_{c,1}|}$ be the vector of all $\varepsilon_{w,i}$'s in the control fold. We then have 
    \$
  X_c^\top \varepsilon_w /|\cI_{c,1}| + X_cX_c^\top \bar\lambda/|\cI_{c,1}|  =  \bar\mu_c + \bar\varepsilon_c.
    \$
    Returning to~\eqref{eq:risk_bd1}, by the above definitions we have 
    \@\label{eq:risk_bd2}
    &\frac{1}{|\cI_{c,1}|} \big\|X_c \hat\lambda - X_c \bar\lambda \big\|_2^2 \notag \\ 
    &\leq 2(\bar\lambda-\hat\lambda)^\top \Big(X_cX_c^\top \bar\lambda/|\cI_{c,1}| - \bar\varepsilon_t - \mu_t + \bar{\varepsilon}_c + \bar\mu_c -  X_c^\top \varepsilon_w /|\cI_{c,1}| - X_cX_c^\top \bar\lambda/|\cI_{c,1}| \Big) + 2\delta_n (\|\bar\lambda\|_1-\|\hat\lambda\|_1)  \notag\\
    & = 2(\bar\lambda-\hat\lambda)^\top \big(  - \bar\varepsilon_t - \mu_t + \bar{\varepsilon}_c + \bar\mu_c - X_c^\top \varepsilon_w /|\cI_{c,1}|  \big) + 2\delta_n (\|\bar\lambda\|_1-\|\hat\lambda\|_1) \notag \\ 
    &\leq 2\|\bar\lambda - \hat\lambda\|_1 \cdot \big( \|\bar\varepsilon_t\|_\infty + \|\mu_t-\bar\mu_c\|_{\infty} + \|\bar\varepsilon_c\|_{\infty} + \|X_c^\top \varepsilon_w /|\cI_{c,1}|\|_{\infty}\big) + 2\delta_n (\|\bar\lambda\|_1-\|\hat\lambda\|_1) .
    \@
    We now bound the four terms in the summation separately:
    \begin{itemize}
        \item Note that each entry of $\bar\varepsilon_t$ is the sample mean of i.i.d.~mean zero sub-Gaussian variables, and thus applying Chernoff's bound and a union bound yields  
        \$
        \|\bar\varepsilon_t\|_{\infty} \leq \sqrt{\frac{2\log(2|\hat{S}|/\varepsilon)}{|\cI_{t,1}|}}
        \$
        with probability at least $1-\varepsilon$, where the probability is with respect to $\cI_{c,1}\cup\cI_{t,1}$. 
        \item With similar arguments, with probability at least $1-\varepsilon$  with respect to $\cI_{c,1}\cup\cI_{t,1}$, 
        \$
        \|\bar\varepsilon_c\|_{\infty} \leq \|\bar{w}(\cdot)\|_{\infty} \sqrt{\frac{2\log(2|\hat{S}|/\varepsilon)}{|\cI_{c,1}|}}.
        \$ 
        \item In addition, note that $\bar\varepsilon_{w} = O_P(\sqrt{\EE[\varepsilon_{w,i}^2]}) = o_P(1)$ by condition (i). Since $\bar\lambda = \eta(\hat{S};w^*)$, it hold that 
        \$
        \EE\big[X_{i,\hat{S}} [w^*(X_i) - \bar\lambda ^\top X_{i,\hat{S}}]\biggiven T_i=0\big] = 0
        \$
        where $\hat{S}$ is viewed as fixed. Therefore, as $w^*$ is the true density ratio, 
        \$
        \mu_t = \EE[w^*(X_i)X_{i,\hat{S}}\given T_i=0]
        = \EE[\bar\lambda^\top X_{i,\hat{S}} X_{i,\hat{S}}\given T_i=0] &=  \EE[ (\bar{w}(X_i)- \varepsilon_{w,i})X_{i,\hat{S}}\given T_i=0] \\
        &= \bar\mu_c - \EE[  \varepsilon_{w,i} X_{i,\hat{S}}\given T_i=0],
        \$
        where the last equality uses the definition of $\bar\mu_c$. On the other hand, for any $k\in \hat{S}$, we have (deterministically once $\hat{S}$ is conditioned on)
        \$
       \big| \EE[  \varepsilon_{w,i} X_{i,k}\given T_i=0] \big|
       \leq \sqrt{\EE[  \varepsilon_{w,i}^2 \given T_i=0]\cdot \EE[ X_{i,k}^2\given T_i=0]} = O( \epsilon(\hat{S};\bar{w})),
        \$
        which means $\|\mu_t-\bar\mu_c\|_{\infty} \leq \big\|\EE[  \varepsilon_{w,i} X_{i,\hat{S}}\given T_i=0]\big\|_\infty  = o_P(1)$ (marginally) by Lemma~\ref{lem:cond_event}. 
        \item Finally, note that 
        \$
       &  \big\|X_c^\top \varepsilon_w/|\cI_{c,1}|\big\|_\infty = 
        \max_{j\in \hat{S}} \bigg| \frac{1}{|\cI_{c,1}|} \sum_{i\in \cI_{c,1}} X_{i,j} (\bar{w}(X_i) - \bar\lambda^\top X_{i,\hat{S}})  \bigg| \\ 
        &\leq \max_{j\in \hat{S}} \bigg| \frac{1}{|\cI_{c,1}|} \sum_{i\in \cI_{c,1}} \Big\{ X_{i,j} (\bar{w}(X_i) - \bar\lambda^\top X_{i,\hat{S}})  - \EE\big[X_{i,j} (\bar{w}(X_i) - \bar\lambda^\top X_{i,\hat{S}})  \biggiven T_i=0\big] \Big\} \bigg|  \\ 
        &\qquad + \max_{j\in \hat{S}} \Big| \EE\big[X_{i,j} (\bar{w}(X_i) - \bar\lambda^\top X_{i,\hat{S}}) \biggiven T_i=0\big]\Big|.
        \$
        Here the first term is the maximum over $|\hat{S}|$  sample means of i.i.d.~mean zero and sub-Gaussian variables, hence applying the Chernoff's bound with a union bound, it is upper bounded by 
        \$
     \sup_x\big| \bar{w}(x)-\bar\eta^\top x_{i,\hat{S}}\big| \cdot \sqrt{\frac{2\log(2|\hat{S}|/\varepsilon)}{|\cI_{c,1}|}}
        \$
        with probability at least $1-\varepsilon$. On the other hand, for each $j\in \hat{S}$, 
        \$
        &\Big| \EE\big[X_{i,j} (\bar{w}(X_i) - \bar\lambda^\top X_{i,\hat{S}}) \biggiven T_i=0\big]\Big| \leq \sqrt{\EE[X_{i,j}^2\given T_i=0]}\cdot \sqrt{\EE\big[  (\bar{w}(X_i) - \bar\lambda^\top X_{i,\hat{S}})^2 \biggiven T_i=0\big]} = O(\epsilon(\hat{S};\bar{w})).
        \$
    \end{itemize} 
    Combining these arguments with~\eqref{eq:risk_bd2} and applying the triangle inequality, we have 
    \@\label{eq:risk_bd33}
    &\frac{1}{|\cI_{c,1}|} \sum_{i\in \cI_{c,1}} (\hat{w}_i - X_{i,\hat{S}}^\top\bar\lambda)^2 = 
    \frac{1}{|\cI_{c,1}|} \big\|X_c \hat\lambda - X_c \bar\lambda \big\|_2^2 \notag \\
    &\leq 2 \big(\|\bar\lambda\|_1 + \| \hat\lambda\|_1\big) \cdot \big( \|\bar\varepsilon_t\|_\infty + \|\mu_t-\bar\mu_c\|_{\infty} + \|\bar\varepsilon_c\|_{\infty} + \|\bar\varepsilon_w\|_{\infty}\big) + 2\delta_n (\|\bar\lambda\|_1-\|\hat\lambda\|_1)  =  O_P(\delta_n \|\bar\lambda\|_1)
    \@
    as long as $\|\bar\varepsilon_t\|_\infty + \|\mu_t-\bar\mu_c\|_{\infty} + \|\bar\varepsilon_c\|_{\infty} + \|\bar\varepsilon_w\|_{\infty} = O_P(\delta_n)$, which is true since condition (iii) holds and we have shown that 
    \$
    \|\bar\varepsilon_t\|_\infty + \|\mu_t-\bar\mu_c\|_{\infty} + \|\bar\varepsilon_c\|_{\infty} + \|\bar\varepsilon_w\|_{\infty} = O_P\big(\sqrt{\log|\hat{S}|/n} + \epsilon(\hat{S};\bar{w})\big).
    \$

    We now proceed to analyze the estimator $\hat\theta^{(k)}$. Define the imbalance gap 
    \$
    \hat\Delta:= \frac{1}{|\cI_{c,1}|} \sum_{i\in \cI_{c,1}} w_i X_{i,\hat{S}} - \frac{1}{|\cI_{t,1}|} \sum_{j\in \cI_t^{(1)}} X_{j,\hat{S}}  \in \RR^{|\hat{S}|},
    \$
    which obeys $|\hat\Delta_j|\leq \delta_n$ for any entry $j \in [|\hat{S}|]$. Thus, by the balancing condition we know 
    \$
    \hat\theta^{(k)} = \frac{1}{|\cI_{c,1}|}\sum_{i\in \cI_{c,1}}\hat{w}_i (Y_i - \beta^\top X_{i,\hat{S}}) + \frac{1}{|\cI_{t,1}|}\sum_{j\in \cI_{t,1}} \beta^\top X_{j,\hat{S}} - \beta^\top \hat\Delta. 
    \$
    Reorganizing the above equation, we have 
    \$
    \hat\theta^{(k)}& = \frac{1}{|\cI_{c,1}|}\sum_{i\in \cI_{c,1}} \bar{w}(X_i)(Y_i-\bar{m}(X_i)) + \frac{1}{|\cI_{t,1}|}\sum_{j\in \cI_{t,1}}  \bar{m}(X_j) \\ 
    &\quad + \underbrace{
        \frac{1}{|\cI_{c,1}|}\sum_{i\in \cI_{c,1}} (\hat{w}_i - \bar\lambda^\top X_{i,\hat{S}} ) (Y_i - \bar{m}(X_i))
    }_{\textrm{term (a)}}  + \underbrace{
        \frac{1}{|\cI_{c,1}|}\sum_{i\in \cI_{c,1}} \big[ \bar\lambda^\top X_{i,\hat{S}} - \bar{w}(X_i) \big] \cdot \big[Y_i - \bar{m}(X_i)\big]
    }_{\textrm{term (b)}}\\ 
    &\quad + \underbrace{
        \frac{1}{|\cI_{c,1}|}\sum_{i\in \cI_{c,1}}  \hat{w}_i (\bar{m}(X_i) - \beta^\top X_{i,\hat{S}})
    }_{\textrm{term (c)}} - \underbrace{
        \frac{1}{|\cI_{t,1}|}\sum_{j\in \cI_{t,1}}   (\bar{m}(X_j) - \beta^\top X_{j,\hat{S}})
    }_{\textrm{term (d)}} - \beta^\top \hat\Delta.
    \$

    Note that $|\beta^\top \hat\Delta|\leq \delta_n \cdot \|\beta\|_1 = o_P(1)$ by condition (ii). In the next, we show that terms (a-d) are $o_P(1)$. First, by H\"older's inequality, 
    \$
     \big| \text{term (a)} \big|   
     \leq \sqrt{\frac{1}{|\cI_{c,1}|}\sum_{i\in \cI_{c,1}} (\hat{w}_i - \bar\lambda^\top X_{i,\hat{S}} )^2} \cdot \sqrt{\frac{1}{|\cI_{c,1}|}\sum_{i\in \cI_{c,1}}  (Y_i - \bar{m}(X_i))^2},
    \$
    where the second term is $O_P(1)$ and the first term is $O_P(\delta_n \|\bar\lambda\|_1) = o_P(1)$ by~\eqref{eq:risk_bd33}, and therefore  $|\text{term (a)}| = o_P(1)$. 
    Second, by  Cauchy-Schwarz inequality, 
    \$
    \big|\textrm{term (b)}\big|\leq 
    \sqrt{\frac{1}{|\cI_{c,1}|}\sum_{i\in \cI_{c,1}} \big[ \bar\lambda^\top X_{i,\hat{S}} - \bar{w}(X_i) \big]^2 }\cdot \sqrt{\frac{1}{|\cI_{c,1}|}\sum_{i\in \cI_{c,1}} \big[Y_i - \bar{m}(X_i)\big]},
    \$
    where the bounded conditional second moment of $Y_i-\bar{m}(X_i)$ implies that the second term is $O(1)$, and applying Chernoff's inequality yields 
    \@\label{eq:approx_to_bar}
    \frac{1}{|\cI_{c,1}|}\sum_{i\in \cI_{c,1}} \big[ \bar\lambda^\top X_{i,\hat{S}} - \bar{w}(X_i) \big]^2 = \epsilon(\hat{S};\bar{w})^2 + O_P(1/\sqrt{n}) = o_P(1),
    \@
    which further implies $|\text{term (b)}| = o_P(1)$. Then, again by Cauchy-Schwarz inequality, 
    \$
    \big|\textrm{term (c)}\big|\leq 
    \sqrt{\frac{1}{|\cI_{c,1}|}\sum_{i\in \cI_{c,1}} \hat{w}_i^2 }\cdot \sqrt{\frac{1}{|\cI_{c,1}|}\sum_{i\in \cI_{c,1}} \big[\bar{m}(X_i) - \beta^\top X_{i,\hat{S}}\big]^2},
    \$
    where the second term in the product is $o_P(1)$. For the first term, noting that by~\eqref{eq:risk_bd33} we have 
    \$
    \frac{1}{|\cI_{c,1}|}\sum_{i\in \cI_{c,1}} (\hat{w}_i - \bar\lambda^\top X_{i,\hat{S}})^2 =  o_P(1).
    \$ 
    Combining this result with equation~\eqref{eq:approx_to_bar} and applying Cauchy-Schwarz inequality twice, we know that the first term is $O_P(1)$. Putting them together, we obtain $|\textrm{term (c)}| = o_P(1)$. 
    Also, by similar arguments as before we know $|\textrm{term (d)}| = o_P(1)$. 
    These results altogether lead to 
    \$
    \hat\theta^{(k)}& = \frac{1}{|\cI_{c,1}|}\sum_{i\in \cI_{c,1}} \bar{w}(X_i)(Y_i-\bar{m}(X_i)) + \frac{1}{|\cI_{t,1}|}\sum_{j\in \cI_{t,1}}  \bar{m}(X_j) + o_P(1) = \bar\theta + o_P(1),
    \$
    where $\bar\theta:= \EE[(1-T)/(1-p)\cdot\bar{w}(X)(Y-\bar{m}(X)) + T/p\cdot \bar{m}(X)$. 
    Finally, noting that $\bar\theta=\theta_0$ if either $\bar{w}=w^*$ or $\bar{m}=m_0$ using arguments similar to the proof of Theorem~\ref{thm:convergence}, we complete the proof. 
\end{proof}

\subsection{Proof of Theorem~\ref{thm:inf_varsel}}
\label{app:proof_thm_inf_varsel}

\begin{proof}[Proof of Theorem~\ref{thm:inf_varsel}]
    We follow the analysis technique in the first part of the proof of Theorem~\ref{thm:convergence_varsel_alt}. Again, we fix $k=1$ and condition on $\cI_{c,2}\cup\cI_{t,2}$, so that $\hat{S}:=\hat{S}^{(k)}$ can be viewed as fixed. 
    We also define the orthogonal projection vectors $\beta:= \eta(\hat{S};m_0)$ and $\bar\lambda = \eta(\hat{S},w^*)$, which are both deterministic conditional on $\cI_{c,2}\cup\cI_{t,2}$. By condition (i), we know $\|w^*(X)-\bar\lambda^\top X_{\hat{S}}\|_{L_2(\PP_{\cdot\given T=0})}=o_P(1)$. 
    In addition, we write $\hat\mu_t = \hat\EE_t[X_{\hat{S}}]$ and $X_c:= X_{c,\hat{S}}$ for convenience. 
    
    Following the proof of Theorem~\ref{thm:convergence} and  Theorem~\ref{thm:convergence_varsel_alt}, we know that $\hat{w} = X_{c,\hat{S}}^\top \hat\lambda$, where $X_{c,\hat{S}}$ is the $\RR^{|\cI_{c,1}|\times|\hat{S}|}$ data matrix whose $i$-th row is $X_{i,\hat{S}}$, and $\lambda$ minimizes the unconstrained convex objective
    \$
    \hat{L}(\lambda) := \lambda^\top \hat\Sigma_{c,\hat{S}} \lambda - 2\lambda^\top \hat\EE_t[X_{\hat{S}}] + 2 \delta_n \|\lambda\|_1, \quad \lambda \in \RR^{|\hat{S}|},
    \$
    and $\hat\Sigma_{c,\hat{S}} = X_{c }^\top X_{c }/|\cI_{c,1}|$ is the sample covariance matrix. In addition, according to~\eqref{eq:risk_bd1},
    \@\label{eq:risk_bd1_inf}
    \frac{1}{|\cI_{c,1}|}\big\|X_c \hat\lambda - X_c \bar\lambda \big\|_2^2\leq 2(\bar\lambda-\hat\lambda)^\top (X_cX_c^\top \bar\lambda / |\cI_{c,1}| - \hat\mu_t) + 2\delta_n  \cdot (\|\bar\lambda\|_1-\|\hat\lambda\|_1).  
    \@
    where we recall that $X_c\in \RR^{|\cI_{c,1}|\times|\hat{S}|}$ as the data matrix whose $i$-th row is $X_{i,\hat{S}}$. 
    Define 
    \$
    \varepsilon_{t,j}:= X_{j,\hat{S}} - \mu_t,\quad \mu_t:=\EE[X_{j,\hat{S}}\given T_j=1], \quad 
    \varepsilon_{c,i}:=  {w}^*(X_i) X_{i,\hat{S}} - \mu_c, \quad  \mu_c:= \EE[w^*(X_i)X_{i,\hat{S}}\given T_i=0] 
    \$
    viewing $\hat{S}$ as fixed.  
    Thus, $\{\varepsilon_{t,j}\}_{j\in \cI_{t,1}}$ and $\{\varepsilon_{c,i}\}_{i\in \cI_{c,1}}$ are i.i.d.~mean zero random variables, respectively. We also define $\bar{\varepsilon}_{t}$ as the sample mean of $\{\varepsilon_{t,j}\}_{j\in \cI_{t,1}}$, and $\bar\varepsilon_{c}$ as the sample mean of $\{\varepsilon_{c,i}\}_{i\in \cI_{c,1}}$. In addition, since $w^*$ is the true density ratio, we have $\mu_t = \mu_c$. 
    Finally, we define 
    \$
    \varepsilon_{w,i} = {w}^*(X_i) - \bar\lambda^\top X_{i,\hat{S}},
    \$
    and $\varepsilon_w \in \RR^{|\cI_{c,1}|}$ is the vector whose $i$-th element is $\varepsilon_{w,i}$ for $i\in \cI_{c,1}$. 
    By definition, $\bar\lambda$ minimizes $\EE[(w^*(X_i)-\lambda^\top X_{i,\hat{S}})^2\given T_i=0]$; the first-order condition implies 
    \$
    \EE\big[ X_{i,\hat{S}} \varepsilon_{w,i}\biggiven T_i=0  \big] = 0,
    \$
    where again we remind that $\hat{S}$ is viewed as fixed. In addition, by definition we have 
    \$
    \bar\varepsilon_c + \mu_c =  X_c^\top \varepsilon_w/|\cI_{c,1}| + X_cX_c^\top \bar\lambda/|\cI_{c,1}| .
    \$
    Returning to~\eqref{eq:risk_bd1_inf}, we then have  
    \@\label{eq:risk_bd_inf}
     \frac{1}{|\cI_{c,1}|} \big\|X_c \hat\lambda - X_c \bar\lambda \big\|_2^2  
    &\leq 2(\bar\lambda-\hat\lambda)^\top \Big( - \bar\varepsilon_t - \mu_t + \bar{\varepsilon}_c +  \mu_c - X_c^\top \varepsilon_w/|\cI_{c,1}|  \Big) + 2\delta_n (\|\bar\lambda\|_1-\|\hat\lambda\|_1)  \notag\\
    &\leq 2\|\bar\lambda - \hat\lambda\|_1 \cdot \big( \|\bar\varepsilon_t\|_\infty +  \|\bar\varepsilon_c\|_{\infty} + \|X_c^\top \varepsilon_w/|\cI_{c,1}|\|_{\infty}\big) + 2\delta_n \cdot (\|\bar\lambda\|_1-\|\hat\lambda\|_1) ,
    \@
    where the first inequality uses the definitions, and the second inequality is due to $\mu_t = \mu_c$. 
    With similar arguments as in the proof of Theorem~\ref{thm:convergence_varsel_alt} and a union bound, with probability at least $1-2\varepsilon$, 
    \$
    \|\bar\varepsilon_t\|_{\infty} \leq \sqrt{\frac{2\log(2|\hat{S}|/\varepsilon)}{|\cI_{t,1}|}},\quad \|\bar\varepsilon_c\|_{\infty} \leq \|\bar{w}(\cdot)\|_{\infty} \sqrt{\frac{2\log(2|\hat{S}|/\varepsilon)}{|\cI_{c,1}|}}
    \$
    In addition, since $\EE[X_{i,\hat{S}}\epsilon_{w,i}\given T_i=0]=0$, we know $X_c^\top \varepsilon_w/|\cI_{c,1}|$ is the mean of i.i.d.~mean zero random vectors each entry of which is $M_2^2$-sub-Gaussian, where the constant $M_2>0$ is given in  condition (ii). Applying Chernoff's bound and a union bound over all entries in $\hat{S}$, with probability at least $1-\varepsilon$, 
    \$
    \|X_c^\top \varepsilon_w/|\cI_{c,1}|\|_{\infty} 
    \leq M_2 \sqrt{\frac{2\log(2|\hat{S}|/\varepsilon)}{|\cI_{c,1}|}}.
    \$
    Combining these arguments, we have 
    \$
    A_n := \|\bar\varepsilon_t\|_\infty +  \|\bar\varepsilon_c\|_{\infty} + \|X_c^\top \varepsilon_w/|\cI_{c,1}|\|_{\infty} = O_P(\sqrt{\log(|\hat{S}|)/n}).
    \$
    Thus, continuing with~\eqref{eq:risk_bd_inf} and applying the triangle inequality, we have 
    \@\label{eq:risk_bd3}
    & 
    \frac{1}{|\cI_{c,1}|} \big\|X_c (\hat\lambda -  \bar\lambda ) \big\|_2^2  \leq 2 A_n\cdot  \|\bar\lambda - \hat\lambda\|_1   + 2\delta_n \cdot (\|\bar\lambda\|_1-\|\hat\lambda\|_1) 
    \leq 2(A_n+\delta_n)\cdot \|\bar\lambda - \hat\lambda\|_1 .
    \@
    Applying condition (iii) to $\hat\lambda-\bar\lambda \in \RR^{|\hat{S}|}$ yields 
    \$
    \frac{1}{|\cI_{c,1}|} \big\|X_c (\hat\lambda -  \bar\lambda ) \big\|_2^2 \geq \xi(\hat{S})\cdot \|\bar\lambda - \hat\lambda\|_1 ^2,
    \$
    which further implies $\|\bar\lambda - \hat\lambda\|_1 \leq  2(A_n+\delta_n)/\xi(\hat{S})$, hence (since  $\delta_n = o_P(n^{-1/2})$)
    \$
    \frac{1}{|\cI_{c,1}|} \sum_{i\in \cI_{c,1}} (\hat{w}_i - X_{i,\hat{S}}^\top\bar\lambda)^2 = 
    \frac{1}{|\cI_{c,1}|} \big\|X_c (\hat\lambda -  \bar\lambda ) \big\|_2^2   \leq \frac{4(A_n+\delta_n)^2}{\xi(\hat{S})} = O_P\bigg(\frac{ \log|\hat{S}|}{n\cdot \xi(\hat{S})}\bigg). 
    \$ 

    We now proceed to analyze the estimator.  
    Define the imbalance gap 
    \$
    \hat\Delta:= \frac{1}{|\cI_{c,1}|} \sum_{i\in \cI_{c,1}} w_i X_{i,\hat{S}} - \frac{1}{|\cI_{t,1}|} \sum_{j\in \cI_t^{(1)}} X_{j,\hat{S}}  \in \RR^{|\hat{S}|},
    \$
    which obeys $\|\hat\Delta\|_\infty\leq \delta_n$ by the balancing constraint. Thus, by the balancing condition (recall  $\beta = \eta(\hat{S};m_0)$)
    \$
    \hat\theta^{(k)} := \frac{1}{|\cI_{c,1}|}\sum_{i\in \cI_{c,1}}\hat{w}_i Y_i= \frac{1}{|\cI_{c,1}|}\sum_{i\in \cI_{c,1}}\hat{w}_i (Y_i - \beta^\top X_{i,\hat{S}}) + \frac{1}{|\cI_{t,1}|}\sum_{j\in \cI_{t,1}} \beta^\top X_{j,\hat{S}} - \beta^\top \hat\Delta. 
    \$
    Reorganizing the above equation, we have 
    \$
    \hat\theta^{(k)}& = \frac{1}{|\cI_{c,1}|}\sum_{i\in \cI_{c,1}} {w}^*(X_i)(Y_i- m_0(X_i)) + \frac{1}{|\cI_{t,1}|}\sum_{j\in \cI_{t,1}}  m_0(X_j) \\ 
    &\quad + \underbrace{
        \frac{1}{|\cI_{c,1}|}\sum_{i\in \cI_{c,1}} (\hat{w}_i - w^*(X_i) ) (Y_i - m_0(X_i))
    }_{\textrm{term (a)}}  + \underbrace{
         \frac{1}{|\cI_{c,1}|}\sum_{i\in \cI_{c,1}}  (\hat{w}_i - w^*(X_i))(m_0(X_i) - \beta^\top X_{i,\hat{S}})
    }_{\textrm{term (b)}}\\ 
    &\quad + \underbrace{
        \frac{1}{|\cI_{c,1}|}\sum_{i\in \cI_{c,1}}   w^*(X_i) \cdot (m_0(X_i) - \beta^\top X_{i,\hat{S}})
        - \frac{1}{|\cI_{t,1}|}\sum_{j\in \cI_{t,1}}   (m_0(X_j) - \beta^\top X_{j,\hat{S}})
    }_{\textrm{term (c)}} - \beta^\top \hat\Delta.
    \$
    Conditions (i)-(ii) imply $|\beta^\top \Delta|\leq \|\beta\|_1 \|\Delta\|_\infty \leq \delta_n \|\beta\|_1=o_P(n^{-1/2})$. 
    We now bound terms (a)-(c). 

    Let $\cD_{X}$ be the union of the full data in $\cI_{c,2}\cup\cI_{t,2}$ and the covariates in $\cI_{c,1}\cup\cI_{t,1}$. 
    Since $\hat{w}_i$ only depends on the covariates in $\cI_{c,1}\cup \cI_{t,1}$, conditional on $\cD_X$, each term in the summation of (a) is independent and mean zero. Thus, term (a) is mean zero and  (since $\EE[(Y(0)-m_0(X))^2\given X]\leq M_1$) 
    \$
    \EE\big[ |\text{term (a)}|^2 \biggiven \cD_X \big] 
    &\leq \frac{M_1}{|\cI_{c,1}|^2} \sum_{i\in \cI_{c,1}} (\hat{w}_i - w^*(X_i) )^2  \\
    &\leq \frac{2M_1}{|\cI_{c,1}|^2} \sum_{i\in \cI_{c,1}} (\hat{w}_i - X_{i,\hat{S}}^\top \bar\lambda )^2 + \frac{2M_1}{|\cI_{c,1}|^2} \sum_{i\in \cI_{c,1}} (  X_{i,\hat{S}}^\top \bar\lambda - w^*(X_i))^2
    \$
    where the second line uses Cauchy-Schwarz inequality. By the tower property, 
    \$
\EE\big[ |\text{term (a)}|^2   \big] = O_P\bigg(\frac{ \log|\hat{S}|}{n^2\cdot \xi(\hat{S})} + \frac{\epsilon(\hat{S};w^*)^2}{n}\bigg) = o_P(1/n)
    \$
    since $\epsilon(\hat{S};w^*)=o_P(1)$ by condition (i) and $\log|\hat{S}|/\xi(\hat{S}) = o_P(n)$ by condition (iii). This implies $|\text{term (a)}|=o_P(n^{-1/2})$ by Markov's inequality. 
    
    Second, by H\"older's inequality, 
    \$
    \big|\text{term (b)}\big| \leq \sqrt{\frac{1}{|\cI_{c,1}|}\sum_{i\in \cI_{c,1}}  (\hat{w}_i - w^*(X_i))^2} \cdot \sqrt{\frac{1}{|\cI_{c,1}|}\sum_{i\in \cI_{c,1}}  (m_0(X_i)- \beta^\top X_{i,\hat{S}})^2},
    \$  
    where using similar ideas as how we bound term (a),  
    \$
    \sqrt{\frac{1}{|\cI_{c,1}|}\sum_{i\in \cI_{c,1}}  (\hat{w}_i - w^*(X_i))^2} = O_P\bigg( \sqrt{\frac{\log|\hat{S}|}{n\cdot \xi(\hat{S})}} + \epsilon(\hat{S};w^*)\bigg).
    \$
    On the other hand, by Markov's inequality we know 
    \$
    \sqrt{\frac{1}{|\cI_{c,1}|}\sum_{i\in \cI_{c,1}}  (m_0(X_i)- \beta^\top X_{i,\hat{S}})^2} = O_P\big(\epsilon(\hat{S};m_0)\big).
    \$
    Since $\sqrt{\log|\hat{S}|/(n\xi(\hat{S}))} \cdot \epsilon(\hat{S};m_0)= o_P(n^{-1/2})$ by condition (iii) and $\epsilon(\hat{S};w^*)\cdot\epsilon(\hat{S};m_0)= o_P(n^{-1/2})$ by condition (i), we have $|\text{term (b)}|=o_P(n^{-1/2})$. 

    Let $\mu:= \EE[m_0(X_j) - \beta^\top X_{j,\hat{S}}\given T_j=1] = \EE[w^*(X_i) \cdot (m_0(X_i) - \beta^\top X_{i,\hat{S}})\given T_i=0]$, viewing $\hat{S}$ as fixed. Then
    \$
\textrm{term (c)} = \underbrace{\frac{1}{|\cI_{c,1}|}\sum_{i\in \cI_{c,1}}  \big[ w^*(X_i) \cdot (m_0(X_i) - \beta^\top X_{i,\hat{S}}) - \mu \big]}_{\textrm{term (c,1)}}
        - \underbrace{\frac{1}{|\cI_{t,1}|}\sum_{j\in \cI_{t,1}}   \big[ m_0(X_j) - \beta^\top X_{j,\hat{S}} - \mu\big]}_{\textrm{term (c,2)}}.
    \$
    Each term in the summation in term (c,1) is i.i.d.~and mean zero, hence by Chebyshev's inequality, 
    \$
    \PP\Big(|\textrm{term (c,1)}| >\varepsilon \Biggiven \cI_{c,2}\cup \cI_{t,2} \Big) \leq \frac{\Var(w^*(X_i) \cdot (m_0(X_i) - \beta^\top X_{i,\hat{S}}))}{\varepsilon^2\cdot |\cI_{c,1}|} \leq \frac{M_2^2\cdot \epsilon(\hat{S};m_0)^2}{\varepsilon^2\cdot |\cI_{c,1}|} 
    \$
    for the  constant $M_2>0$ given in condition (ii). Thus by Lemma~\ref{lem:cond_quant} we know   $|\textrm{term (c,1)}| = O_P(\epsilon(\hat{S};m_0)/n^{1/2}) = o_P(n^{-1/2})$ by condition (i). Using exactly the same idea we have $|\textrm{term (c,2)}|  = o_P(n^{-1/2})$ as well. 

    Putting the three bounds together, we then have 
    \$
    \hat\theta^{(k)}& = \frac{1}{|\cI_{c,1}|}\sum_{i\in \cI_{c,1}} {w}^*(X_i)(Y_i- m_0(X_i)) + \frac{1}{|\cI_{t,1}|}\sum_{j\in \cI_{t,1}}  m_0(X_j)  + o_P(n^{-1/2}).
    \$
    Similar results apply to $k=2$ in the same way, hence we have 
    \$
    \hat\theta_0 = \frac{1}{n}\sum_{i=1}^n  \bigg(\frac{1-T_i}{1-p}{w}^*(X_i)(Y_i- m_0(X_i)) + \frac{T_i}{p} m_0(X_i) \bigg)  + o_P(n^{-1/2})
    \$ 
    which leads to the desired result by noting that the efficient influence function of $\theta_0$ is $\frac{1-T_i}{1-p}{w}^*(X_i)(Y_i- m_0(X_i)) + \frac{T_i}{p} m_0(X_i)$ (see, e.g., \cite{hahn1998role}).
\end{proof}

\subsection{Proof of Theorem~\ref{thm:inf_varsel_finite}}
\label{app:subsec_inf_varsel_finite}

\begin{proof}[Proof of Theorem~\ref{thm:inf_varsel_finite}]
    Following exactly the same notations and ideas in the proof of Theorem~\ref{thm:inf_varsel}, 
    \@\label{eq:risk_bd_inf_finite}
     \frac{1}{|\cI_{c,1}|} \big\|X_c \hat\lambda - X_c \bar\lambda \big\|_2^2   
    &\leq 2\|\bar\lambda - \hat\lambda\|_1 \cdot \big( \|\bar\varepsilon_t\|_\infty +  \|\bar\varepsilon_c\|_{\infty} + \|X_c^\top \varepsilon_w/|\cI_{c,1}|\|_{\infty}\big) + 2\delta_n \cdot (\|\bar\lambda\|_1-\|\hat\lambda\|_1) ,
    \@
    where we define 
    \$
    \varepsilon_{t,j}:= X_{j,\hat{S}} - \mu_t,\quad \mu_t:=\EE[X_{j,\hat{S}}\given T_j=1], \quad 
    \varepsilon_{c,i}:=  {w}^*(X_i) X_{i,\hat{S}} - \mu_c, \quad  \mu_c:= \EE[w^*(X_i)X_{i,\hat{S}}\given T_i=0] 
    \$
    viewing $\hat{S}$ as fixed, and 
    $
    \varepsilon_{w,i} = {w}^*(X_i) - \bar\lambda^\top X_{i,\hat{S}},
    $
    where $\varepsilon_w \in \RR^{|\cI_{c,1}|}$ is the vector whose $i$-th element is $\varepsilon_{w,i}$ for $i\in \cI_{c,1}$.  Accordingly, with probability at least $1-3\varepsilon$, 
    \$
    \|\bar\varepsilon_t\|_{\infty} \leq \sqrt{\frac{2\log(2|\hat{S}|/\varepsilon)}{|\cI_{t,1}|}},\quad \|\bar\varepsilon_c\|_{\infty} \leq M_2 \sqrt{\frac{2\log(2|\hat{S}|/\varepsilon)}{|\cI_{c,1}|}},\quad 
    \|X_c^\top \varepsilon_w/|\cI_{c,1}|\|_{\infty} 
    \leq M_2 \sqrt{\frac{2\log(2|\hat{S}|/\varepsilon)}{|\cI_{c,1}|}}.
    \$
    Combining these arguments, we have 
    \$
    A_n := \|\bar\varepsilon_t\|_\infty +  \|\bar\varepsilon_c\|_{\infty} + \|X_c^\top \varepsilon_w/|\cI_{c,1}|\|_{\infty} 
    \leq \sqrt{2\log(2|\hat{S}|/\varepsilon)} \cdot \bigg(\frac{1}{\sqrt{|\cI_{t,1}|}} + \frac{2M_2}{\sqrt{|\cI_{c,1}|}} \bigg)
    \$
    Thus, continuing with~\eqref{eq:risk_bd_inf} and applying the triangle inequality, we have 
    \@\label{eq:risk_bd3}
    & 
    \frac{1}{|\cI_{c,1}|} \big\|X_c (\hat\lambda -  \bar\lambda ) \big\|_2^2  \leq 2 A_n\cdot  \|\bar\lambda - \hat\lambda\|_1   + 2\delta_n \cdot (\|\bar\lambda\|_1-\|\hat\lambda\|_1) 
    \leq 2(A_n+\delta_n)\cdot \|\bar\lambda - \hat\lambda\|_1 .
    \@
    Applying condition (iii) to $\hat\lambda-\bar\lambda \in \RR^{|\hat{S}|}$ yields 
    \$
    \frac{1}{|\cI_{c,1}|} \big\|X_c (\hat\lambda -  \bar\lambda ) \big\|_2^2 \geq \xi(\hat{S})\cdot \|\bar\lambda - \hat\lambda\|_1 ^2,
    \$
    which further implies $\|\bar\lambda - \hat\lambda\|_1 \leq  2(A_n+\delta_n)/\xi(\hat{S})$, hence  with probability at least $1-3\varepsilon$, we have 
    \@\label{eq:weight_bd_finite}
    \frac{1}{|\cI_{c,1}|} \sum_{i\in \cI_{c,1}} (\hat{w}_i - X_{i,\hat{S}}^\top\bar\lambda)^2 = 
    \frac{1}{|\cI_{c,1}|} \big\|X_c (\hat\lambda -  \bar\lambda ) \big\|_2^2   \leq \frac{4(A_n+\delta_n)^2}{\xi(\hat{S})}.
    \@

    We now proceed to analyze the estimator.  
    Recall the imbalance gap 
    $
    \hat\Delta:= \frac{1}{|\cI_{c,1}|} \sum_{i\in \cI_{c,1}} w_i X_{i,\hat{S}} - \frac{1}{|\cI_{t,1}|} \sum_{j\in \cI_t^{(1)}} X_{j,\hat{S}}  \in \RR^{|\hat{S}|},
    $
    which obeys $\|\hat\Delta\|_\infty\leq \delta_n$ by the balancing constraint. Thus, by the balancing condition (recall  $\beta = \eta(\hat{S};m_0)$), and re-organizing the terms as in the proof of Theorem~\ref{thm:inf_varsel}, 
    \$
    \hat\theta^{(k)}& = \frac{1}{|\cI_{c,1}|}\sum_{i\in \cI_{c,1}} {w}^*(X_i)(Y_i- m_0(X_i)) + \frac{1}{|\cI_{t,1}|}\sum_{j\in \cI_{t,1}}  m_0(X_j) \\ 
    &\quad + \underbrace{
        \frac{1}{|\cI_{c,1}|}\sum_{i\in \cI_{c,1}} (\hat{w}_i - w^*(X_i) ) (Y_i - m_0(X_i))
    }_{\textrm{term (a)}}  + \underbrace{
         \frac{1}{|\cI_{c,1}|}\sum_{i\in \cI_{c,1}}  (\hat{w}_i - w^*(X_i))(m_0(X_i) - \beta^\top X_{i,\hat{S}})
    }_{\textrm{term (b)}}\\ 
    &\quad + \underbrace{
        \frac{1}{|\cI_{c,1}|}\sum_{i\in \cI_{c,1}}   w^*(X_i) \cdot (m_0(X_i) - \beta^\top X_{i,\hat{S}})
        - \frac{1}{|\cI_{t,1}|}\sum_{j\in \cI_{t,1}}   (m_0(X_j) - \beta^\top X_{j,\hat{S}})
    }_{\textrm{term (c)}} - \beta^\top \hat\Delta.
    \$
    Conditions (i)-(ii) imply 
    \$|\beta^\top \Delta|\leq \|\beta\|_1 \|\Delta\|_\infty \leq \delta_n \cdot M_3.
    \$ 
    
    We now proceed to bound terms (a)-(c). 
    Let $\cD_{X}$ be the union of the full data in $\cI_{c,2}\cup\cI_{t,2}$ and the covariates in $\cI_{c,1}\cup\cI_{t,1}$. 
    Since $\hat{w}_i$ only depends on the covariates in $\cI_{c,1}\cup \cI_{t,1}$, conditional on $\cD_X$, each term in the summation of (a) is independent, mean zero, and $(\hat{w}_i-w^*(X_i))^2 M_1^2$-sub-Gaussian. Thus, by the concentration inequality for sub-Gaussian random variables, for any $\varepsilon>0$, it holds that 
    \$
    \PP\Bigg(  |\text{term (a)}|\geq  \sqrt{2M_1^2 \log(2/\varepsilon)\frac{\sum_{i\in \cI_{c,1}}  (\hat{w}_i - w^*(X_i))^2}{|\cI_{c,1}|^2}} \Bigggiven \cD_X\Bigg) \leq\varepsilon. 
    \$ 
    The conditioning on $\cD_X$ can be removed by the tower property.
    The triangle inequality implies 
    \$
\sqrt{\frac{1}{|\cI_{c,1}|^2} \sum_{i\in \cI_{c,1}} (\hat{w}_i - w^*(X_i) )^2 } \leq \sqrt{\frac{1 }{|\cI_{c,1}|^2 } \sum_{i\in \cI_{c,1}} (\hat{w}_i - X_{i,\hat{S}}^\top \bar\lambda )^2} + \sqrt{\frac{1}{|\cI_{c,1}|^2 } \sum_{i\in \cI_{c,1}} (  X_{i,\hat{S}}^\top \bar\lambda - w^*(X_i))^2}
    \$
    Here the second term is an average of i.i.d.~random variable bounded by $M_2^2$ with mean $\epsilon(\hat{S};w^*)^2$. The Hoeffding's inequality implies that with probability at least $1-\varepsilon$, 
    \$
\frac{1 }{|\cI_{c,1}| } \sum_{i\in \cI_{c,1}} (  X_{i,\hat{S}}^\top \bar\lambda - w^*(X_i))^2 \leq   \epsilon(\hat{S};w^*)^2 +  M_2^2 \sqrt{\frac{2\log (2/\varepsilon)}{|\cI_{c,1}|}}.
    \$
    Combining this with~\eqref{eq:weight_bd_finite} and a union bound, we know that with probability at least $1-5\varepsilon$, 
    \@\label{eq:bd_a_fnt}
    |\text{term (a)}| &\leq \sqrt{2M_1^2\log(2/\varepsilon)}\cdot \bigg(\sqrt{ \frac{4(A_n+\delta_n)^2}{|\cI_{c,1}|\cdot \xi(\hat{S})}}+ \sqrt{ \frac{\epsilon(\hat{S};w^*)^2}{|\cI_{c,1}|} + M_2^2 \frac{\sqrt{2\log (2/\varepsilon)}}{|\cI_{c,1}|^{3/2}}} ~\bigg) \notag \\
    &\leq \sqrt{2M_1^2\log(2/\varepsilon)}  \cdot \bigg( \frac{2(A_n+\delta_n) }{\sqrt{|\cI_{c,1}|\cdot \xi(\hat{S})}}+  \frac{\epsilon(\hat{S};w^*) }{\sqrt{|\cI_{c,1}|}} + M_2  \frac{ (2\log (2/\varepsilon))^{1/4}}{|\cI_{c,1}|^{3/4}} ~\bigg)
    \@ 
    Second, by H\"older's inequality, 
    \@\label{eq:bd_b_finite}
    \big|\text{term (b)}\big| \leq \sqrt{\frac{1}{|\cI_{c,1}|}\sum_{i\in \cI_{c,1}}  (\hat{w}_i - w^*(X_i))^2} \cdot \sqrt{\frac{1}{|\cI_{c,1}|}\sum_{i\in \cI_{c,1}}  (m_0(X_i)- \beta^\top X_{i,\hat{S}})^2},
    \@ 
    where using similar ideas as how we bound term (a),  on the same event with probability at least $1-5\varepsilon$, 
    \$
    \sqrt{\frac{1}{|\cI_{c,1}|}\sum_{i\in \cI_{c,1}}  (\hat{w}_i - w^*(X_i))^2} \leq \sqrt{ \frac{4(A_n+\delta_n)^2}{|\cI_{c,1}|\cdot \xi(\hat{S})}}+ \sqrt{ \frac{\epsilon(\hat{S};w^*)^2}{|\cI_{c,1}|} + M_2^2 \frac{\sqrt{2\log (2/\varepsilon)}}{|\cI_{c,1}|^{3/2}}}.
    \$
    On the other hand, by Markov's inequality the second term being taken squared of in~\eqref{eq:bd_b_finite} is the average of i.i.d.~random variables with mean $\epsilon(\hat{S};m_0)^2$ and upper bounded by $M_1^2$. Thus it holds with probability at least $1-\varepsilon$ that 
    \$
    \frac{1}{|\cI_{c,1}|}\sum_{i\in \cI_{c,1}}  (m_0(X_i)- \beta^\top X_{i,\hat{S}})^2 \leq \epsilon(\hat{S};m_0)^2 + M_1^2 \sqrt{\frac{2\log(2/\varepsilon)}{|\cI_{c,1|}}}.
    \$
    With a union bound, it holds with probability at least $1-6\varepsilon$ that~\eqref{eq:bd_a_fnt} holds and 
    \@\label{eq:bd_b_fnt}
    \big|\text{term (b)}\big| & \leq \sqrt{\epsilon(\hat{S};m_0)^2 + M_1^2 \sqrt{\frac{2\log(2/\varepsilon)}{|\cI_{c,1|}}}} \cdot \bigg(\sqrt{ \frac{4(A_n+\delta_n)^2}{|\cI_{c,1}|\cdot \xi(\hat{S})}}+ \sqrt{ \frac{\epsilon(\hat{S};w^*)^2}{|\cI_{c,1}|} + M_2^2 \frac{\sqrt{2\log (2/\varepsilon)}}{|\cI_{c,1}|^{3/2}}}~\bigg) \notag \\
    &\leq \bigg(\epsilon(\hat{S};m_0)  + M_1  \frac{(2\log(2/\varepsilon))^{1/4}}{|\cI_{c,1|^{1/4}}}\bigg) \cdot \bigg( \frac{2(A_n+\delta_n) }{\sqrt{|\cI_{c,1}|\cdot \xi(\hat{S})}}+  \frac{\epsilon(\hat{S};w^*) }{\sqrt{|\cI_{c,1}|}} + M_2  \frac{ (2\log (2/\varepsilon))^{1/4}}{|\cI_{c,1}|^{3/4}} ~\bigg).
    \@

    Let $\mu:= \EE[m_0(X_j) - \beta^\top X_{j,\hat{S}}\given T_j=1] = \EE[w^*(X_i) \cdot (m_0(X_i) - \beta^\top X_{i,\hat{S}})\given T_i=0]$, viewing $\hat{S}$ as fixed. Then
    \$
\textrm{term (c)} = \underbrace{\frac{1}{|\cI_{c,1}|}\sum_{i\in \cI_{c,1}}  \big[ w^*(X_i) \cdot (m_0(X_i) - \beta^\top X_{i,\hat{S}}) - \mu \big]}_{\textrm{term (c,1)}}
        - \underbrace{\frac{1}{|\cI_{t,1}|}\sum_{j\in \cI_{t,1}}   \big[ m_0(X_j) - \beta^\top X_{j,\hat{S}} - \mu\big]}_{\textrm{term (c,2)}}.
    \$
    Each term in the summation in term (c,1) and term (c,2) is i.i.d.~and mean zero. By Bernstein's inequality, since $| w^*(X_i) \cdot (m_0(X_i) - \beta^\top X_{i,\hat{S}}) - \mu |\leq M_1M_2$, for any $t>0$ it holds that 
    \$
    \PP\big( |\text{term (c,1)}| \geq t  \big) \leq 2\exp\bigg( - \frac{|\cI_{c,1}|\cdot t^2}{2\Var(w^*(X)(m_0(X)-\beta^\top X_{\hat{S}})\given T=0)+2M_1M_2t/3}\bigg).
    \$
    A proper choice of $t>0$ yields that with probability at least $1-\varepsilon$, 
    \$
    |\text{term (c,1)}| &\leq \sqrt{\frac{2\Var(w^*(X)(m_0(X)-\beta^\top X_{\hat{S}})\given T=0)\log(2/\varepsilon)}{|\cI_{c,1}|}} + \frac{M_1M_2\log(2/\varepsilon)}{|\cI_{c,1}|} \\ 
    &\leq \sqrt{\frac{2M_2^2 \epsilon(\hat{S};m_0)^2 \log(2/\varepsilon)}{|\cI_{c,1}|}} + \frac{M_1M_2\log(2/\varepsilon)}{|\cI_{c,1}|} .
    \$
    Similarly, with probability at least $1-\varepsilon$, 
    \$
    |\text{term (c,2)}| &\leq \sqrt{\frac{2\Var( m_0(X)-\beta^\top X_{\hat{S}} \given T=1)\log(2/\varepsilon)}{|\cI_{t,1}|}} + \frac{ M_2\log(2/\varepsilon)}{|\cI_{t,1}|} \\ 
    &\leq \sqrt{\frac{2M_2  \epsilon(\hat{S};m_0)^2 \log(2/\varepsilon)}{|\cI_{t,1}|}} + \frac{M_1M_2\log(2/\varepsilon)}{|\cI_{t,1}|} .
    \$
    since we have $\Var( m_0(X)-\beta^\top X_{\hat{S}} \given T=1) \leq \EE[(m_0(X)-\beta^\top X_{\hat{S}})^2\given T=1] =\EE[w^*(X)(m_0(X)-\beta^\top X_{\hat{S}})^2\given T=0]\leq M_2 \epsilon(\hat{S};m_0)^2$. 
    Putting~\eqref{eq:bd_a_fnt},~\eqref{eq:bd_b_fnt} and the above two bounds together, with probability at least $1-8\varepsilon$,
    \$
   & \bigg| \hat\theta^{(k)} -  \frac{1}{|\cI_{c,1}|}\sum_{i\in \cI_{c,1}} {w}^*(X_i)(Y_i- m_0(X_i)) + \frac{1}{|\cI_{t,1}|}\sum_{j\in \cI_{t,1}}  m_0(X_j)  \bigg| \\ 
    &\leq  \bigg(\sqrt{2M_1^2\log(2/\varepsilon)} +\epsilon(\hat{S};m_0)  + M_1  \frac{(2\log(2/\varepsilon))^{1/4}}{|\cI_{c,1|^{1/4}}}\bigg) \cdot \bigg( \frac{2(A_n+\delta_n) }{\sqrt{|\cI_{c,1}|\cdot \xi(\hat{S})}}+  \frac{\epsilon(\hat{S};w^*) }{\sqrt{|\cI_{c,1}|}} + M_2  \frac{ (2\log (2/\varepsilon))^{1/4}}{|\cI_{c,1}|^{3/4}} ~\bigg) \\ 
    &\quad + \frac{\sqrt{2\log(2/\varepsilon)} \cdot M_2 \cdot   \epsilon(\hat{S};m_0) }{|\cI_{c,1}|^{1/2}} + \frac{M_1M_2\log(2/\varepsilon)}{|\cI_{c,1}|}  +  \frac{\sqrt{2\log(2/\varepsilon) } \cdot M_2 \cdot   \epsilon(\hat{S};m_0)   }{|\cI_{t,1}|^{1/2}} + \frac{M_2\log(2/\varepsilon)}{|\cI_{t,1}|}.
    \$
    Similar results apply to $k=2$ in the same way, hence with probability at least $1-32\varepsilon$, 
    \$
   & \bigg| \hat\theta_0  -  \frac{1}{n_c}\sum_{i\in \cI_{c}} {w}^*(X_i)(Y_i- m_0(X_i)) + \frac{1}{n_t}\sum_{j\in \cI_{t}}  m_0(X_j)  \bigg| \\ 
   & \leq \bigg(\sqrt{2M_1^2\log(2/\varepsilon)} + \epsilon_{m}  + M_1  \frac{(4\log(2/\varepsilon))^{1/4}}{|n_c|^{1/4}}\bigg) \cdot \bigg( \frac{\epsilon_{w^*} \sqrt{2} }{\sqrt{n_c}} + M_2  \frac{ (16\log (2/\varepsilon))^{1/4}}{n_c^{3/4}} ~\bigg) \\ 
    &\quad + \frac{\sqrt{4\log(2/\varepsilon)} \cdot M_2 \cdot   \epsilon_{m} }{n_c^{1/2}} + \frac{2M_1M_2\log(2/\varepsilon)}{n_c}  +  \frac{\sqrt{4\log(2/\varepsilon) } \cdot M_2 \cdot   \epsilon_{m}   }{ n_t ^{1/2}} + \frac{2M_2\log(2/\varepsilon)}{n_t} \\ 
    &\quad + \bigg(\sqrt{2M_1^2\log(2/\varepsilon)} + \epsilon_{m}  + M_1  \frac{(4\log(2/\varepsilon))^{1/4}}{|n_c|^{1/4}}\bigg) \cdot\bigg(   \frac{ 2\sqrt{2} \delta_n  }{\sqrt{n_c\cdot \bar\xi_S}} + 4\sqrt{\frac{2\log (2\bar{S}/\varepsilon)}{n_c\cdot \bar\xi_S}} \cdot \Big( \frac{1}{\sqrt{n_t}}+\frac{2M_2}{\sqrt{n_c}} \Big) \bigg) 
   \$
   where we define the maximum selection size $\bar{S} = \max\{|\hat{S}^{(1)}|,|\hat{S}^{(2)}|\}$, and $\bar\xi_S = \min\{\xi(\hat{S}^{(1)}),\xi(\hat{S}^{(2)})\}$, as well as $\epsilon_m = \max_{k=1,2}\{\epsilon(\hat{S}^{(k)};m_0)\}$ and $\epsilon_w = \max_{k=1,2}\{\epsilon(\hat{S}^{(k)};w^*)\}$.  Re-organizing terms, we obtain 
   \$
   & \bigg| \hat\theta_0  -  \frac{1}{n_c}\sum_{i\in \cI_{c}} {w}^*(X_i)(Y_i- m_0(X_i)) + \frac{1}{n_t}\sum_{j\in \cI_{t}}  m_0(X_j)  \bigg| \\ 
   & \leq \tilde\cO \Bigg(  \frac{\sqrt{2} \cdot \epsilon_m\cdot \epsilon_w}{\sqrt{n_c}} + \epsilon_m \cdot 6M_2\sqrt{\log(2/\varepsilon)}\cdot \bigg(\frac{1}{\sqrt{n_c}}+\frac{1}{\sqrt{n_t}}\bigg) + \frac{2\sqrt{2} \cdot \delta_n \cdot (\epsilon_m + M_1  \sqrt{2\log(2/\varepsilon)})}{\sqrt{n_c\cdot \bar{\xi}_S}}      \Bigg),
   \$
   where $\tilde\cO(\cdot)$ hides terms that are of a higher order of $(n_c,n_t)$ than those inside. 
 This concludes the proof of Theorem~\ref{thm:inf_varsel_finite}.
\end{proof}

\subsection{Proof of Proposition~\ref{prop:sel_lr}}
\label{app:proof_sel_lr}

\begin{proof}[Proof of Proposition~\ref{prop:sel_lr}]
    Define the population limits of $\hat\beta$ and $\hat\lambda$ by 
    \$
    \bar\beta := (\EE_c[XX^\top])^{-1} \EE_c[XY],\quad  \bar\lambda := (\EE_c[XX^\top])^{-1} \EE_t[X].
    \$
    Then, by definition we have $Y=X^\top\bar\beta + \epsilon_0$ where $\EE[X\epsilon_0]=0$. Thus,
    \$
    \hat\beta = \hat\Sigma_c^{-1} \hat\EE_c[XX^\top\bar\beta + X\epsilon_0] = \bar\beta + \hat\Sigma_c^{-1} \hat\EE_c[X\epsilon_0] = \bar\beta + O_P(1/\sqrt{n_c}).
    \$
    Re-organizing the terms, we obtain the asymptotic linear expansion 
    \$
    \sqrt{n_c}(\hat\beta - \bar\beta) = \frac{1}{\sqrt{n_c}}\sum_{i\in \cI_c} \Sigma_c^{-1} X_i   (Y_i-X_i^\top \bar\beta) \rightsquigarrow \cN(0, \Sigma_\beta),
    \$
    where the asymptotic covariance matrix $\Sigma_\beta = \Cov(\Sigma_c^{-1} X_i   (Y_i-X_i^\top \bar\beta)\given T_i=0)$ can be consistently estimated by 
    \$
    \hat\Sigma_{\beta} = \hat\Sigma_c^{-1} \frac{1}{n_c}\sum_{i\in \cI_c} [X_iX_i^\top (Y_i-X_i^\top \hat\beta)^2] \hat\Sigma_c^{-1}.
    \$
    That is, we have $\sqrt{n_c}(\hat\beta_j - \bar\beta_j)/\hat\sigma_j \rightsquigarrow \cN(0,1)$ for all $j\in [p]$. On the other hand,
    \$
    \hat\lambda - \bar\lambda &= \hat\Sigma_c^{-1}\hat\EE_t[X] - \Sigma_c^{-1} \EE_t[X] \\ 
    &= \hat\Sigma_c^{-1}(\hat\EE_t[X] - \EE_t[X]) + (\hat\Sigma_c^{-1} - \Sigma_c^{-1}) \EE_t[X] \\ 
    &= \Sigma_c^{-1} (\hat\EE_t[X] - \EE_t[X]) - \Sigma_c^{-1}(\hat\Sigma_c - \Sigma_c) \Sigma_c^{-1} \EE_t[X] + O_P(1/n_c).
    \$
    Re-organizing terms, we have 
    \$
    \sqrt{n_c}(\hat\lambda-\bar\lambda) = \frac{\sqrt{n_c}}{n_t} \sum_{i\in \cI_t} \Sigma_c^{-1}( X_i - \EE_t[X]) - \frac{1}{\sqrt{n_c}} \sum_{i\in \cI_c} \Sigma_c^{-1} (X_iX_i^\top - \Sigma_c) \Sigma_c^{-1} \EE_t[X] + O_P(1/\sqrt{n_c}) \rightsquigarrow \cN(0, \Sigma_\lambda),
    \$  
    where the asymptotic covariance matrix $\Sigma_\lambda = n_c/n_t \Cov(\Sigma_c^{-1}X_i\given T_i=1) + \Cov(\Sigma_c^{-1}X_iX_i^\top \Sigma_c^{-1}\EE_t[X]\given T_i=0)$ can be consistently estimated by 
    \$
    \hat\Sigma_\lambda &=   \hat\Sigma_c^{-1} \frac{1}{n_c}\sum_{i\in\cI_c} \big\{ X_iX_i^\top \hat\Sigma_c^{-1} \hat\EE_t[X]\hat\EE_t[X]^\top \hat\Sigma_c^{-1} X_iX_i^\top \big\}\hat\Sigma_c^{-1} \\ 
    & \quad - \hat\Sigma_c^{-1} (\hat\EE_t[X])(\hat\EE_t[X])^\top \hat\Sigma_c^{-1}  + \frac{n_c}{n_t} \hat\Sigma_c^{-1} \big\{\hat\Sigma_t - (\hat\EE_t[X])(\hat\EE_t[X])^\top \big\}\hat\Sigma_c^{-1} .
    \$
    That is, we have $\sqrt{n_c}(\hat\lambda_j - \bar\lambda_j) /\hat\tau_j \rightsquigarrow \cN(0,1)$ for all $j\in [p]$. Put differently, we have 
    \$
    \hat\beta_j / \hat\sigma_j = \bar\beta_j / \hat\sigma_j + Z_{j1}/\sqrt{n_c} + o_P(1/\sqrt{n_c}),\quad 
    \hat\lambda_j / \hat\tau_j = \bar\lambda_j / \hat\tau_j + Z_{j2}/\sqrt{n_c} + o_P(1/\sqrt{n_c}),
    \$
    where $Z_{j1}\sim \cN(0,1)$ and $Z_{j2}\sim \cN(0,1)$ are standard Gaussian random variables. 

    Due to the fact that $\|\epsilon_m\|_{L_2}=o_P(n^{-1/4})$ and $\|\epsilon_w\|_{L_2} = o_P(n^{-1/4})$, we see that  
    $
    \bar\beta_j = \beta^*_j + o_P(n^{-1/4}) 
    $ 
    for any $j\in S_m$ and $\bar\lambda_j = \lambda_j^* + o_P(n^{-1/4})$ for any $j \in S_w$. This implies $\hat\beta_j / \hat\sigma_j =\beta_j^* /\hat\sigma_j + Z_{j1}/\sqrt{n_c} + o_P(n^{-1/4})$ for any $j\in S_m$ and $\hat\lambda_j / \hat\tau_j =\lambda_j^* /\hat\lambda_j + Z_{j2}/\sqrt{n_c} + o_P(n^{-1/4})$ for any $j\in S_w$. Thus, as long as $\underline{\sigma} \leq c_0 / \sigma_j + o_P(n^{-1/4})$ and $\underline{\tau} \leq c_0/\tau_j + o_P(n^{-1/4})$, we have 
    \$
    \PP\big(  S_m\subseteq \hat{S}_m \text{ and } S_w \subseteq \hat{S}_w \big) \to 1.
    \$
    Noting that $\epsilon(S_m;m_0) \leq \|\epsilon_m\|_{L_2} =o_P(n^{-1/4})$ and $\epsilon(S_w;w^*) \leq \|\epsilon_w\|_{L_2} = o_P(n^{-1/4})$, this further implies 
    \$
    &\PP\big( n^{-1/4} \epsilon(\hat{S}_m \cup \hat{S}_w ;m_0) \leq \delta     \big) \\
    &\geq \PP\big( n^{-1/4} \epsilon(\hat{S}_m  ;m_0) \leq \delta     \big) \\ 
    &\geq \PP\big( n^{-1/4} \epsilon( {S}_m  ;m_0) \leq \delta  ,~ S_m\subseteq \hat{S}_m   \big) \to 1
    \$
    for any fixed $\delta>0$, hence $\epsilon(\hat{S};m_0)=o_P(n^{-1/4})$. With similar arguments, we also have $\epsilon(\hat{S};w^*) = o_P(n^{-1/4})$, thereby completing the proof. 
\end{proof}

\subsection{Proof of Proposition~\ref{prop:sel_lasso}}
\label{app:proof_sel_lasso}

\begin{proof}[Proof of Proposition~\ref{prop:sel_lasso}]
    Our analysis largely follows the seminal results on the prediction risk of the Lasso under the compatibility condition (e.g.,~\cite{tibshirani2011regression,van2000asymptotic,buhlmann2011statistics}), with minor modifications to handle the approximately linear models and implicit outcomes in the weight estimation. 

    We first consider $\hat{S}_m$ for the outcome model. Recall that $\hat\beta$ is the minimizer of the convex objective
    \$
    \hat{L}(\beta) = \beta^\top \hat\Sigma_c \beta - 2 \beta^\top \hat\EE_c[XY] + \nu_1\|\beta\|_1.
    \$
    This implies $\hat{L}(\hat\beta)\leq \hat{L}(\beta^*)$, which, after reorganizing terms, means 
    \$
    \frac{1}{n_c} \|X_c^\top \hat\beta - X_c^\top \beta^*\|_2^2 
    & \leq 2(\beta^*-\hat\beta) ( \hat\Sigma_c\beta^* - \hat\EE_c[XY]) + \nu_1 \big( \|\beta^*\|_1 - \|\hat\beta\|_1 \big) \\ 
    &\leq 2\|\beta^*-\hat\beta\|_1 \cdot \| \hat\Sigma_c\beta^* - \hat\EE_c[XY] \|_\infty + \nu_1 \big( \|\beta^*\|_1 - \|\hat\beta\|_1 \big),
    \$
    where $X_c\in \RR^{|\cI_c|\times p}$ is the data matrix of $\cI_c$. Noting that 
    \$
\hat\EE_c[XY] = \hat\EE_c[XX^\top \beta^*] + \hat\EE_c[X\epsilon_m] + \hat\EE_c[X(Y-m_0(X))],
    \$
    we have 
    \$
\| \hat\Sigma_c\beta^* - \hat\EE_c[XY] \|_\infty = \big\|\hat\EE_c[X\epsilon_m] + \hat\EE_c[X(Y-m_0(X))]\big\|_\infty \leq \big\|\hat\EE_c[X\epsilon_m] \big\|_\infty + \big\| \hat\EE_c[X(Y-m_0(X))]\big\|_\infty.
    \$
    Since each entry of $X(Y-m_0(X))$ is the average of $n_c=|\cI_c|$ many i.i.d.~mean zero random variables, each $X_j$ is sub-Gaussian, and $|Y-m_0(X)|\leq M_2$, by Chernoff's inequality and a union bound, we know that with probability at least $1-\varepsilon$, it holds that 
    \$
\big\| \hat\EE_c[X(Y-m_0(X))]\big\|_\infty 
\leq M_2 \sqrt{\frac{2\log(2p/\varepsilon)}{n_c}}.
    \$
    In addition, since $|\epsilon_m(X)|\leq M_2$ and each entry $X_j$ is sub-Gaussian, by Chernoff's inequality and a union bound, with probability at least $1-\varepsilon$, it holds that 
    \$
    \big\|\hat\EE_c[X\epsilon_m] \big\|_\infty &\leq \big\| \EE_c[X\epsilon_m] \big\|_\infty + \big\| \hat\EE_c[X\epsilon_m] - \EE_c[X\epsilon_m] \big\|_\infty \\
    &\leq M_2 \sqrt{\frac{2\log(2p/\varepsilon)}{n_c}} + \max_j \EE[X_j \epsilon_m(X)] = M_2 \sqrt{\frac{2\log(2p/\varepsilon)}{n_c}} + \|X\epsilon_m\|_\infty.
    \$
    Combining the above two arguments, we know that with probability at least $1-2\varepsilon$, it holds that 
    \@
    \frac{1}{n_c} \|X_c^\top \hat\beta - X_c^\top \beta^*\|_2^2  
    &\leq 2\|\beta^*-\hat\beta\|_1 \cdot \bigg(\|X\epsilon_m\|_\infty+ 2M_2 \sqrt{\frac{2\log(2p/\varepsilon)}{n_c}}  \bigg) + \nu_1 \big( \|\beta^*\|_1 - \|\hat\beta\|_1 \big)\label{eq:L2L1_bd1} \\ 
    &\leq \|\beta^*-\hat\beta\|_1 \cdot \bigg( 2\|X\epsilon_m\|_\infty+ 4M_2 \sqrt{\frac{2\log(2p/\varepsilon)}{n_c}} + \nu_1 \bigg). \label{eq:L2L1_bd2}
    \@   
    Using~\eqref{eq:L2L1_bd1} and the non-negativity of the LHS, and denoting the constant $A_n=\|X\epsilon_m\|_\infty+ 2M_2 \sqrt{\frac{2\log(2p/\varepsilon)}{n_c}}$, we know that 
    \$
    0&\leq 2A_n \|\beta^*_{S_m}-\hat\beta_{S_m}\|_1 + 2A_n \|\beta^*_{S_m^c}-\hat\beta_{S_m^c}\|_1 + \nu_1 (\|\beta_{S_m^c}^*\|_1 - \|\hat\beta_{S_m}\|_1 - \|\hat\beta_{S_m^c}\|_1)\\
    &\leq 2A_n \|\beta^*_{S_m}-\hat\beta_{S_m}\|_1 + 2A_n \|\beta^*_{S_m^c}-\hat\beta_{S_m^c}\|_1 + \nu_1 (\|\beta_{S_m^c}^*\|_1 - \|\hat\beta_{S_m}\|_1 - \|\hat\beta_{S_m^c} - \beta_{S_m^c}^*\|_1)\\
    &\leq 2A_n \|\beta^*_{S_m}-\hat\beta_{S_m}\|_1 + 2A_n \|\beta^*_{S_m^c}-\hat\beta_{S_m^c}\|_1 + \nu_1 (\|\beta_{S_m^c}^*  -  \hat\beta_{S_m}\|_1 - \|\hat\beta_{S_m^c} - \beta_{S_m^c}^*\|_1) \\ 
    &= (\nu_1+2A_n) \|\beta^*_{S_m}-\hat\beta_{S_m}\|_1 - (\nu_1-2A_n) \|\beta^*_{S_m^c}-\hat\beta_{S_m^c}\|_1.
    \$
    Since $\nu_1 \geq 4A_n$, the above inequality leads to 
    \$
    \|(\hat\beta-\beta^*)_{S_m^c}\|_1 \leq 3     \|(\hat\beta-\beta^*)_{S_m}\|_1.
    \$
    Invoking the compatibility condition for $\beta = \hat\beta-\beta^*$, together with~\ref{eq:L2L1_bd2}, we obtain  
    \$
    \|(\hat\beta-\beta^*)_{S_m}\|_1^2  \leq 3/2\cdot \nu_1 \|(\hat\beta-\beta^*)_{S_m}\|_1 \cdot |S_m|/\phi_0^2,
    \$
    which further implies 
    \$
    \|\hat\beta-\beta^*\|_1 \leq 4 \|(\hat\beta-\beta^*)_{S_m}\|_1 \leq \frac{6|S_m| \nu_1}{\phi_0^2}.
    \$
    Thus, the approximation error of $\hat{S}=\hat{S}_m\cup \hat{S}_w$ satisfies 
    \$
    \epsilon(\hat{S}; m_0)^2 \leq \epsilon(\hat{S}_m; m_0)^2 \leq \EE\big[ (\hat\beta^\top X - m_0(X))^2 \big],
    \$
    where  the expectation is over an independent new sample $X$, viewing $\hat\beta$ as given. The triangle inequality further yields 
    \$
    \EE\big[ (\hat\beta^\top X - m_0(X))^2 \big]
   & \leq 2 \EE\big[ ( X^\top \beta^* - m_0(X))^2 \big] 
    + 2 \EE\big[ \{(\hat\beta -\beta^*)^\top X \}^2 \big] \\ 
    & \leq 2\|\epsilon_m\|_{L_2}^2 + 2M_1^2 \|\hat\beta-\beta^*\|_1^2 \leq 2\|\epsilon_m\|_{L_2}^2 + 12M_1^2 |S_m|^2\nu_1^2/\phi_0^4. 
    \$
    We then have 
    \$
\epsilon(\hat{S}; m_0) \leq   \sqrt{2}\|\epsilon_m\|_{L_2} + 2\sqrt{3}M_1|S_m|\nu_1/\phi_0^2 = \Omega\Big(\max\big\{ \|\epsilon_m\|_{L_2}, ~|S_m|\nu_1\big\}\Big).
    \$

    We then proceed to analyze the estimator $\hat\lambda$. With similar techniques as before, we can show that 
    \$
    \frac{1}{n_c} \|X_c^\top \hat\lambda - X_c^\top \lambda^*\|_2^2 
    & \leq 2(\lambda^*-\hat\lambda) ( \hat\Sigma_c\lambda^* - \hat\EE_t[X]) + \nu_2 \big( \|\lambda^*\|_1 - \|\hat\lambda\|_1 \big) \\ 
    &\leq 2\|\lambda^*-\hat\lambda\|_1 \cdot \| \hat\Sigma_c\lambda^* - \hat\EE_t[X] \|_\infty + \nu_2 \big( \|\lambda^*\|_1 - \|\hat\lambda\|_1 \big).
    \$
    Here we note that, by the fact that $\EE_c[Xw^*(X)]=\EE_t[X]$ since $w^*(x)$ is the true density ratio, as well as the triangle inequality, 
    \$
    & \| \hat\Sigma_c\lambda^* - \hat\EE_t[X] \|_\infty  
    = \big\| - \hat\EE_c[X \epsilon_w] + \hat\EE_c[ Xw^*(X)] - \hat\EE_t[X] \big\|_\infty \\
    &\leq \big\| \hat\EE_c[X \epsilon_w]  - \EE_c[X\epsilon_w] \big\|_\infty + \big\|\EE_c[X\epsilon_w]\big\|_\infty 
    + \big\| \hat\EE_c[Xw^*(X)] - \EE_c[Xw^*(X)]\big\|_1 + \big\| \hat\EE_t[X] - \EE_t[X]\big\|_1.
    \$
    A union bound together with repeated application of Chernoff's inequality then yields, with probability at least $1-3\varepsilon$, that 
    \$
    \| \hat\Sigma_c\lambda^* - \hat\EE_t[X] \|_\infty
    \leq \big\|\EE_c[X\epsilon_w]\big\|_\infty  + 3M_2 \sqrt{\frac{2\log(2p/\varepsilon)}{n_c}}.
    \$
    Now denote $B_n = \big\|\EE_c[X\epsilon_w]\big\|_\infty  + 3M_2 \sqrt{\frac{2\log(2p/\varepsilon)}{n_c}}$. Again with similar ideas as our analysis of the outcome regression model, we can show that 
    \$
    \|\hat\lambda - \lambda^*\|_1 \leq \frac{6|S_w|\nu_2}{\phi_0^2}.
    \$
    The approximation error of $\hat{S}=\hat{S}_m\cup \hat{S}_w\supseteq \{j\colon \hat\lambda_j \neq 0\}$ thus satisfies 
    \$
    \epsilon(\hat{S};w^*)^2 \leq \epsilon(\hat{S}_w; w^*)^2 \leq \EE\big[ (\hat\lambda^\top X - w^*(X))^2\big],
    \$
    where the expectation is over an independent new sample $X$, viewing $\hat\lambda$ as given. This further yields 
    \$
    \epsilon(\hat{S};w^*)^2 \leq 2 \EE\big[ ( X^\top \lambda^* - w^*(X))^2\big] + 2 \EE\big[ \{ X^\top (\hat\lambda-\lambda^*)\}^2\big] = 2 \|\epsilon_w\|_{L_2}^2 + 2M_1^2 \|\hat\lambda-\lambda^*\|_1^2 .
    \$
    We thus have 
    \$
    \epsilon(\hat{S};w^*) \leq \Omega\Big(\max\big\{ \|\epsilon_w\|_{L_2},  |S_w|\cdot \nu_2 \big\}\Big)
    \$
    Therefore, we conclude the proof of Proposition~\ref{prop:sel_lasso}. 
\end{proof}

\section{Auxiliary lemmas}

Lemma~\ref{lem:convex} is a standard result in convex analysis, see, e.g.,~\cite[Lemma I.6]{jin2024tailored}. 

\begin{lemma} \label{lem:convex}
    Suppose $f\colon \Theta\to \RR$ is convex in $\Theta \subset \RR^p$, and there exists some constant $\nu,c>0$ such that $\nabla^2 f(\theta) \succeq \nu\cdot \mathbf{I}_{p\times p}$ for all $\theta\in \Theta$ such that $\|\theta-\theta_0\|\leq c$. Then,  $f(\theta)\geq f(\theta_0) + \nabla f(\theta_0)^\top (\theta-\theta_0) + \nu/2 \cdot \min\{ \|\theta-\theta_0\|^2, c\|\theta-\theta_0\|\}$ holds for any $\theta\in \Theta$.
\end{lemma}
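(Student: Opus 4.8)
The plan is to prove the bound by reducing the general case to the strongly-convex regime on the ball $B(\theta_0,c) := \{\theta : \|\theta-\theta_0\| \leq c\}$ and then using plain convexity to extrapolate beyond it. I split into two cases according to whether $\theta$ lies inside or outside this ball, parametrizing the segment $\theta_t := \theta_0 + t(\theta-\theta_0)$ for $t\in[0,1]$ throughout.

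First, suppose $\|\theta-\theta_0\| \leq c$. Then the entire segment $\{\theta_t\}_{t\in[0,1]}$ stays in $B(\theta_0,c)$, so the Hessian lower bound $\nabla^2 f(\theta_t) \succeq \nu\cdot\mathbf{I}_{p\times p}$ holds along it. Applying the second-order Taylor expansion with integral remainder (equivalently, integrating $\tfrac{d}{dt}\,\nabla f(\theta_t)^\top(\theta-\theta_0)$ twice) yields the strong-convexity inequality $f(\theta) \geq f(\theta_0) + \nabla f(\theta_0)^\top(\theta-\theta_0) + \tfrac{\nu}{2}\|\theta-\theta_0\|^2$. Since $\|\theta-\theta_0\|\leq c$ gives $\|\theta-\theta_0\|^2 \leq c\|\theta-\theta_0\|$, the quadratic term equals the claimed minimum, finishing this case.

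Second, suppose $\|\theta-\theta_0\| > c$. I introduce the boundary point $\theta' := \theta_0 + s(\theta-\theta_0)$ with $s := c/\|\theta-\theta_0\| \in (0,1)$, which sits at distance exactly $c$ from $\theta_0$ along the segment. The first case applies to $\theta'$ (using $\theta'-\theta_0 = s(\theta-\theta_0)$ and $\|\theta'-\theta_0\|=c$) and yields $f(\theta')-f(\theta_0) \geq s\,\nabla f(\theta_0)^\top(\theta-\theta_0) + \tfrac{\nu}{2}c^2$. Separately, convexity of $f$ at the representation $\theta' = (1-s)\theta_0 + s\theta$ gives $f(\theta') \leq (1-s)f(\theta_0) + s f(\theta)$, hence $f(\theta)-f(\theta_0) \geq s^{-1}\big(f(\theta')-f(\theta_0)\big)$. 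Chaining these two and using $c^2/s = c\|\theta-\theta_0\|$ produces $f(\theta) \geq f(\theta_0) + \nabla f(\theta_0)^\top(\theta-\theta_0) + \tfrac{\nu}{2}c\|\theta-\theta_0\|$; since $\|\theta-\theta_0\|>c$ forces $c\|\theta-\theta_0\| < \|\theta-\theta_0\|^2$, this again matches the claimed minimum.

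The argument is essentially routine convex analysis, so I do not anticipate a serious obstacle; the only point requiring care is the interpolation step in the second case, where one must select the correct boundary point $\theta'$ and combine the strong-convexity bound at $\theta'$ with the global convexity inequality in the right direction—dividing by $s$ preserves the inequality precisely because $s>0$. A minor regularity remark is that the hypothesis $\nabla^2 f \succeq \nu\cdot\mathbf{I}_{p\times p}$ on $B(\theta_0,c)$ already presupposes that $f$ is twice differentiable there, which legitimizes the Taylor expansion in the first case, while on the remainder of $\Theta$ only convexity (not differentiability) is invoked.
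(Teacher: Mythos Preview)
Your proof is correct. The paper does not supply its own proof of this lemma; it simply cites it as a standard convex-analysis result (referring to \cite[Lemma I.6]{jin2024tailored}), and your two-case argument---strong convexity on the ball $B(\theta_0,c)$ followed by convex interpolation through the boundary point for $\theta$ outside the ball---is exactly the standard route.
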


Lemma~\ref{lem:concen_covmat} characterizes the concentration of sample covariance matrices (see, e.g.,~\cite{tropp2015introduction}).
\begin{lemma}
\label{lem:concen_covmat}
    Let $X_i \in \RR^d$ be i.i.d., mean zero, and $\sigma^2$-sub-Gaussian random vectors with population variance matrix $\Sigma$, and let $\hat\Sigma=\frac{1}{n}\sum_{i=1}^n X_iX_i^\top$. Then, there exists a universal constant $C>0$ such that with probability at least $1-\varepsilon$, 
    \$
    \frac{\|\hat\Sigma-\Sigma\|_{\textnormal{op}}}{\sigma^2} \leq C\max \Bigg\{ \sqrt{\frac{d+\log(2/\varepsilon)}{n}},  \frac{d+\log(2/\varepsilon)}{n}   \Bigg\}.
    \$
\end{lemma}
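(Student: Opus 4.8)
The plan is to bound $\|\hat\Sigma - \Sigma\|_{\mathrm{op}}$ by a standard covering argument on the unit sphere, combined with a scalar Bernstein inequality in each fixed direction. First I would fix a $1/4$-net $\mathcal{N}$ of the Euclidean unit sphere $S^{d-1} \subseteq \RR^d$; a volumetric bound gives one with $|\mathcal{N}| \le 9^d$, and for every symmetric matrix $A$ one has $\|A\|_{\mathrm{op}} \le 2 \max_{u \in \mathcal{N}} |u^\top A u|$. Taking $A = \hat\Sigma - \Sigma$ reduces the task to controlling the centered quadratic form
\[
u^\top (\hat\Sigma - \Sigma) u = \frac{1}{n} \sum_{i=1}^n \left( \langle X_i, u \rangle^2 - \EE[\langle X_i, u \rangle^2] \right)
\]
uniformly over the finite set $u \in \mathcal{N}$.

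For a fixed unit vector $u$, the $\sigma^2$-sub-Gaussian assumption on the vectors $X_i$ means $\langle X_i, u \rangle$ is a mean-zero scalar that is sub-Gaussian with parameter $\sigma^2$. Hence the centered square $Z_{i,u} := \langle X_i, u \rangle^2 - \EE[\langle X_i, u \rangle^2]$ is a mean-zero sub-exponential random variable whose sub-exponential norm is of order $\sigma^2$, since the square of a sub-Gaussian variable is sub-exponential. Applying Bernstein's inequality for sums of i.i.d.\ sub-exponential variables gives, for every $t > 0$ and a universal constant $c > 0$,
\[
\PP\left( \left| \frac{1}{n} \sum_{i=1}^n Z_{i,u} \right| \ge t \right) \le 2 \exp\left( -c\, n \min\left\{ \frac{t^2}{\sigma^4}, \frac{t}{\sigma^2} \right\} \right).
\]

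Finally I would union-bound over the at most $9^d$ points of $\mathcal{N}$, so that $\max_{u \in \mathcal{N}} |u^\top (\hat\Sigma - \Sigma) u| \le t$ fails with probability at most $2 \cdot 9^d \exp(-c n \min\{t^2/\sigma^4, t/\sigma^2\})$; this is at most $\varepsilon$ once $\min\{t^2/\sigma^4, t/\sigma^2\} \ge C'(d + \log(2/\varepsilon))/n$ for a suitable absolute constant $C'$ (absorbing $\log 9$ and the leading factor of $2$), which forces $t \le C'' \sigma^2 \max\{\sqrt{(d + \log(2/\varepsilon))/n},\, (d + \log(2/\varepsilon))/n\}$. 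Combining with $\|\hat\Sigma - \Sigma\|_{\mathrm{op}} \le 2 \max_{u \in \mathcal{N}} |u^\top (\hat\Sigma - \Sigma) u|$, which only inflates the constant, then gives the stated bound.

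I expect the main subtlety to be the bookkeeping between the two regimes of Bernstein's inequality --- the sub-Gaussian tail at small $t$ is responsible for the $\sqrt{(d+\log(2/\varepsilon))/n}$ term, while the heavier sub-exponential tail at large $t$ produces the $(d+\log(2/\varepsilon))/n$ term --- together with checking that the net cardinality $9^d$ is absorbed cleanly into the exponent. The only other point requiring care is verifying that the sub-exponential norm of $Z_{i,u}$ is $O(\sigma^2)$ uniformly in the unit vector $u$, which follows directly from the definition of a $\sigma^2$-sub-Gaussian random vector. Alternatively, one could apply a matrix Bernstein bound for unbounded summands directly to $\sum_i (X_i X_i^\top - \Sigma)$, but that still requires a truncation step and would not improve the constants, so the covering argument is the cleaner route here.
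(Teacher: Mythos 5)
Your proof is correct. The paper does not actually prove Lemma~\ref{lem:concen_covmat}; it is stated as a standard fact with a pointer to the matrix-concentration literature (Tropp, 2015), so there is no in-paper argument to compare against. Your covering-net plus scalar-Bernstein derivation is the classical textbook route to exactly this bound (it is, e.g., Theorem~4.6.1 in Vershynin's \emph{High-Dimensional Probability}): the $1/4$-net with $|\mathcal{N}|\le 9^d$ and the factor-of-$2$ comparison $\|A\|_{\mathrm{op}}\le 2\max_{u\in\mathcal{N}}|u^\top A u|$ for symmetric $A$ are standard, the reduction of the $\sigma^2$-sub-Gaussian vector assumption to a uniform $O(\sigma^2)$ sub-exponential norm for $\langle X_i,u\rangle^2-\EE[\langle X_i,u\rangle^2]$ is exactly right, and the two regimes of Bernstein's inequality produce precisely the $\max\{\sqrt{\cdot},\cdot\}$ structure in the statement. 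Your closing remark is also accurate: a matrix Bernstein inequality applied directly to $\sum_i(X_iX_i^\top-\Sigma)$ would need a truncation or a version for unbounded summands and typically picks up an extra logarithmic factor in $d$, so the covering argument is the cleaner way to obtain the dimension dependence $d+\log(2/\varepsilon)$ as stated. No gaps.
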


Lemma~\ref{lem:cond_event} formalizes arguments that go from conditional converge-in-probability events to marginal converge-in-probability events. See~\cite[Lamme I.5]{jin2024tailored} for a proof. 

\begin{lemma}\label{lem:cond_event}
Let $\cF_n$ be a sequence of $\sigma$-algebra, and $A_n$ be a sequence of non-negative random variables. If $\EE[A_n\given \cF_n]=o_P(1)$, then $A_n = o_P(1)$. 
\end{lemma}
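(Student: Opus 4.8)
The plan is to reduce the claim to the bounded convergence theorem by way of a conditional Markov inequality. Fix $\varepsilon>0$. First I would invoke the conditional Markov (Chebyshev) inequality for nonnegative random variables, which gives, almost surely,
\[
\PP(A_n > \varepsilon \mid \cF_n) \;\le\; \frac{\EE[A_n \mid \cF_n]}{\varepsilon}\wedge 1 \;=:\; Y_n .
\]
Taking expectations on both sides and using the tower property, $\PP(A_n>\varepsilon) = \EE\big[\PP(A_n>\varepsilon\mid\cF_n)\big] \le \EE[Y_n]$. Here there are no measurability subtleties: $\PP(A_n>\varepsilon\mid\cF_n)$ and $\EE[A_n\mid\cF_n]$ are, by definition, $\cF_n$-measurable versions of the corresponding conditional quantities, and $Y_n$ is a measurable function of $\EE[A_n\mid\cF_n]$.

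Next I would show $\EE[Y_n]\to 0$. By hypothesis $\EE[A_n\mid\cF_n]=o_P(1)$, so $\EE[A_n\mid\cF_n]/\varepsilon \to 0$ in probability, and hence $Y_n\to 0$ in probability by the continuous mapping theorem applied to the bounded continuous map $x\mapsto (x/\varepsilon)\wedge 1$. Since $0\le Y_n\le 1$, for any $\delta>0$ we have $\EE[Y_n] \le \delta + \PP(Y_n>\delta)$, and letting $n\to\infty$ and then $\delta\downarrow 0$ yields $\EE[Y_n]\to 0$. Therefore $\limsup_n \PP(A_n>\varepsilon)=0$. As $\varepsilon>0$ was arbitrary and $A_n\ge 0$, this is precisely the statement $A_n=o_P(1)$, completing the proof.

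The only genuinely delicate step — and the one I would flag as the main obstacle — is the passage from ``$Y_n\to 0$ in probability'' to ``$\EE[Y_n]\to 0$''; convergence in probability alone does not imply convergence in mean, so the truncation at $1$ (equivalently, the uniform integrability furnished by boundedness) is what makes the argument work. Everything else is routine: the conditional Markov inequality is standard, and the tower property handles the unconditioning.
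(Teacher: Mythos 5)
Your proof is correct. The paper does not reproduce an argument for this lemma---it defers to \cite[Lemma I.5]{jin2024tailored}---but your route (conditional Markov inequality, truncation at $1$, tower property, then bounded convergence via the estimate $\EE[Y_n]\le \delta+\PP(Y_n>\delta)$) is the standard proof of this fact and is exactly what one would expect the cited lemma to contain; every step, including the one you flag as delicate, is handled properly.
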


A direct consequence of Lemma~\ref{lem:cond_event} is the next lemma allowing to turn conditionally $o_P(1)$ quantities to marginally $o_P(1)$ quantities, which will be used from time to time in analyzing cross-balancing estimators. 

\begin{lemma}\label{lem:cond_quant}
    Let $\cF_n$ be a sequence of $\sigma$-algebra, and let $R_n$ be a sequence of random variables such that 
    $
    \PP(R_n >\epsilon \given \cF_n) \to 0
    $
    as $n\to \infty$, then we also have $\PP(R_n>\epsilon)\to 0$ as $n\to \infty$. 
\end{lemma}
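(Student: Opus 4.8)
The plan is to obtain this as an immediate corollary of Lemma~\ref{lem:cond_event}, by applying that lemma to the indicator of the event $\{R_n > \epsilon\}$. Throughout, the hypothesis ``$\PP(R_n > \epsilon \mid \cF_n) \to 0$'' is read in the natural sense of convergence in probability, that is, $\PP(R_n > \epsilon \mid \cF_n) = o_P(1)$, consistent with the usage in the surrounding proofs; here $\epsilon>0$ is a fixed constant.

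First I would set $A_n := \indicator{R_n > \epsilon}$, which is a non-negative (indeed $\{0,1\}$-valued) random variable, so that the hypothesis of Lemma~\ref{lem:cond_event} is available as soon as its conditional expectation vanishes. By the definition of conditional probability, $\EE[A_n \mid \cF_n] = \PP(R_n > \epsilon \mid \cF_n)$, which is $o_P(1)$ by assumption. Hence Lemma~\ref{lem:cond_event} yields $A_n = o_P(1)$. It then remains to translate this back into the desired deterministic limit: since $A_n$ takes values in $\{0,1\}$, for any $\delta \in (0,1)$ the event $\{A_n > \delta\}$ coincides with $\{A_n = 1\} = \{R_n > \epsilon\}$, so that $\PP(R_n > \epsilon) = \PP(A_n > \delta) \to 0$, which is exactly the claim.

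As an alternative one could bypass Lemma~\ref{lem:cond_event} and argue directly via the tower property: $\PP(R_n > \epsilon) = \EE[B_n]$ with $B_n := \PP(R_n > \epsilon \mid \cF_n) \in [0,1]$, and $B_n = o_P(1)$ together with the uniform bound $B_n \le 1$ gives $\EE[B_n]\to 0$ by the bounded convergence theorem for convergence in probability (bounded plus convergence in probability implies $L^1$ convergence). Either route is short.

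I do not expect any real obstacle here. The only points requiring a word of care are the interpretation of the mode of convergence in the hypothesis as $o_P(1)$, and the elementary observation that ``$o_P(1)$'' for a $\{0,1\}$-valued random variable is equivalent to the convergence of the corresponding (unconditional) probability to zero; the argument is otherwise a one-line reduction.
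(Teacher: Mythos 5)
Your proof is correct and takes exactly the paper's route: the paper's own argument is precisely to set $A_n = \indicator{R_n > \epsilon}$ and invoke Lemma~\ref{lem:cond_event}. Your additional care in translating ``$A_n = o_P(1)$ for a $\{0,1\}$-valued variable'' back into $\PP(R_n>\epsilon)\to 0$, and the alternative tower-property argument, are both fine but not needed beyond what the paper does.
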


The proof of Lemma~\ref{lem:cond_quant} is simply to take $A_n = \ind\{R_n>\epsilon\}$ and invoke Lemma~\ref{lem:cond_event}. As a result, if we take $\cF_n$ to be the information in one fold $\cI_{c,1}\cup\cI_{t,1}$ and show some random variable $R_n$ to be $o_P(r_n)$ for some deterministic sequence $r_n$, where the probability is conditional on $\cI_{c,1}\cup\cI_{t,1}$, then $R_n = o_P(r_n)$ marginally by Lemma~\ref{lem:cond_quant}. 
We will use this idea many times in our proofs, but we do not explicitly mention it all the time. 



\end{document}